\theoremstyle{plain}
  \newtheorem{theorem}{Theorem}[section]
  \newtheorem{proposition}[theorem]{Proposition}
  \newtheorem{lemma}[theorem]{Lemma}
\theoremstyle{definition}
  \newtheorem{definition}{Definition}[section]
  \newtheorem{assumption}[]{Assumption}
  \newtheorem{fact}[]{Fact}
\theoremstyle{remark}
  \newtheorem{remark}[theorem]{Remark}
\numberwithin{equation}{section}
\let\al=\alpha \let\be=\beta \let\de=\delta 
  \let\ga=\gamma 
\let\ka=\kappa  \let\om=\omega 
\let\si=\sigma
 \let\Ga=\Gamma \let\La=\Lambda \let\Om=\Omega
\newcommand{\caA}{{\mathcal A}}
\newcommand{\caB}{{\mathcal B}}
\newcommand{\caC}{{\mathcal C}}
\newcommand{\caF}{{\mathcal F}}
\newcommand{\caH}{{\mathcal H}}
\newcommand{\caL}{{\mathcal L}}
\newcommand{\caM}{{\mathcal M}}
\newcommand{\caO}{{\mathcal O}}
\newcommand{\caP}{{\mathcal P}}
\newcommand{\caQ}{{\mathcal Q}}
\newcommand{\caR}{{\mathcal R}}
\newcommand{\caS}{{\mathcal S}}
\newcommand{\caT}{{\mathcal T}}
\newcommand{\caZ}{{\mathcal Z}}
\newcommand{\bbC}{{\mathbb C}}
\newcommand{\bbP}{{\mathbb P}}
\newcommand{\bbR}{{\mathbb R}}
\newcommand{\bbZ}{{\mathbb Z}}
\newcommand{\opunit}{\text{1}\kern-0.22em\text{l}}
\newcommand{\funit}{\mathbf{1}}
\newcommand{\frc}{{\mathfrak c}}
\newcommand{\frp}{{\mathfrak p}}
\newcommand{\frs}{{\mathfrak s}}
\newcommand{\frx}{{\mathfrak x}}
\newcommand{\frC}{{\mathfrak C}}
\newcommand{\frH}{{\mathfrak H}}
\newcommand{\frK}{{\mathfrak K}}
\newcommand{\frP}{{\mathfrak P}}
\newcommand{\frR}{{\mathfrak R}}
\newcommand{\frS}{{\mathfrak S}}
\newcommand{\frX}{{\mathfrak X}}
\newcommand{\frZ}{{\mathfrak Z}}
\newcommand{\bst}{{\boldsymbol t}}
\newcommand{\bsB}{{\boldsymbol B}}
\newcommand{\bsS}{{\boldsymbol S}}
\DeclareMathAlphabet{\mathpzc}{OT1}{pzc}{m}{it}
\newcommand{\ie}{i.e.\;}
\newcommand{\rel}{\,|\,}
\DeclareMathOperator{\Tr}{Tr}
\DeclareMathOperator{\tr}{tr}
\DeclareMathOperator{\spc}{sp}
\DeclareMathOperator{\skl}{skl}
\DeclareMathOperator{\rank}{rank}
\newcommand{%
      \input{Drawings/.pdf_tex}}[1]{%
      \input{Drawings/#1.pdf_tex}}
\newcommand{%
        \def\svgwidth{}%
        \input{Drawings/.pdf_tex}}[2]{%
        \def\svgwidth{#1}%
        \input{Drawings/#2.pdf_tex}}
\newcounter{smallroman}
 \newcounter{smallarabics}
\newenvironment{arabicenumerate}
{\begin{list}{{\normalfont\textrm{\arabic{smallarabics})}}}
  {\usecounter{smallarabics}\setlength{\itemindent}{0cm}
  \setlength{\leftmargin}{5ex}\setlength{\labelwidth}{4ex}
  \setlength{\topsep}{0.75\parsep}\setlength{\partopsep}{0ex}
   \setlength{\itemsep}{0ex}}}
{\end{list}}
\newcommand{\ben}{\begin{arabicenumerate}}
\newcommand{\een}{\end{arabicenumerate}}
\newcommand{\beq}{\begin{equation}}
\newcommand{\eeq}{\end{equation}}
\newcommand{\baq}{\begin{eqnarray}}
\newcommand{\eaq}{\end{eqnarray}}
\renewcommand{\sp}{{\mathrm{sp}}}
\renewcommand{\d}{\mathrm{d}}
\newcommand{\e}{\mathrm{e}}
\newcommand{\str}{ |}
\begin{document}

\title{Locality and nonlocality of classical restrictions of quantum spin systems\\ with applications to quantum large deviations and entanglement}
%\runningtitle{Quantum Large Deviations}

\author{W.~De Roeck${}^{1}$}
\address{${}^{1}$Instituut voor Theoretische Fysica, KU Leuven, Belgium.}
\email{wojciech.deroeck@fys.kuleuven.be}

\author{C.~Maes${}^{1}$}
\email{christian.maes@fys.kuleuven.be}

\author{K.~Neto\v{c}n\'y${}^{2}$}
\address{${}^{2}$ Institute of Physics AS CR, Prague, Czech Republic.}
\email{netocny@fzu.cz}

\author{M.~Sch\"utz${}^{1}$}
\email{marius.schutz@fys.kuleuven.be}

\begin{abstract}
We study the projection on classical spins starting from quantum equilibria. We show Gibbsianness or quasi-locality of
the resulting classical spin system  for a class of gapped quantum systems at low
temperatures including quantum ground states.   A consequence of Gibbsianness is the validity of a large deviation principle in the
quantum system which is known and here recovered in regimes of high
temperature or for thermal states in one dimension.   On the other hand we give an example of a quantum ground state with strong nonlocality in the classical restriction, giving rise to what we call measurement induced entanglement, and still satisfying a large deviation principle.
 \end{abstract}
\maketitle
\markboth{De Roeck, Maes, Neto\v{c}n\'y, Sch\"utz}{(Non-)Locality of classical restrictions} 
 
%\tableofcontents
 
% ---------------------------------------------------------------
\section{Introduction}

The present paper investigates aspects of locality and nonlocality for states $\om$ of quantum spin systems, defined as thermal states or ground states of local Hamiltonians.  For that purpose we select a single site observable $X$ and consider its copy $X_i$ at each site $i$ of the $d$-dimensional lattice $\bbZ^d$.  The spectrum of $X$ is a finite set of eigenvalues $x \in \sp(X)$ and the  state $\omega$ naturally induces a probability distribution $\mu^X$  on $\sp(X)^{\bbZ^d}$.  Informally, for all finite sets $\Lambda\subset \bbZ^d$, the probability to find the values $x_i$, $i\in\La$, equals
\[
\mu^X[x_i, i\in \Lambda] = \omega\bigl(\prod_{i\in \Lambda}Q_i(x_i)\bigr), \quad x_i\in \sp(X)
\] 
where $Q_i(x_i)$ is a copy of the projection $Q(x)$ appearing in the spectral decomposition $X=\sum_{x} x\,Q(x)$.
Our main question is whether $\mu^X$ allows for a quasi-local description, for example in terms of a well-behaved potential such as for classical Gibbs distributions, more details are given below.  Obviously, the answer not only depends on the quantum state $\omega$ (and on all the parameters in its Hamiltonian) but possibly also on the chosen observable $X$.  Our results cover three cases:\\

\begin{enumerate}
\item  For high temperature quantum spin systems, the distribution $\mu^X$ is always Gibbsian. That is stated in Theorem \ref{thm: gibbs}.
\item  For low temperature and in the case of a unique ground state, we give in Theorem \ref{thm: gibbs1} sufficient conditions for the existence of an (exponentially decaying) potential making $\mu^X$ a Gibbs distribution, but
\item We also give counter examples (where the conditions are not satisfied), showing absence of quasi-locality  in $\mu^X$ for some $X$ and ground state $\omega$.
\end{enumerate}

These statements and precise results are introduced and discussed in the following three sections.  For the sake of concreteness we already illustrate them in the next section in the case of the quantum Ising model in a transverse field.\\

The motivation for the above questions is diverse and we come back to this point in the discussion of Section \ref{disc}. There are in fact two major applications.  The first is to the theory of large deviations for quantum spin systems.  The results of Theorems \ref{thm: gibbs} and \ref{thm: gibbs1}  imply the existence of a large deviation principle for sums of single site observables.  This high temperature result was already derived in \cite{NR}, relying in essence on similar expansion techniques as here.  The validity of low temperature and ground state large deviations is mostly new; we say more in Section \ref{disc}.\\
Secondly, and alternatively, the breaking of quasi-locality in $\mu^X$ implies a type of entanglement for the quantum ground states.  We call it $X$-measurement induced entanglement and it is related to `long range localizable entanglement', as introduced in \cite{PVC} to study questions similar to ours.  \\

For the plan of the paper, the next section discusses the results for the quantum Ising model.  The general framework gets introduced in Section \ref{genf} where the notion of classical restriction is most important.
Section \ref{resu} contains the main results, theorems and counter examples giving the more general version of what happens already in the quantum Ising model.  We also highlight there the dependence on the observable $X$ in case of low temperature and ground states.  Section \ref{disc} is devoted to discussion and more general background of motivations.  The proofs are collected in Sections \ref{prs1} and \ref{prs2}, and are written in a self-contained way.  An Appendix  recalls some facts in the analysis of the quantum Ising chain.

\subsection*{Acknowledgments} We thank Bruno Nachtergaele for useful discussions at the start of this project and Aernout Van Enter for careful reading of the manuscript and for correcting several errors. W.D.R. and M.S. are thankful to the DFG (German Research Fund) for financial support. C.M. gratefully acknowledges financial support in the form of an InterUniversity Attraction Pole  DYGEST (Belspo, Phase VII/18).

\section{Example: the quantum Ising chain}\label{sec: ExampleIsing}

The quantum Ising chain in a transverse magnetic field has formal Hamiltonian
\beq \label{quantum ising}
H = -J\sum_{i} \sigma_i^x\,\sigma_{i+1}^x - h\sum_i \sigma_i^z
\eeq
in one dimension ($i\in \bbZ$), and with the Pauli matrices $(\sigma_i^x,\sigma_i^y,\sigma_i^z)$ in the three directions as usual for spin 1/2 particles. The coupling $J$, the magnetic field $h>0$ and the inverse temperature $\beta$ parametrize the equilibrium state $ \Tr[\,\cdot\, \e^{-\beta H}/Z]$. In the limit $\beta\rightarrow \infty$ the model undergoes a quantum phase transition with critical point at $|J/h|=1$, see e.g.~\cite{sachdevbook}.
For $|J/h|\ll1$, the ground state is a perturbation of the  state 
 \beq  \label{symmetric groundstate} 
 \str \uparrow \rangle \otimes \str \uparrow \rangle \otimes \ldots \otimes \str \uparrow \rangle
   \eeq 
where $\str\uparrow \rangle$ is the normalized eigenvector of $\sigma_z$ with eigenvalue $+1$ and $\str \downarrow \rangle$ stands for the normalized eigenvector with eigenvalue $-1$.  Note that the state \eqref{symmetric groundstate}  is completely \emph{disordered} in the $\sigma^x$-basis: 
 \beq  \label{eq: freeness of sigmax}
 \str \langle\, \mathrm{a}\,  \str \uparrow \rangle \str^2 =\str \langle \,  \mathrm{b}\,  \str \uparrow \rangle \str^2=1/2 \eeq 
 where $\mathrm{a},\mathrm{b}$ stand for the two normalized eigenvectors of $\sigma_x$.

There are three natural choices for classical restrictions.  We can look at the probability distributions $\mu^x, \mu^y$ and $\mu^z$ obtained from the quantum equilibrium state by choosing  $X= \sigma^x, \sigma^y,\sigma^z$, respectively.

\begin{enumerate}
\item  The first type of results is in the regime  
$|J|,|h|\ll\beta^{-1}$ (high temperature); then all three spin-distributions $\mu^{x,y,z}$ are Gibbsian.
\item The second class of results is at low temperatures but needs extra conditions.  We
think of the transverse
magnetic field (second term in \eqref{quantum ising}) as the classical model with a small quantum
perturbation (first term). For that case our results show Gibbsianness for $\mu^x$ and $\mu^y$ , whenever $\beta^{-1},|J|\ll|h|$, including the ground state. This also implies a large deviation property for the macroscopic magnetizations $M^x_N = \sum_{i=1}^N \sigma_i^x/N$ and  $M^y_N = \sum_{i=1}^N \sigma_i^y/N$.
\item However, in the disordered ground state,  for $|J| \ll |h|$, the distribution $\mu^z$ is no longer local (in the sense that its local conditional distributions do not allow a continuous version) and hence not Gibbsian; see Theorem \ref{thm: nongibbs}.  Yet, a large deviation principle still holds for the magnetization  $M^z_N = \sum_{i=1}^N \sigma_i^z/N$; see Theorem \ref{thm: LdpForIsing}. 
\end{enumerate}

%----------------------------------------------------------------

\section{Set-up}\label{genf}
A quantum spin system on the lattice $\bbZ^d$ is made from first associating to each site $i \in \bbZ^d$ a finite-dimensional Hilbert space $\caH_i$ as a copy of $\bbC^{m}$, $m=2,3,\ldots$ and the algebra of operators
$\caB(\caH_i)$, i.e.,  the $m\times m$ complex matrices.
In this section, $A,\La\subset \bbZ^d$ denote finite subsets of $\bbZ^d$, and we more generally write $A,\La\Subset \bbZ^d$ to indicate finiteness of subsets.
 The local Hilbert space for a volume $\La$ is the tensor product $\caH_\La = \bigotimes_{i \in
\La} \caH_i$, and $\cal A_\La= \caB(\caH_\La)=  \bigotimes_{i \in
\La}  \caB(\caH_i)$ denotes the local
matrix algebra. We employ the standard embedding $\caA_{\La'}\subset \caA_{\La}$, $\La'\subset\La$, through $M_{\La'}\otimes \opunit_{\La\setminus\La'}$ for $M_{\La'}\in \caA_{\La'}$.
The completion of $ \bigcup_{\La\Subset\bbZ^d} \caB(\caH_\La)$ in the operator norm defines
the (infinite volume) quasi-local algebra $\cal A$. As usual, a state is a normalized positive functional on this ($C^*$-)algebra $\caA$.

\subsection{Quantum equilibrium states}\label{qGs}
A (quantum) interaction, also sometimes called potential,
is a collection $\Phi = \{\Phi(A)\}$ of self-adjoint elements
$\Phi(A) \in \mathcal{B}(\mathcal{H}_A )$ labeled by
$A\Subset \bbZ^d$, where $\Phi(\emptyset) = 0$.
Throughout the article we assume translation invariance,
i.e., for all $A$ and $i \in Z^d$, $\Phi(A + i)$ is a copy of
$\Phi(A)$ acting on $\mathcal{H}_{A+i}$,
and also that $\Phi(A)=0$ whenever $A$ is not a connected set.
To prevent confusion, we mention that this assumption is
not made for the classical potential introduced further below. Each $\Phi(A)$ can also be regarded as local operator in $\caB(\caH_\La)$ for $A\subset\La\Subset\bbZ^d$. We use the norm
\begin{equation}\label{eq: norm}
  \|\Phi\|_\ka: = \sum_{A \ni 0} \e^{\ka |A|}\, \|\Phi(A)\|,\quad \ka \geq 0
\end{equation}
with $|A|$ counting the number of sites in $A$.  In particular, these norms are finite if the potential has finite range $r$, i.e.\  if $\Phi(A)$ vanishes whenever $A$ contains two sites at a (lattice)distance larger than $r$.

The local Hamiltonian is
\begin{equation}
  H_\La^\Phi = \sum_{A \subset \La} \Phi(A)
\end{equation}
and it defines the finite volume Gibbs state $\om_\La^\be$ at
inverse temperature $\be$ by
\begin{equation}\label{eq: local Gibbs}
  \om_\La^\be(\,\cdot\,) = \frac{1}{Z_\La^\be} \Tr_\La(\e^{-\be H_\La^\Phi}\,\cdot\,)\,, \qquad
  Z_\La^\be = \Tr_\La(\e^{-\be H_\La})
\end{equation}
with $\Tr_\La$ the standard trace on $\caB(\caH_\La)$.\\

The thermodynamic limit  $\La \nearrow \bbZ^d$ is taken along any sequence of volumes such that eventually $\Delta\subset \Lambda$ for any  $\Delta\Subset\bbZ^d$. Under suitable assumptions the 
states $\om_\La^\be$ have a  weak$^*$ limit $\om$ satisfying the Kubo-Martin-Schwinger {\small (KMS)} conditions in the standard sense of the quantum
equilibrium formalism; see~\cite{BR} for definitions and more
details. Furthermore, we can also define ground states in finite volume and take their thermodynamic limit. In all cases discussed in this paper, the ground state is unique and we can also obtain it by taking the (weak$^*$) $\be \to \infty$ limit of the infinite-volume states $\om=\om^{\be}$. Hence, the order of limits does not matter here.

\subsection{Classical restriction}\label{cr}
We choose a self-adjoint matrix $X \in
\caB(\caH)$ and write $X_i$ for its copies in $\caB(\caH_i)$, $i
\in \bbZ^d$. We also write $\Om_\La = \spc(X)^\La$ for the set of (classical) configurations in finite volume. Obviously, the collection $X_\La = (X_i)_{i\in \La}$ is a family of mutually commuting observables and we can define joint spectral projections $Q(x_\La)$, such that 
\begin{equation}\label{eq: spectral}
\prod_{i \in \La}F_i(X_i) = \sum_{x_\La \in \Om_\La}  (\prod_{i \in \La}F_i(x_i)) Q(x_\La)
\end{equation}
for all families of functions $F_i$ on $\spc(X)$. 

We now define the classical restriction of a state $\om$ as the probability distribution $\mu^X$ on $\Om_\La$ with probabilities
\begin{equation}\label{eq: spectral1}
  \mu^X(x_\La) = \om\bigl(Q_\La(x_\La)\bigr)
\end{equation}
 According to the quantum formalism, \eqref{eq: spectral1} gives the frequencies
 of outcomes when repeatedly and independently measuring the observables  $X_{i \in \La}$.
We do not indicate the dependence on $\La$ in  $  \mu^X$ since the family of probability distributions thus constructed is consistent and it defines a unique probability distribution on the infinite product $\Om:=\spc(X)^{\bbZ^d}$ (for the sake of precision: with Borel sigma algebra generated by the product topology on $\Om$). In other words, the probability distribution $\mu^X$ is a state for a classical spin system. 
That classical restriction $\mu^X$ depends of course on  the inverse temperature $\beta$, and on all other parameters in the quantum Hamiltonian, and sometimes we write $\mu^{\be,X}$ to emphasize this.  

Given a configuration $x \in \Om$ and for a volume $\La \subset \bbZ^d$, we write $x_{\La}$ for its restriction to $\Om_{\La}$. For finite $\La$ and  (not necessarily finite) $\La_1 \subset \La^\complement$,  we denote conditional probabilities by $\mu(x_{\La} \str x_{\La_1})$. 
By standard probability theory, these conditional probabilities are well-defined   for $\mu$-almost every $x_{\La_1} \in \Om_{\La_1}$.
\begin{remark}\label{rem: nullity} Classical restrictions for quantum ground states can easily show a property  called `nullness'.  As an example take the ground state of the transverse Ising model at $J=0$ and $h>0$, i.e.\ 
\eqref{symmetric groundstate}. We  choose the observable $X$ to be
\beq\label{rema}
%(\psi_i, X \psi_j)=\mathrm{sgn}(i) \delta_{i,j},
  X =  \str \uparrow \rangle \langle \uparrow \str -  \str \downarrow \rangle \langle \downarrow \str
\eeq
having eigenvalues $\pm 1$.  It is obvious that the classical restriction $\mu^X$ satisfies $\mu^X(x_i=-1)=0$ for all sites $i$. 
\end{remark}

 \subsection{Gibbsianness and quasi-locality}\label{subsec: Gibbsianess}
We consider now probability distributions on the configuration space $\Om$. A family $\Psi =
\{\Psi_A\}$, $A\Subset \bbZ^d$, of functions $\Psi_A: \Om_A
\rightarrow \bbR$ with $\Psi_{\emptyset}=0$ is called a (classical) potential. Here we always consider potentials that are translation invariant and we make use of the following norms, cf.~\eqref{eq: norm} for the quantum analogue,
\begin{equation}\label{eq: norm-cl}
  \|\Psi\|_\ka :=  \sum_{A \ni 0} \e^{\ka |A|} \sup_{x_\La \in \Om_\La} |\Psi_A(x_\La)|,\qquad \ka\geq 0
\end{equation}

% --------------------------------------------- DEFINITION
\begin{definition}\label{def: gibbs}
A probability distribution $\mu$ on $\Omega$ is Gibbsian  if there is a classical
potential $\Psi$ with  $\|\Psi\|_0 < +\infty$  such that for every
$\La$ and for $\mu$-almost every
$x_{\Lambda^\complement} \in \Om_{\Lambda^\complement}$,
\begin{align}
\mu(x_\La \rel x_{\Lambda^\complement})
  &= \frac{1}{\caZ_\La(x_{\Lambda^\complement})} \exp
  \bigl[ -\sum_{A \cap \La \neq \emptyset} \Psi_A(x)
  \bigr]\label{ql}
\\ \intertext{with}
  \caZ_{\La}(x_{\Lambda^\complement}) &= \sum_{x_\La \in \Om_\La}
  \exp \bigl[ - \sum_{A \cap \La \neq \emptyset} \Psi_A(x)
  \bigr]
\end{align}
\end{definition}
Note that we avoided hard core interactions (with $\Psi_A$ that can take the value infinity at some configurations).
For a general theory of Gibbs distributions we refer to~\cite{EFS, F,georgii}.

From Definition \ref{def: gibbs} one sees that a Gibbs distribution $\mu$ is \emph{quasi-local} in the sense that it allows a version for its local conditional distributions that is continuous; see \eqref{ql}
where the right-hand side only weakly depends on far away spins. In fact, a probability distribution $\mu$ on $\Om$ is Gibbsian if and only if its system of conditional probabilities $\mu(x_\La \rel x_{\Lambda^\complement})$ has a version that is both continuous (`quasi-locality') and positive (often called `non-null' in this context). This result goes back to \cite{Koz, Su}. Probability distributions that are not quasi-local have configurations of essential discontinuity:\\
A configuration $x\in \Omega$ is \underline{bad} for a probability distribution $\mu$ if there is $\varepsilon >0$ and $i\in \bbZ^d$ so that for all $\Lambda\Subset \bbZ^d$, $i\in\La$, there is a finite volume $\Gamma \supset \Lambda$ and there are configurations $y,y'\in \Omega$ with $\mu(x^{}_{\Lambda\setminus i}\,y^{}_{\Gamma\setminus \Lambda}),  \mu(x^{}_{\Lambda\setminus i}\,y'_{\Gamma\setminus \Lambda}) >0$ such that
\beq \label{eq: locality concrete}
 \left\str \mu(x^{}_i \,\str \, x^{}_{\Lambda\setminus i}\,y^{}_{\Gamma \setminus \Lambda} ) - 
 \mu(x^{}_i \,\str \, x^{}_{\Lambda\setminus i}\,y'_{\Gamma \setminus \Lambda} )\right\str > \varepsilon
\eeq
In words, the state at site $i$ conditioned on the values of spins in $\Lambda \setminus i$ keeps depending on additional conditioning outside $\Lambda$ no matter how big that volume $\Lambda$ is.

Finally, it is important that without further conditions the translation-invariant Gibbs distributions of Definition \ref{def: gibbs} satisfy a large deviation principle, see e.g.\cite{lanford,georgii}, implementing the static fluctuation theory that forms the basis of equilibrium statistical mechanics. 

\section{Results}\label{resu}

For any suitably decaying quantum interaction $\Phi$ there is a unique equilibrium state $\om^\be$ satisfying the {\small KMS} conditions for high enough temperatures $1/\be$, see e.g.\ \cite{BR}. This state $\om^\be$ is the thermodynamic limit of finite volume Gibbs states $\om_\La^\be$, see \eqref{eq: local Gibbs}, and in particular its classical restriction $\mu^{\be,X}$ can be obtained as
\begin{equation}\label{eq: DefFinClaRes}
  \mu^{\beta,X}=\lim_{\Lambda\nearrow\bbZ^d}\mu^{\beta,X}_\Lambda\quad\text{with}\quad \mu^{\beta,X}_\Lambda(x_\Lambda):=\omega_\Lambda^\beta\bigl( Q_\Lambda(x_\Lambda) \bigr),\; x_\Lambda \in \Omega_\Lambda
 \end{equation}
\begin{theorem}[High temperature]\label{thm: gibbs}
 Let $\Phi$ be an interaction with $\lVert \Phi\rVert_\ka<\infty$ for a given $\ka>0$. Then there exists $\be_\mathrm{max}>0$ such that the classical restriction $\mu^{\be,X}$ of the (unique) quantum equilibrium state $\om^\be$ is Gibbsian for $\be \leq \be_\mathrm{max}$ and for every self-adjoint matrix $X \in \caB(\caH)$.
\end{theorem}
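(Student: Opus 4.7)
The strategy is a high-temperature cluster expansion of the unnormalized weight
\beq\label{planZ}
\tilde Z_\La(x_\La) \defin \Tr_\La\bigl(Q_\La(x_\La)\,\e^{-\be H_\La^\Phi}\bigr),
\eeq
in terms of which $\mu^{\be,X}_\La(x_\La) = \tilde Z_\La(x_\La)/\sum_{y_\La}\tilde Z_\La(y_\La)$. If I can write $-\log \tilde Z_\La(x_\La)$ as an absolutely convergent sum of local terms $\Psi_A(x_A)$ indexed by connected finite $A\subset \bbZ^d$, uniformly in $\La$, then the conditional probabilities take the Gibbsian form \eqref{ql} in the thermodynamic limit and Gibbsianness follows from Definition \ref{def: gibbs}.

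First, I would Taylor expand the exponential, substitute $H_\La^\Phi = \sum_{A\subset \La}\Phi(A)$, and use the tensor product structure $Q_\La(x_\La) = \bigotimes_{i \in \La} P_i(x_i)$. For any term of the resulting series with total operator support $B := A_1 \cup \cdots \cup A_n$, the trace factorizes between $B$ and $\La\setminus B$, and each untouched site $i \in \La\setminus B$ contributes a scalar $\Tr_i(P_i(x_i)) = m_{x_i}$, the dimension of the $x_i$-eigenspace. Pulling out the trivial prefactor $\prod_{i\in\La} m_{x_i}$ and replacing $P_i(x_i)$ on the remaining sites by the normalized projection $\pi_i(x_i) := P_i(x_i)/m_{x_i}$, I am left with a polymer-type sum
\beq\label{planW}
W_\La(x_\La) = \sum_{n \ge 0}\frac{(-\be)^n}{n!}\sum_{A_1,\ldots,A_n\subset\La}\Tr_B\bigl(\pi_B(x_B)\,\Phi(A_1)\cdots\Phi(A_n)\bigr)
\eeq
whose terms factorize over connected components of $B$.

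Second, I would apply the standard Mayer/Kotecký--Preiss expansion of $\log W_\La$ over connected polymers. Because $\|\pi_B\|\le 1$ and $\sum_{A\ni 0}\e^{\ka|A|}\|\Phi(A)\| = \|\Phi\|_\ka <\infty$, a polymer supported on $k$ sites admits an activity bound of the form $(c\be\|\Phi\|_\ka)^k$ for some lattice-dependent constant $c$. The Kotecký--Preiss criterion is then satisfied for all $\be\le\be_{\mathrm{max}}$ with $\be_{\mathrm{max}}$ small enough, delivering an absolutely convergent expansion $-\log W_\La(x_\La) = \sum_{A\subset\La}\Psi^{(0)}_A(x_A)$, uniform in $x_\La$ and $\La$, for a translation invariant potential $\Psi^{(0)}$ with exponentially decaying norm. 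Adding the single-site term $-\log m_{x_i}$ from the prefactor $\prod_i m_{x_i}$ defines the full potential $\Psi$, which satisfies $\|\Psi\|_{\ka'}<\infty$ for some $\ka'>0$, and in particular $\|\Psi\|_0<\infty$. Passing to $\La\nearrow\bbZ^d$ turns the finite-volume expression into the conditional form \eqref{ql} with $\Psi$ as the Gibbsian potential.

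The main obstacle is that the operators $\Phi(A_j)$ do not commute, so the standard classical cluster expansion does not apply verbatim: polymers must be defined so that the ordered sequence of operators on each connected component stays together in the trace, and one must verify that the partial trace on a single component of size $k$ still admits the bound $(c\be\|\Phi\|_\ka)^k$ needed for Kotecký--Preiss. This reorganization is cleanest via the imaginary-time Dyson representation of $\e^{-\be H_\La^\Phi}$, which replaces the factor $(-\be)^n/n!$ by an integral over the $n$-simplex in $[0,\be]$ and makes the polymer bookkeeping transparent. Once that estimate is in place, essentially as in \cite{NR}, the rest of the argument is routine.
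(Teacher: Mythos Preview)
Your proposal is correct and follows essentially the same route as the paper: a high-temperature (noncommutative Mayer) cluster expansion of the unnormalized weight $\Tr_\La(Q_\La(x_\La)\e^{-\be H_\La^\Phi})$ with respect to the product state induced by the normalized projections, yielding a potential $\Psi_A(x_A)$ plus the single-site correction $\log m_{x_i}$, with the expansion bounds imported from \cite{NR}. The only cosmetic difference is that the paper first defines the weights $w^{x_A}(A)$ abstractly via M\"obius inversion (inclusion--exclusion, eq.~\eqref{eq: defweights}) and then identifies and bounds them through the cluster expansion of \cite{NR}, whereas you go directly to the polymer/cluster expansion; the resulting potential and the Kotecký--Preiss-type smallness condition \eqref{eq: high-T condition} are the same.
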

In the proof, see \eqref{eq: high-T condition}, we give an explicit estimate of an inverse temperature $\beta_0>0$, such that the thermodynamic limit \eqref{eq: DefFinClaRes} exists for all $\be\leq\be_0$. \\
There are various properties of the resulting large deviation rate function that follow.  As of independent interest, at
high temperatures, these results can be used to obtain a central
limit theorem; we refer to ~\cite{NR} for further discussion.\\

At low temperatures we specify the regime in which our results hold by two \textbf{assumptions}; the first is concerned with the interaction underlying the quantum state, the second spells out a condition on the single-site observable $X$ which induces the classical restriction.

\begin{assumption}
Suppose an interaction $\Phi = \Phi_0 + \Upsilon$, where $\Phi_0$ has finite range. Assume there is a one-dimensional orthogonal projection $\caP\in\caB(\caH)$, such that the local Hamiltonian $H_\Lambda^{\Phi_0}$ satisfies the following, for all $\Lambda$ and  $S\subset \Lambda$:
\begin{enumerate}
 \item $H_\Lambda^{\Phi_0}$ commutes with $\caP_{\La}(S)$,
 \item $H_\Lambda^{\Phi_0} \caP_\Lambda(\emptyset)=0$,
 \item there is a $\Lambda$-uniform gap $g>0$, such that 
 \begin{equation}\label{eq: DecayOfExcitations} H_\Lambda^{\Phi_0}\caP_\Lambda(S)\geq g\,|S|\caP_\Lambda(S)\end{equation}
 in the sense of positive operators,
\end{enumerate}
where we defined the projections
\begin{equation}
 \caP_\Lambda(S):=\bigl({\textstyle \bigotimes_{i\in S}}\caP_i^\perp\bigr)\otimes \bigl({\textstyle\bigotimes_{i\in \Lambda \setminus S}}\caP_i^{}\bigr)
\end{equation}
in $\caB(\caH_\Lambda)$.
\end{assumption}

The condition \eqref{eq: DecayOfExcitations} is a Peierls condition: the local Hamiltonians $H_\Lambda^{\Phi_0}$ have a ($\Lambda$-uniformly) gapped non-degenerate product ground state.  As an example, we look  at the disordered ground state \eqref{symmetric groundstate} in the quantum Ising model
of Section \ref{sec: ExampleIsing}.  We can take there $\cal P=\lvert\uparrow \rangle \langle \uparrow \rvert$, and $\Phi_0$ corresponds to the second term in the Hamiltonian \eqref{quantum ising} (transverse field).  In our treatment the second term $\Upsilon$ will be a sufficiently small perturbation of the particularly simple interaction $\Phi_0$. In this case the above assumption implies a unique ground state for the interaction $\Phi$, see e.g.~\cite{Ya1}, and furthermore applicability of so-called quantum Pirogov--Sinai theory, see \cite{BKU,DFF}. As a consequence there is a unique {\small KMS} state for small enough temperatures and the classical restriction $\mu^{\beta,X}$ can again be obtained through \eqref{eq: DefFinClaRes}. \\
  
There is a second major assumption: the first term $\Phi_0$ must not in any way `fix'
the observable $X$; it must remain `free' and sufficiently unbiased in the presence of that dominant term:

\begin{assumption}
Suppose that $\Tr\bigl(Q(x) \caP\bigr) >0 $ for all $x\in \sp(X)$. 
\end{assumption}
Clearly, that is not satisfied in the case of the Ising model for $X=\sigma^z$ and $\cal P=\lvert\uparrow \rangle \langle \uparrow \rvert$ as above.  There is however then no problem in the case of $X=\sigma^x$ or $X=\sigma^y$; they are left `free'; see in particular \eqref{eq: freeness of sigmax}.

% ----------------------------------------------------- THEOREM
\begin{theorem}[Low temperature \& weak coupling]\label{thm: gibbs1}
Take the {\bf Assumptions 1 and 2} above.  There exist positive $\kappa_{\min},\beta_{\min}$ (depending on $X$) so that if $\kappa\geq\kappa_{\min}$, $\beta\geq\beta_{\min}$ and  $\lVert \Upsilon \rVert_\kappa\leq 1$, then $\mu^{\beta,X}$ is Gibbsian. Moreover, this statement remains true for the ground states, i.e.\ for  $\beta \rightarrow\infty$.
\end{theorem}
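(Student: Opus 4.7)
The plan is to derive a convergent polymer/cluster expansion for the finite-volume classical restrictions $\mu_\Lambda^{\beta,X}$ and read off from it a translation-invariant classical potential $\Psi$ with $\|\Psi\|_{\kappa''}<\infty$ for some $\kappa''>0$. The natural reference configuration is the non-degenerate product state $\caP_\Lambda(\emptyset)=\bigotimes_i\caP$: by \textbf{Assumption 1} it is the ground state of $H_\Lambda^{\Phi_0}$ with uniform gap $g$, and by \textbf{Assumption 2} its classical restriction is the strictly positive product measure $\prod_{i\in\Lambda}\Tr(Q(x_i)\caP)$. The task is to show that turning on both the perturbation $\Upsilon$ and the Gibbs weight $e^{-\beta H_\Lambda^\Phi}$ perturbs this product measure only by a sum of local, exponentially decaying correction terms.

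Concretely, I would start from
\begin{equation*}
\mu_\Lambda^{\beta,X}(x_\Lambda) = \frac{\Tr_\Lambda\bigl(e^{-\beta H_\Lambda^\Phi}\, Q_\Lambda(x_\Lambda)\bigr)}{\Tr_\Lambda\bigl(e^{-\beta H_\Lambda^\Phi}\bigr)}
\end{equation*}
and write a Duhamel/Dyson expansion of $e^{-\beta H_\Lambda^\Phi}$ around $e^{-\beta H_\Lambda^{\Phi_0}}$ in powers of $\sum_{A\subset\Lambda}\Upsilon(A)$, inserting the resolution of identity $\opunit_\Lambda = \sum_{S\subset\Lambda}\caP_\Lambda(S)$ between successive $\Upsilon$-insertions and next to $Q_\Lambda(x_\Lambda)$. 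Each resulting term is labelled by an excitation set $S$: \textbf{Assumption 1}(iii) provides a factor $e^{-\beta g|S|}$ from time-ordered integration on the excited subspace, while $\|\Upsilon\|_\kappa\le 1$ with $\kappa$ large delivers a factor $e^{-\kappa|B|}\|\Upsilon(B)\|/\|\Upsilon(B)\|$-type smallness for each $\Upsilon(B)$-insertion. Grouping the supports of $\Upsilon$-insertions and of the excitation set $S$ into connected components, I reorganize the sum into polymers $A\subset\Lambda$ carrying a weight $w_A(x_A)$ that factorizes out the leading piece $\prod_{i\in A}\Tr(Q(x_i)\caP)$; by \textbf{Assumption 2} this prefactor is uniformly positive and can be safely extracted.

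The key estimate is then $|w_A(x_A)|\leq e^{-\kappa'|A|}\prod_{i\in A}\Tr(Q(x_i)\caP)$ with $\kappa'>0$ as large as needed once $\kappa\geq\kappa_{\min}$ and $\beta\geq\beta_{\min}$ are chosen large enough. This is precisely of Koteck\'y--Preiss type and forces convergence of the cluster expansions of the logarithms of both numerator and denominator; their difference yields a translation-invariant potential $\Psi$ obeying \eqref{ql} for $\mu_\Lambda^{\beta,X}$, and non-nullness of the induced specification again follows from \textbf{Assumption 2}. Because the Peierls factor $e^{-\beta g|S|}$ only improves as $\beta\to\infty$ (and is replaced by a strict orthogonality $\caP_\Lambda(\emptyset)\caP_\Lambda(S)=0$ for $S\neq\emptyset$ in the ground-state projection), all bounds are monotone in $\beta$ and the construction extends verbatim to ground states.

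The main obstacle is the non-commutativity of $Q(x)$ with $\caP$: even at $\Upsilon=0$ and $\beta=\infty$ the ``free'' matrix element $\omega(Q_\Lambda(x_\Lambda))$ couples site labels through the spectral decomposition of $Q_\Lambda$, and one must separate the genuinely product contribution $\prod_i\Tr(Q(x_i)\caP)$ from its polymeric corrections without losing control of the latter. \textbf{Assumption 2} is precisely what keeps this factorization alive and uniformly bounded away from zero so that logarithms and conditional probabilities stay well-defined. The subtler technical difficulty is obtaining convergence uniformly down to $T=0$; this is exactly the setting of the quantum Pirogov--Sinai machinery of \cite{BKU, DFF} under the Peierls-type hypotheses encoded in \textbf{Assumption 1}, and it is what ultimately forces $\kappa_{\min}$ and $\beta_{\min}$ to be chosen large.
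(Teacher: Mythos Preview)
Your strategy matches the paper's---Dyson expansion around $H_0$, insertion of $\sum_S\caP_\Lambda(S)$, polymer model, Kotecký--Preiss criterion, cluster expansion---but two points are oversimplified to the extent of being gaps.

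The polymers are \emph{space-time} objects in $\Lambda\times[0,\beta]$, not spatial sets $A\subset\Lambda$ with weights $w_A(x_A)$. Each Dyson term carries times $0<t_1<\cdots<t_n<\beta$, excitation sets $S_0,\ldots,S_n$, and interaction supports $B_1,\ldots,B_n$; the Peierls weight is $\exp\bigl(-g\sum_k|S_k|(t_{k+1}-t_k)\bigr)$, exponential in the polymer's \emph{horizontal} (time) length, not a single factor $e^{-\beta g|S|}$. The Kotecký--Preiss estimate must therefore control horizontal and vertical extents simultaneously, which the paper does by a further decomposition of each polymer into elementary ``constituents'' (vertical interaction segments and horizontal end-to-end excitation segments). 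The classical potential $\Psi_A$ is then the integral over clusters whose \emph{root set}---the intersection with the time boundary $\{0,\beta\}$---equals $A$; this is how a spatial potential emerges from a space-time expansion.

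Consequently your claim that ``all bounds are monotone in $\beta$'' and that the ground-state limit is free is not right. As $\beta$ grows the space-time box grows and more polymers contribute; uniformity holds because clusters rooted at a fixed $A$ either have bounded time-extent near the boundary (controlled uniformly) or wrap the time cylinder end-to-end, and the latter have weight $O(e^{-c\beta})$ and disappear as $\beta\to\infty$. That is an extra estimate in the space-time picture, not monotonicity. Incidentally, your identified ``main obstacle''---non-commutativity of $Q(x)$ with $\caP$ producing coupling already at $\Upsilon=0$, $\beta=\infty$---is misplaced: at that point the state is exactly the product $\prod_i\Tr(Q(x_i)\caP)$, and the coupling enters only through $\Upsilon$ or through thermal excitations at finite $\beta$.
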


The most striking condition in the above theorem is {\bf Assumption 2}, which in particular excludes observables $X$ that commute with the projector $\caP$. A first reason for it is to avoid the nullness-scenario mentioned in Remark \ref{rem: nullity}, which rules out Gibbsianness right away. Note that there the quantum ground state (and classical restriction) is local as a product state. At least at zero temperature ($\be=\infty$), {\bf Assumption 2} can surely not be dropped also in view of the more interesting quasi-locality aspect of Gibbsianness, as follows from the following.

\begin{theorem}[Non-quasi-local ground state]\label{thm: nongibbs}
Consider the Ising model in transverse field as discussed in Section \ref{sec: ExampleIsing} and let  $X=\si^z$. Let  $\be=\infty$ and $|J/h|>0$ be small enough.  Then, the corresponding classical restriction $\mu^z$ is nonnull in the sense that $\mu^z(x_\La)>0$ for any $x_\La \in \Om_{\La}$, $\La\Subset\bbZ$.  Most importantly, $\mu^z$ is not quasi-local and the configuration $x\in\Om$ defined by $x_i=-1$, $i\in \bbZ$, is a bad configuration.
 \end{theorem}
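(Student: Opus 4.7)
The proof rests on the $\mathbb{Z}_2$ symmetry generated by the parity operator $P_V=\prod_{i\in V}\sigma^z_i$. Since $\sigma^x_i\sigma^x_{i+1}$ anti-commutes twice with $P_V$, we have $[H_V,P_V]=0$. For $|J/h|$ small the finite-volume ground state $|\psi_V\rangle$ lies in the even sector $P_V|\psi_V\rangle=+|\psi_V\rangle$, being the adiabatic continuation of $|\!\uparrow\rangle^{\otimes V}$. In particular $\psi_V(x)=0$ unless $\prod_i x_i=+1$, so $\mu^z_V$ is supported on configurations with an even number of $-1$'s. This parity structure drives both assertions of the theorem.

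\textbf{Nonnullness.} In the $\sigma^z$ basis $-H_V$ has nonnegative off-diagonal entries (each $\sigma^x_i\sigma^x_{i+1}$ enters with coefficient $J>0$), and on each parity sector it is irreducible since adjacent pair-flips generate all same-parity configurations. Perron--Frobenius then gives $\psi_V(x)>0$ for every $x$ of even parity. For any $x_\Lambda\in\Omega_\Lambda$, picking any even-parity extension $x$ shows $\mu^z_V(x_\Lambda)\geq \psi_V(x)^2>0$. The convergent cluster expansion in the gapped disordered phase (quantum Pirogov--Sinai, \cite{BKU,DFF}) yields a uniform-in-$V$ lower bound and hence $\mu^z(x_\Lambda)>0$ in the thermodynamic limit.

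\textbf{Bad configuration.} Fix $\Lambda\ni 0$, pick any two sites $j_1,j_2\in\mathbb Z\setminus\Lambda$, and set $\Gamma=\Lambda\cup\{j_1,j_2\}$. Compare
\[
y^{(A)}_{j_1}=y^{(A)}_{j_2}=-1\qquad\text{and}\qquad y^{(B)}_{j_1}=-1,\ y^{(B)}_{j_2}=+1
\]
on $\Gamma\setminus\Lambda$. These differ by a single flip and therefore have opposite $\mathbb Z_2$-parities of $\#\{i\in\Gamma\setminus 0:x_i=-1\}$. By the total-parity constraint, each choice of $x_0\in\{\pm 1\}$ dictates a $\mathbb Z_2$-charge for $x_{V\setminus\Gamma}$. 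Expanding $|\psi_V\rangle$ perturbatively around $|\!\uparrow\rangle^{\otimes V}$, the amplitude $\psi_V(x)$ is of order $(J/h)^{m(x)}$ with $m(x)$ the minimum number of pair-flips $\sigma^x_i\sigma^x_{i+1}$ needed to produce $x$ from all-up. In our setup, one of the two $(A,B)$-configurations pairs the down-spins efficiently into adjacent pairs inside $\Gamma$, while the other is forced into a sparse configuration (a lone $-1$ that must be matched by an extra $-1$ outside $\Gamma$) and hence requires at least one additional pair-flip, costing an extra factor $J/h$ in amplitude. A direct computation of the ratio yields, with the labels $A,B$ chosen according to the parity of $|\Lambda|$,
\[
\mu^z\bigl(x_0=-1\mid x_{\Lambda\setminus 0}=-1,\, y^{(A)}\bigr)=O\bigl((J/h)^2\bigr),\qquad \mu^z\bigl(x_0=-1\mid x_{\Lambda\setminus 0}=-1,\, y^{(B)}\bigr)=1-O\bigl((J/h)^2\bigr).
\]
For $|J/h|$ small enough the difference exceeds any $\varepsilon<1$ uniformly in $|\Lambda|$ and $|\Gamma|$, proving that $x\equiv -1$ is bad.

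\textbf{Main obstacle.} The delicate step is to upgrade the leading-order perturbative picture above to rigorous bounds uniform in $|\Lambda|$, $|\Gamma|$ and the enclosing box $V$, and in particular to rule out that higher-order terms conspire to restore equality of the two conditional probabilities. This is controlled by the convergent cluster expansion valid in the gapped disordered phase, or, more explicitly, by the Jordan--Wigner fermionization recalled in the Appendix: both conditional probabilities then become explicit ratios of Pfaffians built from the Bogoliubov correlators of the transverse Ising ground state, and the single-flip $\mathbb{Z}_2$-parity toggling appears as a sign factor whose effect on the Pfaffian ratio persists in the thermodynamic limit.
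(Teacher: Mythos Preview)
Your parity intuition is exactly right and is the mechanism behind the result. The gap is in the geometry of your conditioning set. By taking $\Gamma=\Lambda\cup\{j_1,j_2\}$, the toggled site $j_2$ sits on the boundary of $\Gamma$: its neighbors lie in $\Gamma^\complement$, where the configuration is unconstrained. In the loop-gas picture this means a loop rooted at $j_2$ can escape $\Gamma$ at cost $O(J/h)$, so the parity of $\{i\in\Gamma:x_i\ \text{excited}\}$ is \emph{not} approximately conserved. Concretely, for large $\Lambda$ the event $x_{\Lambda\setminus 0}\equiv -1$ already forces any loop through $0$ to have length of order $|\Lambda|$, regardless of what you put at $j_1,j_2$; the two conditional probabilities you compare therefore both tend to the same limit and their difference does \emph{not} stay bounded below by a fixed $\varepsilon$. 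The asserted bounds $O((J/h)^2)$ and $1-O((J/h)^2)$, uniform in $|\Lambda|$, are false as stated, and neither the cluster expansion nor the Pfaffian representation will produce them, because they encode a statement that does not hold.

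The paper repairs this by a scale separation: it takes $\Gamma=\Gamma_L=\{-L^2,\dots,L^2\}$ and places the single $+1$ at site $L$, so that \emph{both} distinguished sites $0$ and $L$ lie at distance at least $L^2-L$ from $\partial\Gamma$. Then (i) conditioning on $x_{\Gamma\setminus 0}\equiv -1$ forces any loop through $0$ to reach $\Gamma^\complement$, cost $e^{-cL^2}$, so $\mu^z(x_0=+1\mid\cdot)\to 0$; (ii) with the extra $+1$ at $L$, the loop joining $0$ and $L$ inside $\Gamma$ costs only $e^{-cL}$, whereas a loop from $L$ alone to $\Gamma^\complement$ costs $e^{-c(L^2-L)}$, so $\mu^z(x_0=+1\mid\cdot)\to 1$. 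The $L$ versus $L^2$ competition is what makes the parity argument survive the thermodynamic limit; without it, leakage through $\partial\Gamma$ washes out the effect. Your ``main obstacle'' paragraph identifies that uniformity is the issue but does not supply the missing geometric input.
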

As pointed out to us by Aernout van Enter,
the computations in the proof of the Theorem can be used to show
that in fact all configurations $x\in \Omega$ are bad for $\mu^z$.
For simplicity we supply the explicit proof only for the configuration
$x\equiv -1$ as in the Theorem. The result holds for higher dimensions $d>1$ as well,
as one checks by going through the proof,
but again we restrict ourselves to d = 1 for brevity.
The fact that a classical restriction of the ground state is not quasi-local does not mean that it does not satisfy a large deviation principle, as we see in 

\begin{theorem}[Large deviation principle despite Non-Gibbsianness]\label{thm: LdpForIsing}
As in Theorem \ref{thm: nongibbs}, consider the transverse Ising model in the disordered regime $|J/h|<1$ with $X=\si^z$, $\be=\infty$. Then the generating function
 \begin{equation}
  F(t):=\lim_{n\rightarrow \infty}{\textstyle \frac{1}{n}}\log \omega\left( \exp\bigl( t{\textstyle \sum_{i=1}^n} \sigma_i^z \bigr) \right), \qquad t \in \bbR
 \end{equation}
 exists and is real-analytic. 
\end{theorem}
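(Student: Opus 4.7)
The plan is to exploit the exact solvability of the transverse Ising chain via the Jordan--Wigner transformation, which maps the spin system to a quadratic (free) fermion Hamiltonian. Under this map $\sigma_i^z = 1 - 2 c_i^\dagger c_i$, so
\[
\om\bigl( \exp\bigl(t \sum_{i=1}^n \sigma_i^z\bigr) \bigr) = \e^{tn}\, \om\bigl( \exp(-2t N_n) \bigr),
\]
where $N_n = \sum_{i=1}^n c_i^\dagger c_i$ and $\om$ is the unique quasi-free ground state of the Bogoliubov-diagonalized Hamiltonian. The operator $N_n$ is local in both the spin and the fermion languages, so no Jordan--Wigner string issues arise. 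In the disordered regime $|J/h|<1$ the Bogoliubov dispersion $\varepsilon(k)$ has a strictly positive minimum (a bulk spectral gap), which forces the Bogoliubov rotation angle $\theta(k)$ to be real-analytic in $k\in \bbR/2\pi\bbZ$ and makes the two-point functions of $\om$ decay exponentially.

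Next, I would express $\om\bigl(\e^{-2t N_n}\bigr)$ as a Fredholm determinant of a $2n \times 2n$ covariance matrix $\Gamma_n$ assembled from $\om(c_i^\dagger c_j)$ and $\om(c_i c_j)$. This is a classical identity for quasi-free fermionic states (Klich, Levitov, Balian--Brezin), producing a representation
\[
\om\bigl( \e^{-2t N_n} \bigr) = \bigl[\det\bigl( I - \Gamma_n + \e^{-2t}\Gamma_n \bigr)\bigr]^{1/2},
\]
the square root reflecting the doubling to the Nambu/Majorana space. By translation invariance, $\Gamma_n$ is the $n$-block principal submatrix of a block-Toeplitz operator with a bounded $2\times 2$ matrix symbol $\hat\Gamma(k)$ whose entries are trigonometric polynomials in $\cos\theta(k)$ and $\sin\theta(k)$, hence real-analytic in $k$.

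Then I would apply the block Szeg\H{o}--Widom theorem to extract
\[
F(t) = t + \frac{1}{4\pi}\int_{-\pi}^{\pi} \log\det\bigl( I - \hat\Gamma(k) + \e^{-2t}\hat\Gamma(k) \bigr)\, \d k,
\]
whose validity requires continuity and invertibility of the symbol $M_t(k) := I - \hat\Gamma(k) + \e^{-2t}\hat\Gamma(k)$. For real $t$ the eigenvalues of $\hat\Gamma(k)$ lie in $[0,1]$, so $\det M_t(k) > 0$ pointwise; compactness of $[-\pi,\pi]$ together with joint analyticity in $(t,k)$ of the symbol then delivers a uniform lower bound on $|\det M_t(k)|$ in a complex strip $|\im t| < \delta$. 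Real-analyticity of $F$ at every real $t_0$ follows by viewing $\log\det M_t(k)$ as an analytic function of $t$ uniformly in $k$ and differentiating under the integral.

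The principal obstacle is the rigorous use of the block Szeg\H{o}--Widom asymptotics in a way that is uniform in $t$ over a complex neighborhood, so as to obtain \emph{analyticity} and not merely existence of the limit. If the block version is inconvenient, an alternative is to pass to the Majorana basis, rewrite $N_n$ as a quadratic form in Majoranas, and invoke the scalar strong Szeg\H{o} theorem together with a trace-norm estimate to control the finite-$n$ remainder. In either strategy, the only mechanism by which real-analyticity could fail is a zero of $\det M_t(k)$ for some $(t,k)$, and this is excluded by the gap hypothesis $|J/h|<1$.
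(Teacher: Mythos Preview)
Your proposal is correct and follows the same overall strategy as the paper --- Jordan--Wigner, Toeplitz determinant, Szeg\H{o} asymptotics --- but the implementation differs in a way worth noting. The paper does not pass through the $2n\times 2n$ Nambu covariance and block Szeg\H{o}--Widom. Instead it writes $\e^{t\sigma_j^z}=A_j B_j^t$ with $A_j=a_j^*+a_j$ and $B_j^t=\e^{-t}a_j^*+\e^{t}a_j$, observes that $\omega(A_jA_{j'})=\delta_{jj'}$, and then Wick's theorem collapses $\omega\bigl(\prod_j A_jB_j^t\bigr)$ to a \emph{scalar} $n\times n$ Toeplitz determinant with symbol
\[
\phi_t(k)=\cosh t-\sinh t\,\frac{h/J+\e^{-ik}}{\sqrt{(h/J+\e^{-ik})(h/J+\e^{ik})}}.
\]
Since $\mathrm{Re}\,\phi_t>0$ for $t\in\bbR$ and $\phi_t$ extends analytically to an annulus (this is exactly where $|J/h|<1$ enters), the scalar strong Szeg\H{o} theorem gives $F(t)=\widehat{\log\phi_t}(0)$ directly. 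Your block route is more systematic and would generalize to other local observables, but the paper's reduction buys a cleaner symbol analysis and avoids having to verify the hypotheses of the block theorem uniformly in $t$; your positivity argument for $\det M_t(k)$ is the correct substitute, and your ``Majorana alternative'' is essentially what the paper carries out.
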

As a consequence of the G\"artner-Ellis theorem, see e.g.~\cite{E}, Theorem \ref{thm: LdpForIsing} implies that in the disordered  ground state of the quantum transverse Ising model the magnetization $M_N^z:= \frac{1}{N}\sum_{i=1}^N \sigma_j^z$ satisfies a large deviation principle.  More precisely, with respect to the classical restriction $\mu^{z}$, $m_N = \frac 1{N} \sum_{j=1}^N x_j$ as a function on $\Om$ satisfies a large deviation principle for a (lower semi-continuous and convex) rate function $I$ which is the Legendre transform of $F$:
 \begin{equation}
  \begin{split}
   \limsup_{n\rightarrow \infty} \frac{1}{n} \log \mu^{z} \bigl(m_n \in C\bigr)&\leq -\inf_{m\in C} I(m) \quad\text{ for } C\subset \bbR \text{ closed}\\
    \liminf_{n\rightarrow \infty} \frac{1}{n} \log \mu^{z} \bigl(m_n \in O\bigr)&\leq -\inf_{m\in O} I(m) \quad\text{ for } O\subset \bbR \text{ open}
  \end{split}
 \end{equation}

\section{Discussion}\label{disc}

The issue of (non-)locality of classical restrictions of quantum states $\omega$ has two major applications for $\omega$.  The (quasi-)locality (such as per consequence of Theorems \ref{thm: gibbs} and \ref{thm: gibbs1}) implies well behaved large deviations and the non-locality of classical restrictions of quantum ground states (such as per consequence of Theorem \ref{thm: nongibbs}) implies some strong form of entanglement in that quantum ground state.

\subsection{Fluctuation theory}\label{qft}

Fluctuation theory, or the theory of large deviations \cite{DZ,E}, remains important  when moving to the quantum regime, e.g. for a relevant understanding of variational principles and of response theory, see e.g.~\cite{DMN}.   Let $F$ be a function on $\sp(X)$ and consider the spatial average
\begin{equation}\label{emp}
  \bar F_\La = \frac{1}{|\La|} \sum_{i \in \La} F(X_i)
\end{equation}
  Fluctuation theory is about characterizing the `probabilities'
$\om \left( 
  \chi_{[a,b]}(   \bar F_\La ) \right)$,  where
$\chi_{[a,b]}(\cdot)$ denotes the indicator function of some interval $[a,b] \subset \bbR$.  That gives the distribution of
the outcomes when measuring the average \eqref{emp}.  The point is that these fluctuations can be expressed via the classical restriction $\mu^X$, namely
\begin{equation}\label{cladev}
\om \left( 
  \chi_{[a,b]}(   \bar F_\La ) \right) = \mu^X \big( a\leq \frac 1{|\Lambda|}\sum_{i \in \La} F(x_i)  \leq b  \big)
\end{equation}
Hence the question emerges whether a large deviation principle holds for the distribution $\mu^X$.  But from classical statistical mechanics the answer is an immediate `yes' for equilibrium distributions.
Therefore Gibbsianness of the classical restriction $\mu^X$ of quantum equilibrium or ground states implies a (quantum) large deviation result. The results of the present paper, in particular Theorems \ref{thm: gibbs} and \ref{thm: gibbs1} thus add  to  the current state-of-the-art on quantum large deviations: they are now proven for:\\[1mm]
\noindent \emph{High temperature:} \normalfont see \cite{NR, LR}.\\[1mm]
\emph{Dimension $d=1$, be it quantum equilibrium states or finitely correlated states: }\normalfont   see \cite{O}. \\[1mm]
\emph{Low temperature or ground states with appropriate conditions:} \normalfont  the present paper, Theorem \ref{thm: gibbs1}. \\[1mm] 
In all these cases, the result is strong enough to imply a central limit theorem, because the large deviation generating function is analytic in a a neighborhood of $0$, but we give no further details.\\

A final remark concerns the property of \emph{asymptotic decoupling}, which is weaker than Gibbsianness, but stronger than large deviations, see  \cite{P} for definitions and proofs.  Therefore, in the present context, the asymptotic decoupling of $\mu^X$ suffices for quantum large deviations of \eqref{emp} in the quantum state. Such an asymptotic decoupling can indeed be shown at high temperature and in one dimension, see \cite{Oluc}. 

\subsection{$X$-Measurement-Induced Entanglement}
In this section we connect with notions of entanglement and it is therefore natural to restrict the discussion to pure states $\om$ even though the mathematics below allows generalizations to mixed quantum states.\\ %, expect for replacing the term `entanglement' by `correlation'. 
We  `condition' the state $\om$ on the measurement outcome $x_V$ of the observables $X_i$, $i \in V\Subset\bbZ^d$, by defining: 
\beq \label{eq: conditioned state}
\om^{x_{V}} (\,\cdot\,) :=  \frac{\om( Q({x_{V}}) \cdot Q({x_{V}})  )}{\om(Q({x_{V}}))}
\eeq
Recall that we write $O_i$ for the local operator acting non-trivially on $\caH_i$ as copy of $O\in \caB(\caH)$. We say that the state $\om$ has \emph{`$X$-Measurement-Induced Entanglement'} whenever 
there are single-site observables  $A,B \in \caB(\caH)$ and a configuration $x \in \Om$ such that 
\begin{equation}\label{sent}
\limsup_{n \to \infty} | \om^{x_{V_n}}(A_0 B_{i_n}) - \om^{x_{V_n}} (A_0)\, \om^{x_{V_n}}( B_{i_n})|  >0 
 \end{equation}
for a sequence of punctured balls $V_n = \{i\,  \str \,   0< \str i-0 \str< n \}$ and  a sequence of sites $i_n \in V_n^\complement$. 
Hence, measuring $X$ in large regions $V$ can correlate observables that are spatially separated (It might be natural to allow that $A,B$ live on a few sites, rather than one, one can easily modify the definition in this direction).
 Physically we can imagine that in a region $V_n$ surrounding the center of a spin system a very strong magnetic field is applied to let the spins all point there in the same (field)direction; still the quantum ground state does not factorize for joint observations in the center and outside $V_n$. That notion is of course tailored towards the strong breaking of quasi-locality in the sense of \eqref{eq: locality concrete}. \\
 
\begin{fact} If there is a bad configuration $x \in \Om$  for the classical restriction $\mu^X$, then  $\om$ has `$X$-Measurement-Induced Entanglement'; see \eqref{eq: locality concrete}. 
\end{fact}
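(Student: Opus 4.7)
The idea is to translate the classical bad-configuration jump \eqref{eq: locality concrete} directly into a non-decaying connected correlator in the conditioned state \eqref{eq: conditioned state}. By translation invariance of $\omega$ (and hence of $\mu^X$) I assume the bad site is $0$ and write $\varepsilon > 0$ for the bad constant, and I keep the bad configuration $x \in \Omega$ from the hypothesis.

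First I align the geometry with the definition of $X$-measurement-induced entanglement: set $\Lambda_n := \{j \in \bbZ^d : |j|\leq n\}$ and $V_n := \Lambda_n \setminus \{0\}$, so $V_n$ is the required punctured ball. The bad-configuration hypothesis applied to $\Lambda_n$ produces a finite set $\Gamma_n \supset \Lambda_n$ and configurations $y^{(n)}, y'^{(n)}$ of positive conditional $\mu^X$-probability whose restrictions to $\Gamma_n \setminus \Lambda_n$ satisfy the jump required in \eqref{eq: locality concrete}.

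Second I push this classical jump through to a quantum correlator. The $X$-spectral projectors $Q(a)_i$ at distinct sites commute with each other and with $Q(x_{V_n})$, so definition \eqref{eq: conditioned state} together with \eqref{eq: spectral1} gives, for the single-site observable $A := Q(x_0)$ at site $0$ and the multi-site projector $\tilde B_n := Q(y^{(n)}_{\Gamma_n \setminus \Lambda_n})$,
\[
 \omega^{x_{V_n}}(A_0 \tilde B_n) = \omega^{x_{V_n}}(\tilde B_n)\cdot \mu^X\bigl(x_0 \,\big|\, x_{V_n}, y^{(n)}_{\Gamma_n \setminus \Lambda_n}\bigr),\qquad \omega^{x_{V_n}}(A_0) = \mu^X(x_0 \mid x_{V_n}),
\]
and an analogous identity with $y'^{(n)}$ in place of $y^{(n)}$. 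A short Bayes computation combined with Step 1 (and the triangle inequality applied to the two conditional probabilities around $\mu^X(x_0\mid x_{V_n})$) shows that at least one of the two connected correlators $|\omega^{x_{V_n}}(A_0 \tilde B_n) - \omega^{x_{V_n}}(A_0)\,\omega^{x_{V_n}}(\tilde B_n)|$ is bounded below by $\omega^{x_{V_n}}(\tilde B_n)\cdot \varepsilon/2$. Choosing $y^{(n)}$ to agree with the bad configuration $x$ on $\Gamma_n\setminus\Lambda_n$ whenever possible makes the prefactor $\omega^{x_{V_n}}(\tilde B_n) = \mu^X(y^{(n)} \mid x_{V_n})$ bounded below, so the correlator does not decay. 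This is already the conclusion under the relaxation of the definition of $X$-measurement-induced entanglement allowed in the remark following \eqref{sent}, where $A, B$ are permitted on a bounded number of sites.

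Finally, for the strict single-site formulation I localize $\tilde B_n$: telescope-interpolating between $y^{(n)}$ and $y'^{(n)}$ one site at a time on $\Gamma_n\setminus\Lambda_n$ and applying the triangle inequality identifies a single site $i_n \in \Gamma_n\setminus\Lambda_n \subset V_n^\complement$ whose flipping changes the conditional marginal at $0$; a pigeonhole over the finite set $\spc(X)\times\spc(X)$ of pairs of flipped values extracts a subsequence $(n_k)$ with fixed pair $(a,a')$, and the single-site observable $B := Q(a) - Q(a')$ at $i_{n_k}$ realizes the required correlator. \textbf{Main obstacle.} The crude telescoping only gives a single-site jump of size $\varepsilon/|\Gamma_n\setminus\Lambda_n|$, and \eqref{eq: locality concrete} gives no control on the diameter of $\Gamma_n$. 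Securing a uniform single-site lower bound therefore requires a refined telescoping that exploits a uniform lower bound on the intermediate conditional probabilities along the interpolation; this is the principal technical difficulty, which can be bypassed entirely by accepting the multi-site relaxation from the paper's remark, under which Step 2 already suffices.
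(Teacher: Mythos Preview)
Your approach is the same as the paper's: choose $A,B$ to be functions of $X$ (spectral projections), so that the conditioned quantum expectations reduce to classical conditional probabilities under $\mu^X$, and then read off the non-decay of the connected correlator directly from the bad-configuration jump \eqref{eq: locality concrete}. The paper's entire argument is the single sentence ``by choosing $A,B$ in \eqref{sent} functions of $X$, this is immediate from \eqref{eq: locality concrete}''; your Step~2 is a fleshed-out version of exactly this, and it is correct for the multi-site relaxation of \eqref{sent} that the paper explicitly permits in the parenthetical remark preceding Fact~1.

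You have in fact been more careful than the paper. Your ``main obstacle'' is a genuine one: the bad-configuration condition only asserts dependence of $X_0$ on the \emph{block} $(X_j)_{j\in\Gamma_n\setminus\Lambda_n}$ under $\mu^X(\,\cdot\mid x_{V_n})$, and pairwise independence of $X_0$ with each individual $X_j$ does not in general imply independence from the block (think of XOR-type dependencies). The telescoping bound $\varepsilon/|\Gamma_n\setminus\Lambda_n|$ is therefore all one gets without further input, and \eqref{eq: locality concrete} gives no control on $|\Gamma_n|$. The paper's one-line proof does not address this point either; it is most naturally read as establishing the multi-site version, which is what the surrounding discussion uses. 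So your proposal matches the paper's argument, and your identification of the single-site subtlety is an observation the paper does not make explicit.
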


Indeed, by choosing $A,B$  in \eqref{sent} functions of $X$,  this is immediate from \eqref{eq: locality concrete}. The converse is not true: 
\begin{fact}  Quasi-locality of a classical restriction $\mu^X$ for some $X$ does not imply the  absence of `$X$-Measurement-Induced Entanglement'.
\end{fact}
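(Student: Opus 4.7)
The plan is to exhibit a single counterexample: take the one-dimensional cluster state $\om=|C\rangle\langle C|$ on the spin-$\tfrac12$ chain, the unique gapped, translation-invariant ground state of the frustration-free Hamiltonian $H=-\sum_{i\in\bbZ}\si_{i-1}^z\si_i^x\si_{i+1}^z$, together with the observable $X=\si^x$.

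My first step is to check that $\mu^x$ is a product measure, hence trivially Gibbsian and quasi-local. By the stabilizer formalism $\om(P)\in\{0,\pm 1\}$ for every Pauli string $P$, and $\om(P)\neq 0$ iff $\pm P$ is a product of the generators $K_i=\si_{i-1}^z\si_i^x\si_{i+1}^z$. A product $\prod_{i\in T}K_i$ carries a $\si^z$ at each site $j$ for which $|T\cap\{j-1,j+1\}|$ is odd, so obtaining a pure $\si^x$-string forces every $j$ to have an even number of neighbours in $T$; on $\bbZ$ the only such finite $T$ is empty, since for non-empty $T$ the site $\max(T)+1$ has exactly one $T$-neighbour. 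Hence $\om(\prod_{i\in S}\si_i^x)=0$ for every finite non-empty $S\subset\bbZ$, and expanding the spectral projectors $Q(x_i)=(\funit+x_i\si_i^x)/2$ yields $\mu^x(x_\La)=2^{-|\La|}$.

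For the second step I exhibit the $X$-Measurement-Induced Entanglement. Take $V_n=\{i\in\bbZ:0<|i|<n\}$, $i_n=n$, the constant configuration $x\equiv +1$, and $A=B=\si^z$; it suffices to control the even-$n$ subsequence in order to get $\limsup_n>0$. Working in a finite volume $\{-N,\ldots,N\}$ and then sending $N\to\infty$, the projected amplitude $\langle x_{V_n}|C\rangle$ factorizes into contributions from the two measured intervals. Iterated application of $\sum_b(-1)^{b(u+v)}=2\funit[u=v]$ to the positive-side Gauss sum $\sum_{b_1,\ldots,b_{n-1}}(-1)^{b_0b_1+\cdots+b_{n-1}b_n}$ produces the factor $2^{n/2}\funit[b_0=b_n]$ for even $n$; by symmetry the negative side contributes $2^{n/2}\funit[b_{-n}=b_0]$. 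Thus in $\om^{x_{V_n}}$ the three sites $-n,0,n$ are perfectly classically correlated in the $\si^z$-basis. The invariance of $|C\rangle$ under the global spin flip $\prod_i\si_i^x$ (verified by commuting the flip through the $CZ$-gates that build $|C\rangle$ from $|+\rangle^{\otimes\bbZ}$, noting that the induced $\si^z$-byproducts cancel in pairs) then forces $\om^{x_{V_n}}(\si_0^z)=\om^{x_{V_n}}(\si_n^z)=0$. The outer bonds contribute only sign factors that decouple in the reduced marginal on $\{0,n\}$ because $b_{-n}$ and $b_{n+1}$ are traced out. Hence $|\om^{x_{V_n}}(\si_0^z\si_n^z)-\om^{x_{V_n}}(\si_0^z)\om^{x_{V_n}}(\si_n^z)|=1$ for every even $n$, proving \eqref{sent}.

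The main technical nuisance is the iterated Gauss-sum bookkeeping and the verification that the limit $N\to\infty$ is harmless, but both steps reduce to elementary one-spin partial sums and sign cancellation. For odd $n$ the same procedure produces a factor $2^{(n-1)/2}(-1)^{b_0b_n}$ and the $\si^z\si^z$-correlation actually vanishes; this is immaterial since only $\limsup_n>0$ is required.
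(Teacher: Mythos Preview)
Your proof is correct and provides a genuinely different route from the paper's. The paper establishes Fact~2 via a class of finitely correlated states: MPS with $2\times 2$ matrices $A_\alpha$ proportional to unitaries, satisfying $\sum_\alpha A_\alpha^*A_\alpha=\opunit$ and generating the full matrix algebra. For such states the product property of $\mu^X$ follows from $E_{|\alpha\rangle\langle\alpha|}(\opunit)=A_\alpha A_\alpha^*\propto\opunit$, while the entanglement claim is outsourced to the reference~[W]. Your cluster state with $X=\sigma^x$ is in fact a member of this class (one checks $A^{\pm}=(A^0\pm A^1)/\sqrt{2}$ are each $\tfrac{1}{\sqrt 2}$ times a unitary and generate $X,Z$), so the example itself is not new---but your argument is.

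Where the paper invokes the MPS transfer-operator machinery and an external reference, you give a fully self-contained and elementary argument: the product property of $\mu^x$ via the stabilizer formalism (no nontrivial product of generators $K_i$ is a pure $\sigma^x$-string), and the measurement-induced entanglement via an explicit Gauss-sum reduction of the projected amplitude. This is more concrete and more transparent for this particular state. The paper's approach buys generality (a whole family of examples, including AKLT) and the stronger conclusion that \eqref{sent} holds for \emph{every} $x$ (hence LRLE), whereas you exhibit only the single configuration $x\equiv +1$---which is all that Fact~2 requires. One small remark: your symmetry argument for $\omega^{x_{V_n}}(\sigma_0^z)=0$ via the global flip $\prod_i\sigma_i^x$ is correct (indeed $\prod_i K_i=\prod_i\sigma_i^x$ lies in the stabilizer), but it is also immediate from your amplitude computation, since $|\langle b_{V_n^c}|\psi\rangle|^2$ is symmetric under $b_0\mapsto 1-b_0$.
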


The point is that in \eqref{sent} there remains extra freedom in the choice of $A$ and $B$, which do not need to be `classical' observables (commuting with $X$). We give an example below. 
However, let us first point out the difference with a related notion introduced in \cite{VMC}, namely `Long Range Localizable Entanglement' (LRLE): a state $\om$ has LRLE whenever the deviation from a product state in  \eqref{sent} is present for \emph{typical} configurations $x$. To implement this idea, one chooses some entanglement measure
of the conditioned state $\om^{x_{V_n}}$ and one averages that quantifier over $x_{V_n} \in \Om_{V_n}$ before taking $n \to \infty$. 
A somewhat surprising property, reinforcing {\bf Fact 2}, is that in case there is LRLE, there is a tendency for $\mu^X$ to be product, hence in particular local.  We do not build the framework to state this precisely but it is illustrated by our example below. 
It would also be interesting to investigate whether quantum states `typically' have or do not have `X-Measurement Induced Entanglement' for some observable $X$.  

\subsubsection{Example}  The  class of examples here includes the ground state of the AKLT model \cite{AKLT}. 
Let $\str \alpha\rangle, \al=1,\ldots,m$ be an orthonormal basis in the single-site Hilbert space $\caH\equiv\bbC^m$.
Let $A_{\alpha}$ be two $2\times 2$ matrices satisfying the following conditions
\begin{enumerate}
\item Up to multiplication with a complex number, $A_{\al}$  are unitaries. 
\item  $\sum_{\al} A^*_{\al}A_{\al} =\opunit $.
\item  The algebra generated by $A_{\al}, \al=1,\ldots,m$ is the full $2\times 2$ matrix algebra. 
\end{enumerate}
Then we define the following translation-invariant finitely correlated state \cite{FNW},
\beq \label{eq: finitely general}
\om(O_1O_2 \ldots O_{\ell}) =  \frac{1}{2}  \Tr_{\bbC^2}  \left[ E_{O_\ell}\circ    \ldots \circ E_{O_2}\circ E_{O_1}(\opunit)\right], \qquad  O_i \in \caB(\caH_i)
\eeq
where $E_{O_i}: \caB(\bbC^2) \to \caB(\bbC^2)$ is the map defined by 
\beq
E_{O_i}(D)=     V (D \otimes O) V^*, \qquad \text{with}\,\, V= \sum_{\al }A_{\al} \otimes \langle \al \str  \, \in \caB\bigl(\bbC^2 \otimes \caH_i\,,\bbC^2\bigr)
\eeq
Then, the constraints (2) and (3) above guarantee that the infinite-volume state $\om$ is a pure state with exponential decay of correlations; we refer to \cite{FNW2} for details.
We choose the observable $X= \sum_{\al} \al \str \al \rangle \langle \al \str$. 
One can now consider the conditioned state \eqref{eq: conditioned state} for a given $x$ and, using constraint (1), find that \eqref{sent} fails \emph{for every choice of $x$.}  Details of this calculation can be found in \cite{W}.    In  the language introduced above, this means that the state $\om$ has LRLE. Moreover, it has been shown that within a given class of finitely correlated states (namely those with `ancilla dimension' equal to $2$, i.e., corresponding to the fact that $A_\al$ are $2 \times 2$ matrices), this is the only example having LRLE. 
On the other hand, the classical restriction $\mu^X$ is a product measure.   To check this, it suffices to note that 
\beq
E_{\str \al \rangle \langle \al \str}(\opunit) \propto \opunit, \qquad    \text{for every $\al=1,\ldots,m$}
\eeq
and the product then property follows readily from \eqref{eq: finitely general}.

\section{High-temperature regime}\label{prs1}

In this section we give a proof of Theorem~\ref{thm: gibbs},
together with some more explicit formul{\ae} and estimates on the
classical potential $\Psi^{\be,X}$. The decisive step of our strategy, namely Proposition~\ref{prop: mu}, is based on a formulation of the problem in terms of a polymer model and on a perturbative construction by means of a high-temperature cluster expansion. We closely follow Section 6 of \cite{NR}.

% ---------------------------------------------------------------------
\subsection{Logarithm of the classical restriction}

We start by deriving explicit formul{\ae} for the logarithm of $\mu_\Lambda^{\beta,X}$, the classical restriction of $\omega_\Lambda^\beta$, see \eqref{eq: DefFinClaRes}, for any $\Lambda\Subset \bbZ^d$. In this section we mostly suppress the dependence on the chosen single-site observable $X$ and on the inverse temperature $\be$.  \\
The symbol $\tr$ is used to denote the (normalized) trace state on $\caA$, and for $W\Subset \bbZ^d$ and for configurations $x_W \in \Om_W$ we write
\begin{equation}
  \tr^{x_W}(\,\cdot\,) = \frac{\Tr_W(\,\cdot\, Q_W(x_W))}{\Tr_W(Q_W(x_W))}
\end{equation}
which is a (normalized) state on $\caA_W$. By embedding it also defines a state on the quasi-local algebra $\caA$. From \eqref{eq: DefFinClaRes} we express the distribution $\mu_\Lambda$ %, on cylinder sets specified by configurations $x_W\in\Omega_W$ 
in terms of these trace states:
\begin{equation}\label{eq: marginal}
\begin{split}
  \mu_\Lambda(x_W) &= \om_\Lambda(Q_W(x_W))
  = \frac{1}{Z_\Lambda} \Tr_\Lambda(e^{-\be H_\Lambda^\Phi} Q_W(x_W))
\\
  &= \frac{\Tr_\Lambda(Q_W(x_W))\,\tr^{x_W}(e^{-\be H_\Lambda^\Phi})}{\Tr_\Lambda(e^{-\be H_\Lambda^\Phi})}
\\
  &= f
  \frac{\tr^{x_W}(e^{-\be H_\Lambda^\Phi})}{\tr(e^{-\be H_\Lambda^\Phi})}
\end{split}
\end{equation}
The logarithm of the above finite volume partition functions can be written as a sum over local weights,
\begin{equation}\label{eq: PartFunc}
 \begin{split}
  \log\;\tr\bigl(e^{-\be H_\Lambda^\Phi}\bigr) &=  \sum_{A \subset \Lambda} w(A) \\
  \log\;\tr^{x_W}\bigl(e^{-\be H_\Lambda^\Phi}\bigr) &= \sum_{A \subset \Lambda} w^{x_W}(A)
 \end{split}
\end{equation}
where for all $\La\Subset \bbZ^d$ the weights are given as
\begin{equation}\label{eq: defweights}
 \begin{split}
  w^{}(A)&=\sum_{B\subset A} (-1)^{\lvert A\setminus B \rvert}  \log\;\tr\bigl(e^{-\be H_B^\Phi}\bigr)\\
  w^{x_W}(A)&= \sum_{B\subset A} (-1)^{\lvert A\setminus B \rvert}  \log\;\tr^{x_W}\bigl(e^{-\be H_B^\Phi}\bigr)
 \end{split}
\end{equation}
which goes by the name of `inclusion-exclusion principle', an application of more general M\"obius inversion theory. \\
Note that the weights are uniquely determined by the consistency requirement that the above equations hold for all $\La\Subset\bbZ^d$ for weights $w^{x_W}(A)$ which only depend on $A$ but not on the ambient volume $\La$. Furthermore $w^{x_W}(A)=w^{x_{W\cap A}}(A)$ and in particular we have $w^{x_W}(A)=w^{}(A)$, whenever $W\cap A=\emptyset$. We always write $w^{x_A}(A)$ instead of $w^{x_W}(A)$ if $A\subset W$.\\

\subsection{Gibbsianness -- proof of Theorem \ref{thm: gibbs}}
With the preceding definitions we can write $\mu^{}_\Lambda$ as Gibbs distributions for (finite-volume) classical potentials $\{\Psi_A^{}\}_{A\subset\Lambda}$, which are consistent for different $\La\Subset\bbZ^d$. One computes
\begin{equation}\label{eq: explicit mu}
\begin{split}
  \mu_\Lambda^{}(x_W) &= \mathrm{tr}\bigl( Q_W(x_W) \bigr)  \exp \Bigl( \sum_{\substack{A \subset \Lambda \\ A \cap W \neq \emptyset}}  \bigl[ w^{x_W}(A) - w^{}(A) \bigr] \Bigr)\\
   &=\sum_{x_{\Lambda\setminus W}\in\Omega_{\Lambda\setminus W}}\frac{1}{Z_{\Lambda}^{}} \exp \Bigl( \sum_{\substack{A \subset \Lambda}}  \Psi_A^{}(x_A) \Bigr)
\end{split}
\end{equation}
with
\begin{equation}
 \Psi^{}_A(x_A)=\left\{\begin{array}{ll}
                     w^{x_A}(A) + \log\bigl(\mathrm{tr}_i(Q_i(x_i)\bigr) & \textnormal{ for } A=\{i\}, i\in\bbZ^d \\[\baselineskip]
                     w^{x_A}(A)  & \textnormal{ else}
                    \end{array}\right.
\end{equation}
If the family $\Psi^{}=\{\Psi^{}_A\}_{A\Subset\bbZ^d}$ were a (infinite-volume) potential, i.e.~$\lVert \Psi \rVert_0<\infty$, then we could immediately conclude that any thermodynamic limit point of the $\mu_{\Lambda}^{}$ is a Gibbs distribution for $\Psi^{}$, see e.g.~\cite{Ru}. For high enough temperatures we know that there is indeed only the unique limit point $\mu^{}$ but an explicit demonstration of the convergence of \eqref{eq: explicit mu} is also contained in the proof of the next proposition.\\
Therefore, proving that $\Psi^{}$ and $w$ are potentials (for sufficiently high temperatures) will finish the proof of Theorem \ref{thm: gibbs}.
The relevant bounds on the weights are not deduced from their implicit definition given in \eqref{eq: defweights} but rather from a concrete construction obtained within the (non-commutative) Mayer-expansion formalism.

\begin{proposition}\label{prop: mu}
 For a quantum interaction $\Phi$, let $a, \be_0 > 0$ be such that
\begin{equation}\label{eq: high-T condition}
  \sum_{A \ni 0} e^{a |A|} (e^{\be_0 \|\Phi(A)\|} - 1) \leq a
\end{equation}
(which can always be achieved if $\lVert\Phi\lVert_\ka<\infty$ for a $\ka>0$) then
\begin{equation}
 \lVert w \rVert_0 \leq a\quad\text{and} \quad \lVert \Psi^{} \rVert_0 \leq a+\log(\dim{\caH})
\end{equation}
for all $\beta\leq\beta_0$. In particular $\Psi^{}$ is a potential for the Gibbs distribution $\mu^{}$ obtained as the unique thermodynamic limit of the $\mu_\Lambda^{}$. $\Psi_A$ is analytic in the open disk $\lvert\be\rvert<\beta_0$ for any $A\Subset\bbZ^d$.
\end{proposition}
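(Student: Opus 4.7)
The plan is to build a noncommutative Mayer (cluster) expansion for $\log\tr(e^{-\beta H_\Lambda^\Phi})$ and $\log\tr^{x_W}(e^{-\beta H_\Lambda^\Phi})$, to show that after grouping by support this expansion reproduces the M\"obius-inverted weights $w(A)$ and $w^{x_A}(A)$ of \eqref{eq: defweights}, and then to apply a Koteck\'y--Preiss-type convergence criterion so that the hypothesis \eqref{eq: high-T condition} delivers the stated norm bounds directly.

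First I would expand via the Dyson--Taylor series
\[
\tr\bigl(e^{-\beta H_\Lambda^\Phi}\bigr)=\sum_{n\geq 0}\frac{(-\beta)^n}{n!}\sum_{A_1,\ldots,A_n\subset\Lambda}\tr\bigl(\Phi(A_1)\cdots\Phi(A_n)\bigr),
\]
and likewise for $\tr^{x_W}$. To each ordered sequence $(A_1,\ldots,A_n)$ one attaches the intersection graph on $\{1,\ldots,n\}$ (edges joining $i\neq j$ when $A_i\cap A_j\neq\emptyset$) together with its associated Ursell coefficient $\phi^T$. The standard abstract polymer identity then rewrites the logarithm as a sum over sequences whose intersection graph is connected; it survives in the noncommutative setting because factors indexed by distinct connected components act on disjoint sites, hence commute, and both $\tr$ and $\tr^{x_W}$ factorize across such tensor products. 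Grouping the resulting series by the support $A=A_1\cup\cdots\cup A_n$ and invoking uniqueness of M\"obius inversion in \eqref{eq: defweights}, the partial sum supported on $A$ is identified with $w(A)$, respectively $w^{x_A}(A)$.

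The bounds then follow from a standard tree-graph / Koteck\'y--Preiss estimate: using $|\tr(\Phi(A_1)\cdots\Phi(A_n))|\leq\prod_i\|\Phi(A_i)\|$ (and the same bound for the state $\tr^{x_W}$), the combinatorial factor $|\phi^T|$ is absorbed against $\prod_i(e^{\beta\|\Phi(A_i)\|}-1)$, and \eqref{eq: high-T condition} yields
\[
\sum_{A\ni 0}e^{a|A|}|w(A)|\leq a,\qquad \sum_{A\ni 0}e^{a|A|}\sup_{x_A\in\Omega_A}|w^{x_A}(A)|\leq a,
\]
uniformly in $\Lambda$ and for all $|\beta|\leq\beta_0$. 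Adding the one-site correction $|\log\tr_i Q_i(x_i)|\leq\log\dim\caH$ appearing in $\Psi_{\{i\}}$ gives $\|\Psi\|_0\leq a+\log\dim\caH$. The same uniform absolute convergence shows that the ratio in \eqref{eq: explicit mu} has a pointwise $\Lambda\nearrow\bbZ^d$ limit, that this limit is the DLR specification for the infinite-volume potential $\Psi$, and that $\beta\mapsto\Psi_A(\beta)$ is analytic on the disk $|\beta|<\beta_0$ since each cluster term is a polynomial in $\beta$ and the series converges normally on compact subsets.

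The main obstacle is not the high-temperature combinatorics itself, which is textbook, but rather the careful noncommutative bookkeeping: one must verify that the formal Mayer identity (really an identity about commuting generating functions) lifts to the operator level, exploiting that $\tr(\Phi(A_1)\cdots\Phi(A_n))$ factorizes across connected components of the intersection graph precisely because clusters of factors from distinct components live on disjoint tensor slots. Once this step is granted, which is carried out in detail in Section~6 of \cite{NR} whose strategy we follow, the rest of the proof is a direct application of the classical convergence argument for polymer gases.
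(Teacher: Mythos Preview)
Your proposal is correct and follows essentially the same approach as the paper: both invoke the noncommutative Mayer expansion of \cite{NR}, using that $\tr$ and $\tr^{x_W}$ are product states so that the factorization across connected components of the intersection graph goes through, and then applying the Koteck\'y--Preiss criterion \eqref{eq: high-T condition} to bound the weights. The paper's own proof is in fact little more than a pointer to Proposition~6.25 and Section~6.3 of \cite{NR}, whereas you have usefully sketched what that argument actually contains; one small remark is that the proposition only claims $\|w\|_0\leq a$, not the stronger $\sum_{A\ni 0}e^{a|A|}|w(A)|\leq a$ you wrote, though the latter is indeed what the Koteck\'y--Preiss machinery naturally delivers.
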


\begin{proof}

The weights $w$ and $w^{x_W}$, which were used to express the finite volume partition functions with respect to the states $\mathrm{tr}$ and $\mathrm{tr}^{x_W}$, see \eqref{eq: PartFunc}, can be expressed more explicitly analoguously as it was done in \cite{NR} for states defined in eq.~(6.2) of this reference. There, the $\La$-uniform construction does not depend on the explicit form of the underlying state apart from the product property which is also shared by $\mathrm{tr}$ and $\mathrm{tr}^{x_W}$. In particular, the analogue of Proposition 6.25 in \cite{NR} remains valid. It is subject to condition (6.22) in \cite{NR} which is literally equal to our assumption \eqref{eq: high-T condition}. Transferred to our setting and accounting for the fact that the state $\mathrm{tr}^{x_W}$ is not translation invariant the result of Proposition 6.25 reads $\lVert w\rVert_0<a$ and 
\begin{equation}
 \sup_{i\in\bbZ^d}\sum_{ A\Subset\bbZ^d;\, i\in A}\bigl\lvert w^{x_W}(A)\bigr\rvert<a
\end{equation}
for all $x_W\in\Omega_W$, $W\Subset\bbZ^d$, and $\beta\leq\beta_0$. Therefore \eqref{eq: explicit mu} is convergent and the measures $\mu^{}_\La$ have a unique thermodynamic limit. Furthermore it follows that
\begin{equation}
 \sup_{i\in\bbZ^d}\sum_{ A\subset W; \,i\in A}\bigl\lvert w^{x_A}(A)\bigr\rvert<a
\end{equation}
for all $x\in\Omega$, $W\Subset\bbZ^d$, which shows that $\Psi$ is a potential.\\
Analyticity of each $\Psi_A$ follows from the arguments given in section 6.3.~of \cite{NR}.
\end{proof}

%%
%
%      BEGIN    ----------------   OF ---------------------   LOW ----------------------------------------  TEMPERATURE-----------------------------------------------------------------------------------------------------------------
%
%

\allowdisplaybreaks

\section{Low temperature regime}\label{prs2}

Here we consider low temperatures and our perturbation strategy goes via an expansion around the ground state. Such expansions are familiar in the framework of quantum Pirogov-Sinai theory, \cite{BKU,DFF}; see also~\cite{Ya} for an alternative approach at zero temperature.

First, we show that the classical restrictions $\mu_\Lambda^{\beta,X}$, $\Lambda\Subset \bbZ^d$, are Gibbs distributions with a classical potential $\Psi_\Lambda^{\beta,X}$, which then by its explicit construction allows to control the thermodynamic limit as well as the limit of zero temperature. 
To lighten the notation we again often do not indicate the dependence on the volume $\Lambda\Subset\bbZ^d$, on the observable $X\in\caB(\caH)$, and on the inverse temperature $\beta>0$ whenever it does not inflict confusion. We also introduce the more convenient abbreviations $H_0\equiv H_\Lambda^{\Phi_0}$ for the local Hamiltonian, $V\equiv H_\Lambda^{\Upsilon}$ for its perturbation, and $H\equiv H_\Lambda^\Phi$ for the sum of both.\\
% ---------------------------------------------------------------
\subsection{Polymer model representation of the finite volume classical restriction}\label{sub: PolymerModel}

In finite volumes the finite-dimensional matrix exponential $\exp(-\beta H)$ can be written as its norm-convergent Dyson series:
\begin{equation}
\begin{split}
 & \e^{-\beta(H_0+V)}\\ 
 & = \e^{-\beta H_0} + \sum_{n\geq 1} \int_{\caS_n} \mathrm{d}t_1\dots\d t_n \, \e^{-(\beta-t_n) H_0} V \dots V \e^{-(t_2-t_1) H_0} V \e^{-t_1 H_0}\\
 &=\e^{-\beta H_0} + \sum_{n\geq 1}\sum_{\substack{S_0,\dots,S_n\\ \subset \Lambda}}\sum_{\substack{B_1,\dots,B_n\\ \subset \Lambda}} \int_{\caS_n} \mathrm{d}t_1\dots\d t_n \, \bigl(\caP_\Lambda(S_n)\e^{-(\beta-t_n) H_0} \Upsilon(B_n)\bigr) \dots \\
 &\hspace{.4 \textwidth}\dots \bigl(\caP_\Lambda(S_1)\e^{-(t_2-t_1) H_0}\Upsilon(B_1)\bigr)\e^{-t_1 H_0}\caP_\Lambda(S_0)\\[-.5\baselineskip]
 &
 \end{split}
\end{equation}
where $\bst_n=(t_1,\dots,t_n)$ and where we integrate over the simplex
\begin{equation}\caS_{n}:=\{\bst_n\in [0,\beta]^n\,|\,0< t_1 < t_2 <\dots <t_n< \be \}\end{equation} In the third line we have expanded the individual terms of the series by inserting the decomposition $\opunit=\sum_{S\subset \Lambda}\caP_\Lambda(S)$. %we have written out the interaction $\Psi$ explicitly as the sum of local operators.\\

We introduce new notation to reorganize the above expansion. Let $\bsS_n=(S_0,\dots,S_n)$ and $\bsB_n=(B_1,\dots,B_n)$, then, for $n\geq 0$, we define the set of \textit{diagrams of $n$ interactions} $\frS_n$. For $n\geq 1$,
\begin{equation}
\begin{split}
\frS_n:=\bigl\{\bigl(\bst_n,\bsS_n,\bsB_n\bigr)&\in  \caS_{n}\times \wp(\Lambda)^{n+1}\times \wp'(\Lambda)^n\,\bigl| \\
&S_k\setminus B_{k}=S_{k-1}\setminus B_{k},\,k=1,\dots,n \bigr\}\vphantom{\int}
\end{split}
\end{equation}
where $\wp(\Lambda)$ ($\wp'(\Lambda)$) is the power set of $\Lambda$ (without the empty set). We furnish these sets with the obvious structure of a measurable space $(\frS_n,\caF_n)$, where $\caF_n$ is the product of the Lebesgue measurable sets within the simplex and the discrete $\sigma$-algebra on the finite set $\wp(\La)^{2n+1}$. We define a finite complex measure $W_n$ on $\mathfrak{S}_n$ which is determined by a density with respect to the Lebesgue measure on $\caS_{n}$: still for $n\geq 1$ we set
\begin{equation}\label{eq: density}
\begin{split}
 &\rho_n(\frX)\;\mathrm{d}t_1\dots\mathrm{d}t_n\\
 &:=\Tr\bigl( \caP(\emptyset)Q(x) \bigr)^{-1} \,\Tr\bigl(\caP(S_0)Q(x) \bigl(\caP(S_n)\e^{-(\beta-t_n) H_0}\Upsilon({B_n})\bigr)\dots \\
 &\hphantom{:=\Tr\bigl( \caP(\emptyset)Q(x) \bigr)^{-1} \,} \dots \bigl(\caP(S_1)\e^{-(t_2-t_1)H_0}\Upsilon({B_1})\bigr) \e^{-t_1 H_0} \caP(S_0) \bigr)\;\mathrm{d}t_1\dots\mathrm{d}t_n
\end{split}
\end{equation}
where $\frX \in \frS_n$. Recall that $\Upsilon(B)=0$ whenever $B$ is not a connected set. Throughout the construction one can always restrict to those $\boldsymbol{B}_n$ consisting of connected sets.
The denominator does not vanish 
and the above density is well-defined by Assumption 2. Also note that the density would vanish if the condition on $\bsS_n$ and $\bsB_n$ in the definition of $\frS_n$ were not satisfied because 
\begin{equation}
\caP(S')\e^{-t H_0}\Upsilon(B) \caP(S)=0 \quad \textnormal{if}\quad S'\setminus B\neq  S\setminus B
\end{equation}
For $n=0$ we decide that $\caS_{0}=\wp(\Lambda)^0=\emptyset$ and the discrete measure on $\frS_0=\wp(\Lambda)$ to be determined through
\begin{equation}\label{eq: density0}
 W_0(\frX)=\left\{
		\begin{array}{ll}
			0, & S_0=\emptyset \\
			\rho_0(\frX):=\Tr\bigl( \caP(\emptyset)Q(x) \bigr)^{-1}\,\Tr\bigl( \caP(S_0)Q(x)\e^{-\beta H_0} \bigr), & \mathrm{else}
		\end{array}\right.
\end{equation}
The reason for manually removing the weight on $\frX=\emptyset\in\frS_0$, the \textit{empty diagram}, will become apparent in combinatorial constructions to come. Recall that
\begin{equation}
\Tr\bigl( \caP(\emptyset)Q(x)\e^{-\beta H_0} \bigr)=\Tr\bigl( \caP(\emptyset)Q(x)\bigr)
\end{equation}
and thus 
\begin{equation}\label{eq: NonNormalClassProj}
 \Tr\bigl(Q(x)\e^{-\beta H}\bigr)=  \Tr\bigl( \caP(\emptyset)Q(x) \bigr) \Bigl[1+\sum_{n= 0}^\infty W_n\bigl( \frS_n \bigr)\Bigr]
\end{equation}
The logarithm of the left-hand side will give the potential for the classical restriction. 

\subsubsection{Graphical representation and factorization into polymers}\label{sub: GraphRep}
\begin{figure}[h]
 \def\svgwidth{.9 \textwidth}
 \begingroup%
  \makeatletter%
  \providecommand\color[2][]{%
    \errmessage{(Inkscape) Color is used for the text in Inkscape, but the package 'color.sty' is not loaded}%
    \renewcommand\color[2][]{}%
  }%
  \providecommand\transparent[1]{%
    \errmessage{(Inkscape) Transparency is used (non-zero) for the text in Inkscape, but the package 'transparent.sty' is not loaded}%
    \renewcommand\transparent[1]{}%
  }%
  \providecommand\rotatebox[2]{#2}%
  \ifx\svgwidth\undefined%
    \setlength{\unitlength}{560.22265625bp}%
    \ifx\svgscale\undefined%
      \relax%
    \else%
      \setlength{\unitlength}{\unitlength * \real{\svgscale}}%
    \fi%
  \else%
    \setlength{\unitlength}{\svgwidth}%
  \fi%
  \global\let\svgwidth\undefined%
  \global\let\svgscale\undefined%
  \makeatother%
  \begin{picture}(1,0.59585573)%
    \put(0,0){\includegraphics[width=\unitlength]{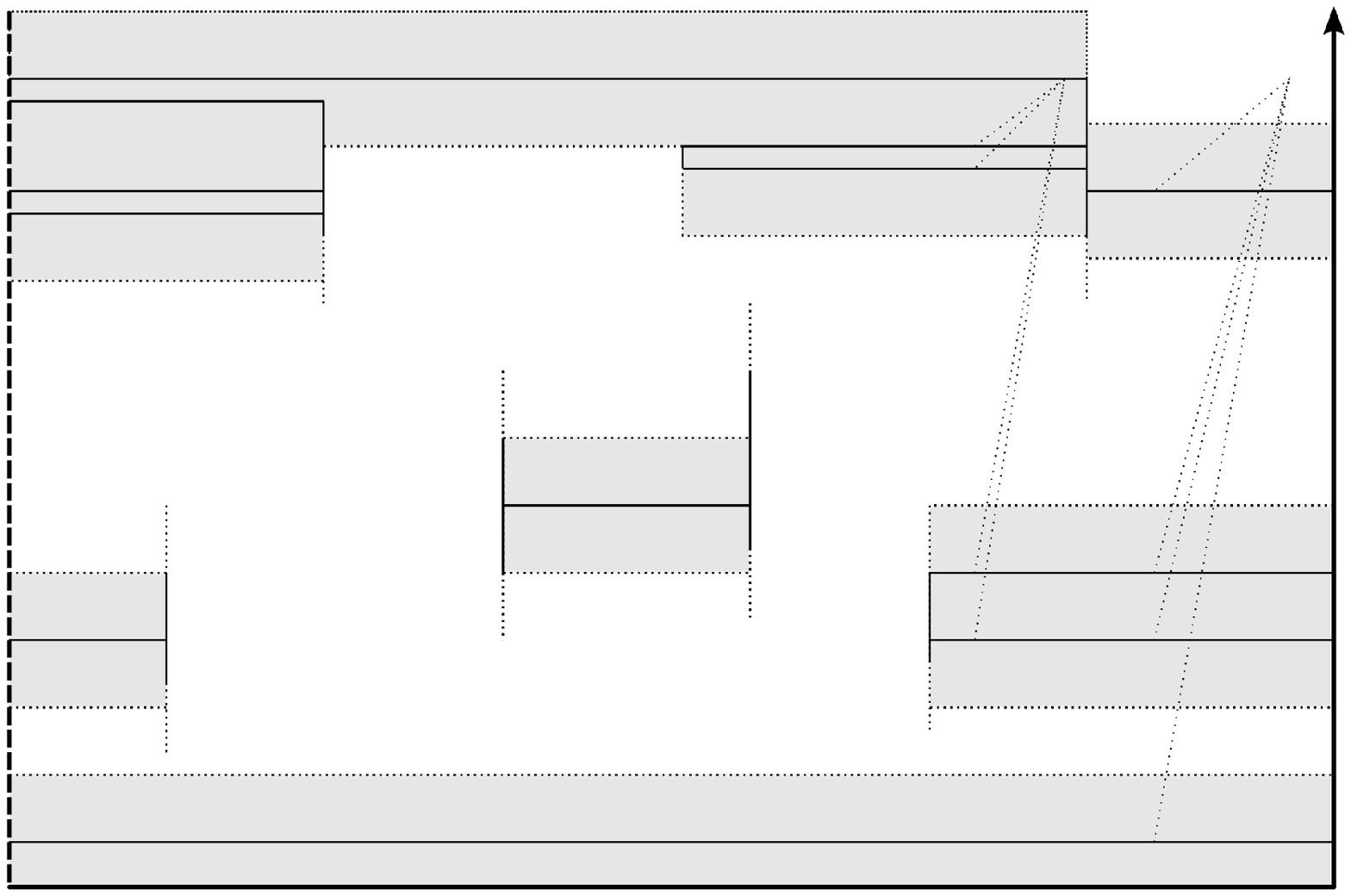}}%
    \put(0.88894064,0.52014336){\color[rgb]{0,0,0}\makebox(0,0)[lb]{\smash{$\Lambda$}}}%
    \put(0.63189998,0.00606204){\color[rgb]{0,0,0}\makebox(0,0)[lb]{\smash{$t_2$}}}%
    \put(0.74614028,0.00606204){\color[rgb]{0,0,0}\makebox(0,0)[lb]{\smash{$t_1$}}}%
    \put(0.50337966,0.00606204){\color[rgb]{0,0,0}\makebox(0,0)[lb]{\smash{$t_3$}}}%
    \put(0.46053955,0.00606204){\color[rgb]{0,0,0}\makebox(0,0)[lb]{\smash{$t_4$}}}%
    \put(0.11781867,0.00606204){\color[rgb]{0,0,0}\makebox(0,0)[lb]{\smash{$t_7$}}}%
    \put(0.23205896,0.00606204){\color[rgb]{0,0,0}\makebox(0,0)[lb]{\smash{$t_6$}}}%
    \put(0.33201922,0.00606204){\color[rgb]{0,0,0}\makebox(0,0)[lb]{\smash{$t_5$}}}%
    \put(0.63189998,0.16314244){\color[rgb]{0,0,0}\makebox(0,0)[lb]{\smash{$B_2$}}}%
    \put(0.73186024,0.4487432){\color[rgb]{0,0,0}\makebox(0,0)[lb]{\smash{$B_1$}}}%
    \put(0.51765969,0.24882266){\color[rgb]{0,0,0}\makebox(0,0)[lb]{\smash{$B_3$}}}%
    \put(0.47481958,0.50586332){\color[rgb]{0,0,0}\makebox(0,0)[lb]{\smash{$B_4$}}}%
    \put(0.14637874,0.16314244){\color[rgb]{0,0,0}\makebox(0,0)[lb]{\smash{$B_7$}}}%
    \put(0.246339,0.44874318){\color[rgb]{0,0,0}\makebox(0,0)[lb]{\smash{$B_6$}}}%
    \put(0.36057929,0.24882266){\color[rgb]{0,0,0}\makebox(0,0)[lb]{\smash{$B_5$}}}%
    \put(0.8318205,0.56298347){\color[rgb]{0,0,0}\makebox(0,0)[lb]{\smash{$S_0$}}}%
    \put(0.68902013,0.56298347){\color[rgb]{0,0,0}\makebox(0,0)[lb]{\smash{$S_1$}}}%
    \put(0.87466061,0.00606204){\color[rgb]{0,0,0}\makebox(0,0)[lb]{\smash{$0$}}}%
    \put(-0.00213364,0.00606204){\color[rgb]{0,0,0}\makebox(0,0)[lb]{\smash{$\beta$}}}%
  \end{picture}%
\endgroup%
 \caption{A sample diagram $\frX\in \frS_n$ for $n=7$ which is composed of $4$ polymers. \\ The fully-drawn horizontal segments correspond to the sets $S_0,\dots,S_n$. They are allowed to start or end only at fully drawn vertical segments, each corresponding to a (connected) \textit{interaction set} $B_i$ at time $t_i$, or at the boundaries of the diagram. }\label{fig: SampleDiagram}
\end{figure} 

As the name indicates we associate each $\frX \in \frS_n$ with a diagram `living on' $\Lambda\times [0,\beta]$, see Fig.~\ref{fig: SampleDiagram} (there $\Lambda\subset\bbZ$) for the details.

We refer to elements in $\La$ as \textit{spatial} points and to elements in $[0,\beta]$ as \textit{times}. The union of the vertical and horizontal segments constituting a diagram (fully drawn in Fig.~\ref{fig: SampleDiagram}) is denoted by
\begin{equation}
 \mathrm{Dom}(\frX):= \Bigl( \bigcup_{k=0}^n S_k \times [t_k,t_{k+1}] \Bigr) \cup \Bigl( \bigcup_{k=0}^n B_k \times t_k \Bigr) 
\end{equation}
where we set $t_0=0$, $t_{n+1}=\be$ and 
\begin{equation}
 \mathrm{Dom}_r(\frX):= \bigl\{ (z',t) \in \Lambda\times [0,\beta]\,\bigl|\, \textnormal{there is }(z,t)\in \mathrm{Dom}(\frX), \, \text{dist}(z,z')< r \bigr\}
\end{equation}
indicates the \textit{space-time volume} within the spatial interaction range $r$ to this domain (the shaded areas and dashed segments in  Fig.~\ref{fig: SampleDiagram}).\\
We say that two diagrams $\frX\in \frS_n$, $\frX'\in \frS_m$ are \emph{adjacent}, $\frX\leftrightarrow\frX'$, if  $\mathrm{Dom}_r(\frX)\cap\mathrm{Dom}_r(\frX')\neq\emptyset$; otherwise we write $\frX\nleftrightarrow\frX'$. A given diagram $\frX\in \frS_n$ is called a \textit{polymer} if there are no two diagrams $\frZ\in \frS_l$, $\frZ'\in \frS_m$, $(l+m)=n$, so that $\mathrm{Dom}(\frX)=\mathrm{Dom} (\frZ)\cup \mathrm{Dom}(\frZ')$ and $\frZ\nleftrightarrow\frZ'$. The set of polymers of $n$ interactions is denoted by $\frP_n$. Every diagram $\frX_n$ has a unique decomposition into polymers $\{\frp_\alpha\}$, $\frp_\alpha=(\bst_n^{\alpha},\bsS_n^{\alpha},\bsB_n^{\alpha}) \in \frP_{n(\alpha)}$, where $\alpha$ runs over a finite index set and $\sum _\alpha n(\alpha)=n$. For a given diagram $\frX \in \frS_n$, we denote with $\caR(\frX):=S_n\cup S_0$ its so-called \textit{root set}.

We have the following factorization and locality properties:
\begin{lemma}\label{lem}
 Let $\frX =(\bst_n,\bsB_n,\bsS_n)\in \frS_n$, $n\geq 0$, be a diagram with polymer decomposition $\{\frp_\alpha\}$, $\frp_\alpha \in \frP_{n(\alpha)}$, then the density factorizes according to
 \begin{equation}
  \rho_{n}^{}(\frX)=\prod_\alpha \rho_{n(\alpha)}^{}(\frp_\alpha)
 \end{equation}
 Moreover $\rho_{n}^{}(\frX)$ does not depend on the volume $\Lambda\subset\bbZ$, assuming of course that $\mathrm{Dom}_r(\frX)\subset \Lambda\times [0,\beta]$, and it is independent of $x_i\in\Omega_{i}$ whenever $i\notin \caR(\frX)$.
\end{lemma}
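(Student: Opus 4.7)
The plan is to factor the trace defining $\rho_n(\frX)$ as a tensor product indexed by the polymers, using (i) the block-diagonal structure of $H_0^\Lambda$ with respect to the projectors $\{\caP_\Lambda(S)\}_{S\subset\Lambda}$ guaranteed by Assumption 1(1), and (ii) the finite range $r$ of $\Phi_0$ together with Assumption 1(2). Once this is in place, both claims of the lemma follow by routine bookkeeping.

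Denote by $V_\alpha$ the spatial support of polymer $\frp_\alpha$ and by $\tilde V_\alpha$ its closed $r$-neighborhood in $\Lambda$; non-adjacency of the polymers makes the $\tilde V_\alpha$ pairwise disjoint. Set $V=\bigsqcup_\alpha V_\alpha$ and, for $S\subset V$, $S^\alpha=S\cap V_\alpha$. All $S_k$ occurring in $\frX$ lie in $V$, so
$$\caP_\Lambda(S_k)=\bigotimes_\alpha \caP_{\tilde V_\alpha}(S_k^\alpha)\otimes \caP^{\Lambda\setminus\tilde V},$$
$Q(x)$ factors sitewise, and each $\Upsilon(B_k)$ is supported in a single $V_{\alpha(k)}$. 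Using Assumption 1(1), $\caP_\Lambda(S_k)\,e^{-\Delta t_k H_0^\Lambda}=e^{-\Delta t_k H_0^{(S_k)}}\caP_\Lambda(S_k)$, where $H_0^{(S)}$ denotes the restriction of $H_0^\Lambda$ to its invariant subspace $\mathrm{Im}\,\caP_\Lambda(S)$.

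The key step is the local decomposition
$$H_0^\Lambda\big|_{\mathrm{Im}\,\caP_\Lambda(S)}=\sum_\alpha K^\alpha(S^\alpha)\qquad\text{for }S\subset V,$$
where each $K^\alpha(S^\alpha)$ acts only on $\caH_{\tilde V_\alpha}$, is independent of $\Lambda$, and satisfies $K^\alpha(\emptyset)=0$. To prove this, I would split $\sum_A \Phi_0(A)$ into (i) terms with $A\cap V_\alpha\neq\emptyset$, for which the range bound forces $A\subset\tilde V_\alpha$ and gives a local contribution $\tilde K^\alpha(S^\alpha)$, and (ii) terms with $A\cap V=\emptyset$, all of whose sites are frozen to $|\caP\rangle$ on $\mathrm{Im}\,\caP_\Lambda(S)$ and which therefore contribute a scalar $c_A\caP_\Lambda(S)$ with $c_A=\langle\caP|^{\otimes A}\Phi_0(A)|\caP\rangle^{\otimes A}$. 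Applying Assumption 1(2) both in $\Lambda$ and in each $\tilde V_\alpha$ expresses the total scalar contribution in a $\Lambda$-independent way and absorbs it into the $K^\alpha$'s so that $K^\alpha(\emptyset)=0$. Exponentiation gives $e^{-\Delta t H_0^{(S)}}=\prod_\alpha e^{-\Delta t K^\alpha(S^\alpha)}$ on $\mathrm{Im}\,\caP_\Lambda(S)$.

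Substituting these factorizations into the defining trace for $\rho_n(\frX)$, the operator inside splits as $\bigl(\bigotimes_\alpha \Delta_\alpha\bigr)\otimes\caP^{\Lambda\setminus\tilde V}Q(x)\caP^{\Lambda\setminus\tilde V}$, where $\Delta_\alpha$ acts on $\caH_{\tilde V_\alpha}$ and coincides with the operator appearing in the numerator of $\rho_{n(\alpha)}(\frp_\alpha)$; this uses that $S^\alpha$ is constant between interactions of polymers $\beta\neq\alpha$ (by the $\frS_n$-constraint $S_k\setminus B_k=S_{k-1}\setminus B_k$, since such $B_k$ are disjoint from $V_\alpha$), so consecutive $K^\alpha$-exponentials coalesce into a single exponential on the gap between consecutive interactions of $\frp_\alpha$. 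The normalization $\Tr(\caP_\Lambda(\emptyset)Q(x))=\prod_i\Tr(\caP_i Q_i(x_i))$ factors sitewise; its $\Lambda\setminus\tilde V$-piece cancels the trace of the corresponding factor in the numerator, and dividing yields $\rho_n(\frX)=\prod_\alpha \rho_{n(\alpha)}(\frp_\alpha)$. Volume independence follows because the right-hand side makes no further reference to $\Lambda$; independence of $x_i$ for $i\notin\caR(\frX)=S_0\cup S_n$ follows because any such site is unexcited at both endpoints $t=0,\beta$, so its $Q_i(x_i)$ sits between two $\caP_i$'s contributing a factor $\Tr(\caP_i Q_i(x_i))$ that cancels against the normalization. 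I expect the local decomposition of $H_0^\Lambda$ in the third paragraph to be the main technical obstacle; the rest is tensor-product bookkeeping.
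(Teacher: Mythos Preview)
Your argument has a genuine gap at the very first step: the claim that ``non-adjacency of the polymers makes the $\tilde V_\alpha$ pairwise disjoint'' is false. Adjacency is defined through the \emph{space-time} domains $\mathrm{Dom}_r(\cdot)\subset\Lambda\times[0,\beta]$, not through spatial projections. Two polymers can be non-adjacent while occupying the same spatial sites, provided they live in disjoint time intervals (e.g.\ a small loop near site $0$ around time $t=1$ and another around time $t=10$). In that case $V_\alpha=V_\beta$, the sets $\tilde V_\alpha$ coincide, and your tensor decomposition $\bigotimes_\alpha \Delta_\alpha$ over $\caH_{\tilde V_\alpha}$ simply does not exist. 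Your entire scheme---the local splitting of $H_0^\Lambda$, the coalescing of exponentials, the tensor factorization of the trace---is predicated on a spatial tensor structure indexed by polymers, so it only treats the spatially separated case.

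The paper handles temporal separation by a different mechanism, which you do not invoke for the factorization: the projector $\caP$ is \emph{one-dimensional}. If polymer $\frp_\alpha$ sits entirely before $\frp_\beta$ in time on some overlapping spatial region, then in the intermediate time window every site of that region is projected to the rank-one $\caP_i$. Writing the trace as a sum over a product basis with $b_i(0)$ the (unique up to phase) vector in $\mathrm{Im}\,\caP_i$, one sees that the operator string factors through the rank-one state $\bigotimes_i b_i(0)$ at that intermediate time, decoupling the two polymers. This is exactly the content of the paper's remark that one-dimensionality of $\caP$ ``eliminates non-commutativity of polymers separated in time.'' You do use the rank-one property, but only at the very end for the $x_i$-independence claim; it is needed earlier and more essentially for the factorization itself. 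Once you incorporate it, the bookkeeping you outline for spatially separated polymers, together with this rank-one cut for temporally separated ones, gives the result.
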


\begin{proof}
 In case of spatial separation of two diagrams the factorization property simply follows from the locality of the involved operators, the finite range $r$ of the unperturbed potential $\Phi_0$ and the fact that the trace of a tensor product of operators factorizes in the same way. For the other case of separation in time recall that $\caP$ is a one-dimensional projection, which eliminates non-commutativity of polymers separated in time. More precisely, let us choose a product basis of $\caH_A$, $A\subset\La$, denoted by
 \begin{equation}
 \bigl\{b(l_A)\bigr\}:=\bigl\{{\textstyle\bigotimes}_{i\in A}\, b_i(l_i)\bigr\}, \quad l_A=(l_i)_{i\in A}, \quad l_i=0,\dots,m-1 
 \end{equation}
 The unperturbed ground state is denoted by $b_i(0)=\caP_i b_i(0)$, then
 \begin{equation}
  \begin{split}
   &\rho_{n}(\frX)\\&=\prod_{i\in \caR(\frX)}\bigl\langle b_i(0)\,,\,Q_i(x_i) b_i(0) \bigr\rangle_i^{-1}\\    
   &\times \sum_{\substack{(l_i), i\in{S_0}\\l_i\neq 0}} \Bigl\langle b( l_{\La\setminus S_0}\equiv 0) \otimes  b(l_{S_0})\,,\, \bigl(\otimes_{i\in \caR(\frX)}Q_i(x_i)\otimes \opunit_{\Lambda\setminus \caR(\frX)}\bigr)\\[-6mm] &\hspace{.2\textwidth}\bigl(\caP(S_n)\e^{-(\beta-t_n) H_0}\Upsilon(B_n)\bigr)\dots \bigl(\caP(S_1)\e^{-(t_2-t_1)H_0}\Upsilon(B_1)\bigr) \e^{-t_1H_0}  \\ 
   &\hspace{.5 \textwidth}b( l_{\La\setminus S_0}\equiv 0) \otimes  b(l_{S_0}) \Bigr\rangle \\
   &=\prod_\alpha \prod_{i\in \caR(\frp(\al))}\bigl\langle b_i(0)\,,\,Q_i(x_i) b_i(0) \bigr\rangle_i^{-1}\\    
   &\times  \sum_{\substack{(l_i), i\in{S_0^{\al}}\\l_i\neq 0}} \Bigl\langle  b(l_{\La \setminus S_0^{\alpha}}\equiv 0) \otimes b(l_{S_0^{\alpha}})\,,\, \bigl(\otimes_{i\in \caR(\frp_\alpha)}Q_i(x_i)\otimes \opunit_{\Lambda\setminus \caR(\frp_\alpha)}\bigr)\\[-1.1\baselineskip] 
   &\hspace{.2\textwidth}\bigl(\caP(S_{n(\alpha)}^{\alpha})\e^{-(\beta-t_{n(\alpha)}^{\alpha}) H_0}\Upsilon(B_{n(\alpha)}^{\alpha})\bigr)\dots \bigl(\caP(S_1^{\alpha})\e^{-(t_2^{\alpha}-t_1^{\alpha})H_0}\Upsilon(B_1^{\alpha})\bigr) \vphantom{int}   \\ 
   &\hspace{.5 \textwidth}\e^{-t_1 H_0}\;b(l_{\La \setminus S_0^{\alpha}}\equiv 0) \otimes b(l_{S_0^{\alpha}})\vphantom{int} \Bigr\rangle
  \end{split}
 \end{equation}
The assertion that the density $\rho_n(\frX)$ is independent of the volume, in which the diagram $\frX$ is embedded, and of the local configuration $x_i$ at $i\notin \caR(\frX)$ is evident from the above expression.

\end{proof}

Let us denote with 
\begin{equation} 
 \left( \frP_\Lambda^\beta= \right)\;\frP:=\bigcup_{n\geq 0}\frP_n  
\end{equation}
the disjoint union of the set of polymers with $n$ interactions and furnish this set with the $\sigma$-algebra $\caF^\frp$ generated by $\bigcup_{n\geq 0}\caF_n^\frp$, where  $\caF_n^{\frp}$ is the $\sigma$-algebra induced by $\caF_n$ on the subset $\frP_n\subset \frS_n$.  
 If $\sum_{n\geq 1}|W_n|(\frP_n)<\infty$, $|W_n|$ the variation, as will be shown in Prop.~\ref{prop: KoteckyPreis}, then there is a complex measure $W$ on $(\frP, \caF^\frp)$ with finite total variation $|W|(\frP)<\infty$, such that $W=W_n$, on $\frP_n$. With this in mind we abbreviate $\sum_n\int\mathrm{d}W_n$ with $\int\mathrm{d}W$ from now on. Let $\chi[\cdot]$ denote the indicator function. Another consequence of the future Proposition~\ref{prop: KoteckyPreis} is that
 \begin{equation}\label{eq: summabilityW}
  \sum_{N\geq 0}\frac{1}{N!}\int_\frP \mathrm{d}|W|(\frp_1)\dots \int_\frP \mathrm{d}|W|(\frp_N)\prod_{1\leq i<j\leq N}\chi[\frp_i \nleftrightarrow \frp_j]<\infty
 \end{equation}
so that the factorization property of Lemma \ref{lem} allows to do the following reordering, here called \emph{polymer expansion},
\begin{equation}\label{eq: polymerexpansion}
 \sum_{n= 0}^\infty W_n(\frS_n)=\sum_{N= 1}^\infty\frac{1}{N!}\int_\frP \mathrm{d}W(\frp_1)\dots \int_\frP \mathrm{d}W(\frp_N)\prod_{1\leq i<j\leq N}\chi[\frp_i \nleftrightarrow \frp_j]
\end{equation}
On the left-hand side the sum is over diagrams with $n$ interactions and on the right-hand side over diagrams composed of $N$ polymers which are pair-wise non-adjacent. On the right-hand side the combined integration additionally includes diagrams where some of the times $\bst_n=(t_1,\dots,t_n)$ coincide, which is however only a contribution of measure zero. 

\subsection{Koteck\'y-Preis Criterion}\label{sec: KoteckyPreis}

In the following we prove a `Koteck\'y-Preis criterion' for our polymer model (Proposition \ref{prop: KoteckyPreis}), which allows to express~\eqref{eq: polymerexpansion} as an exponential of an integral over weighted \textit{clusters}, \ie sets of polymers which form connected graphs with respect to the graph structure given the adjacency relation `$\leftrightarrow$'. The underlying combinatorics go back to \cite{KP} and the generalization used here can be reviewed in \cite{Ue}.

\subsubsection{Decomposition of the polymers into constituents}

We decompose the polymers into \emph{constituents} that have a simpler structure. They can be seen as the vertices of yet another polymer model. \\
Denote by
\begin{equation}
 \frK:=(\wp'(\Lambda) \times [0,\beta])\cup \Lambda \equiv \frK_\mathrm{v}\cup\frK_\mathrm{h}
\end{equation}
the set of these constituents. Elements from the first part of the disjoint union may be thought of as the (connected) vertical segments in our diagrammatic description and elements from $\frK_{\mathrm{h}}$ are represented by horizontal segments at $i \in \Lambda$, which connect both boundaries, see also Fig.~\ref{fig: PolConstit}.  In this sense we define the (extended) domain of constituents $\frx\in\frK$ by
\begin{equation}
 \mathrm{Dom}_r(\frx):=\left\{\begin{array}{ll}
                \bigl\{(z',t)\,\bigl|\,\text{dist}(z',z)<r,\,z\in B  \bigr\}&\;\mathrm{if}\; \frx=(B,t)\in\frK_{\mathrm{v}} \\[1mm]
                \bigl\{ z'\in\Lambda\,\bigl|\,\text{dist}(i,z')<r \bigr\}\times[0,\beta] &\;\mathrm{if}\;\frx=i\in\frK_{\mathrm{h}}
               \end{array}\right. 
\end{equation}
For a given polymer $\frp\in \frP$ and constituent $\frx \in\frK$ we write $\frp\leftrightarrow \frx$ if and only if $\mathrm{Dom}_r(\frp)\cap\mathrm{Dom}_r(\frx)\neq \emptyset$, otherwise we write $\frp\nleftrightarrow \frx$.\\
We fix two positive constants $\alpha_1,\alpha_2>0$ and construct a measure $w$ on $\frK$ (not to be confused with the weights $w$ appearing in the high-temperature section):
\begin{equation}
\begin{split}
 w(B,\mathrm{d}t)&=4^{|B|}\cdot\exp((\alpha_2+\gamma) |B|)\lVert \Upsilon(B) \rVert\,\mathrm{d}t,\quad \text{on }  \frK_{\mathrm{v}}\\
 w(i)&=\exp(-(g-\alpha_1)\beta+\gamma),\quad \text{on }  \frK_{\mathrm{h}}
 \end{split}
\end{equation}
Here we defined
 \begin{equation}\label{eq: ObsNonOrth}
  \ga:=\max_{x\in \sp(X)}\,\log\biggl(\frac{m}{\Tr\bigl(Q(x) \caP\bigr)}\biggr)
 \end{equation}
and we may occasionally abuse notation and write $w(B,\mathrm{d}t)=w(B)\mathrm{d}t$, such that the expression $w(\frx)$ is meaningful for all $\frx\in\frK$. We also define the following functions on the set of diagrams, 
\begin{equation}
\begin{split}
a&:\frS\rightarrow \bbR\quad;\quad 	\frX \mapsto \alpha_1 L_h(\frX)+\alpha_2 L_v(\frX),\\
L_h&:\frS\rightarrow \bbR\quad;\quad 		\frX \mapsto	\sum_{i=0}^{n}|S_i|\cdot|t_{i+1}-t_i|\\
L_v&:\frS\rightarrow \bbR\quad;\quad 		\frX \mapsto	 \sum_{i=1}^{n}|B_i|
\end{split}
\end{equation}
where as before $\frX=(\bst_n,\bsB_n,\bsS_n)$, $n\geq 1$, and $L_h$/$L_v$ indicate the corresponding diagram's total \emph{length} of the horizontal/vertical segments. 
We furthermore introduce a symmetric function $\xi_{\al_1}:\frK\times \frK\rightarrow \bbR$,
\begin{equation}
 \xi_{\alpha_1}(\frx,\frx')=\left\{
		\begin{array}{ll}
			\e^{-(g-\alpha_1)|t-t'|} &\;\textnormal{if}\;\frx,\frx'\in \frK_{\mathrm{v}},\, \text{dist}(B,B')<2r \\[1mm]
			1 &\;\textnormal{if}\;\mathrm{Dom}_r(\frx)\cap\mathrm{Dom}_r(\frx')\neq\emptyset,\,\textnormal{and not both } \frx,\frx'\in \frK_{\mathrm{v}}\\[1mm]
			0 &\;\textnormal{else}
		\end{array}
	\right.
\end{equation}
 \begin{figure}[h]
 \def\svgwidth{1.1 \textwidth}
  \begingroup%
  \makeatletter%
  \providecommand\color[2][]{%
    \errmessage{(Inkscape) Color is used for the text in Inkscape, but the package 'color.sty' is not loaded}%
    \renewcommand\color[2][]{}%
  }%
  \providecommand\transparent[1]{%
    \errmessage{(Inkscape) Transparency is used (non-zero) for the text in Inkscape, but the package 'transparent.sty' is not loaded}%
    \renewcommand\transparent[1]{}%
  }%
  \providecommand\rotatebox[2]{#2}%
  \ifx\svgwidth\undefined%
    \setlength{\unitlength}{733.5796875bp}%
    \ifx\svgscale\undefined%
      \relax%
    \else%
      \setlength{\unitlength}{\unitlength * \real{\svgscale}}%
    \fi%
  \else%
    \setlength{\unitlength}{\svgwidth}%
  \fi%
  \global\let\svgwidth\undefined%
  \global\let\svgscale\undefined%
  \makeatother%
  \begin{picture}(1,0.29880871)%
    \put(0,0){\includegraphics[width=\unitlength]{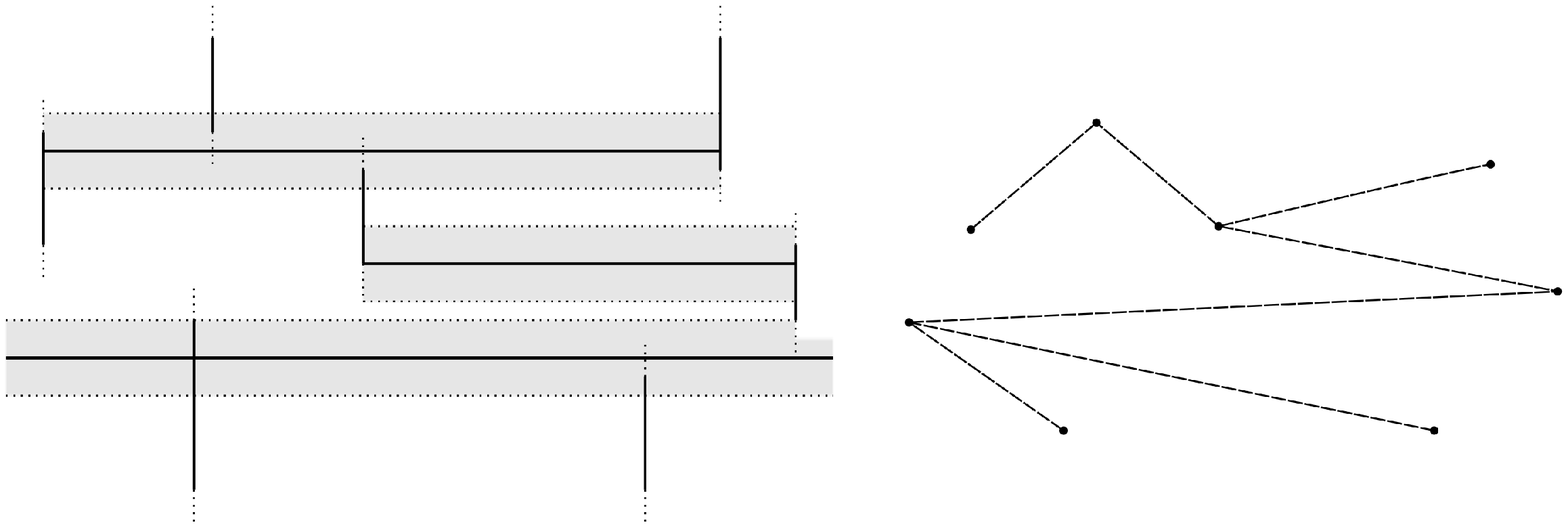}}%
    \put(0.10223093,0.72662703){\color[rgb]{0,0,0}\makebox(0,0)[lt]{\begin{minipage}{0.02414773\unitlength}\raggedright am\end{minipage}}}%
    \put(0.1404048,0.22897811){\color[rgb]{0,0,0}\makebox(0,0)[lt]{\begin{minipage}{0.02258981\unitlength}\raggedright 1\end{minipage}}}%
    \put(0.43486107,0.20716239){\color[rgb]{0,0,0}\makebox(0,0)[lt]{\begin{minipage}{0.0264846\unitlength}\raggedright 2\\ \end{minipage}}}%
    \put(0.04225108,0.16353582){\color[rgb]{0,0,0}\makebox(0,0)[lt]{\begin{minipage}{0.03505315\unitlength}\raggedright 3\end{minipage}}}%
    \put(0.22762873,0.16353582){\color[rgb]{0,0,0}\makebox(0,0)[lt]{\begin{minipage}{0.03271628\unitlength}\raggedright 4\end{minipage}}}%
    \put(0.47845356,0.11989463){\color[rgb]{0,0,0}\makebox(0,0)[lt]{\begin{minipage}{0.02804254\unitlength}\raggedright 5\end{minipage}}}%
    \put(0.12950907,0.02177013){\color[rgb]{0,0,0}\makebox(0,0)[lt]{\begin{minipage}{0.02960047\unitlength}\raggedright 7\end{minipage}}}%
    \put(0.39121989,0.02174578){\color[rgb]{0,0,0}\makebox(0,0)[lt]{\begin{minipage}{0.02882151\unitlength}\raggedright 8\end{minipage}}}%
    \put(-0.00136007,0.09811545){\color[rgb]{0,0,0}\makebox(0,0)[lt]{\begin{minipage}{0.02643875\unitlength}\raggedright 6\\ \end{minipage}}}%
    \put(0.64205359,0.25274527){\color[rgb]{0,0,0}\makebox(0,0)[lt]{\begin{minipage}{0.01855591\unitlength}\raggedright 1\end{minipage}}}%
    \put(0.88198098,0.22002499){\color[rgb]{0,0,0}\makebox(0,0)[lt]{\begin{minipage}{0.02175521\unitlength}\raggedright 2\\ \end{minipage}}}%
    \put(0.56025887,0.18185199){\color[rgb]{0,0,0}\makebox(0,0)[lt]{\begin{minipage}{0.02879366\unitlength}\raggedright 3\end{minipage}}}%
    \put(0.70747014,0.17094657){\color[rgb]{0,0,0}\makebox(0,0)[lt]{\begin{minipage}{0.02687409\unitlength}\raggedright 4\end{minipage}}}%
    \put(0.92012598,0.143667){\color[rgb]{0,0,0}\makebox(0,0)[lt]{\begin{minipage}{0.02303494\unitlength}\raggedright 5\end{minipage}}}%
    \put(0.63115613,0.0618963){\color[rgb]{0,0,0}\makebox(0,0)[lt]{\begin{minipage}{0.02431468\unitlength}\raggedright 7\end{minipage}}}%
    \put(0.84597708,0.05969522){\color[rgb]{0,0,0}\makebox(0,0)[lt]{\begin{minipage}{0.02367481\unitlength}\raggedright 8\end{minipage}}}%
    \put(0.52209855,0.12733488){\color[rgb]{0,0,0}\makebox(0,0)[lt]{\begin{minipage}{0.02171754\unitlength}\raggedright 6\\ \end{minipage}}}%
    \put(0.59336464,0.08281735){\color[rgb]{0,0,0}\makebox(0,0)[lb]{\smash{1}}}%
    \put(0.73513516,0.08281735){\color[rgb]{0,0,0}\makebox(0,0)[lb]{\smash{1}}}%
    \put(0.6697026,0.12643906){\color[rgb]{0,0,0}\makebox(0,0)[lb]{\smash{1}}}%
    \put(0.68021858,0.20939818){\color[rgb]{0,0,0}\makebox(0,0)[lb]{\smash{$\e^{-g|t_4-t_1|}$}}}%
    \put(0.59297516,0.17122919){\color[rgb]{0,0,0}\makebox(0,0)[lb]{\smash{$\e^{-g|t_3-t_1|}$}}}%
    \put(0.75110386,0.19304004){\color[rgb]{0,0,0}\makebox(0,0)[lb]{\smash{$\e^{-g|t_4-t_2|}$}}}%
    \put(0.81653642,0.15487105){\color[rgb]{0,0,0}\makebox(0,0)[lb]{\smash{$\e^{-g|t_5-t_4|}$}}}%
  \end{picture}%
\endgroup%
  \caption{A polymer for a given skeleton. On the right the graph on its constituents from $\frK=\frK_{\mathrm{v}}\cup \frK_{\mathrm{h}}$ is depicted; $\frK_{\mathrm{v}}$ contains the vertical segments, $\frK_{\mathrm{h}}$ the end-to-end horizontal segments.}\label{fig: PolConstit}
 \end{figure}
 
The relevant relationship between the constituents and the polymers, which they compose, is provided by the following lemma. All following results hold uniformly in $\La\Subset \bbZ^d$ and $x\in\Om$.\\
For a polymer $\frp=(\bst_n,\bsB_n,\bsS_n)\in\frP_n$ we define $I(\frp):=\bigcap_{i=0}^n S_i$, which indicates the location of end-to-end horizontal segments in the diagrams, and we define its \emph{skeleton} 
\begin{equation}
\skl(\frp)=\{(B_1,t_1),\dots,(B_n,t_n)\}\cup I(\frp)\subset \frK
\end{equation}
which is a collection of constituents. With $\skl(\frP)$ we denote the set of all skeletons. On the other hand for a given skeleton $\frs\in\skl(\frP)$, we denote with $\frP(\frs)\subset\frP$ the set of those polymers $\frp$ whose skeleton $\skl(\frp)=\frs$. 

\begin{lemma}\label{lem: BigToSmall}
 Let $\alpha_1<g$ and $\alpha_2>0$, then for every skeleton $\frs=\{\frx_1,\dots,\frx_{N}\}\in \skl(\frP)$, we have
 \begin{equation}
  \begin{split}
   \sum_{\frp\in\frP(\frs)} \lvert\rho(\frp)\rvert \,\e^{a(\frp)} \leq \left(\prod_{i=1}^{N} w(\frx_i)\right) \max_{T\in\caT_{N}^*}\prod_{\{i,j\}\in E(T)} \xi_{\alpha_1}(\frx_i,\frx_j)
  \end{split}
 \end{equation}
where $\caT_{N}^*$ denotes the set of connected trees on the vertices $1,\dots,N$ and $E(T)$ denotes the set of edges in a connected tree $T$.
\end{lemma}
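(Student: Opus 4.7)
The plan is to bound $|\rho(\frp)|\,e^{a(\frp)}$ for each polymer $\frp\in\frP(\frs)$ and then sum over the freedom not fixed by the skeleton. Starting from the density \eqref{eq: density}, I apply the gap condition \eqref{eq: DecayOfExcitations}: since $H_0$ commutes with $\caP_\La(S)$ and is bounded below by $g|S|$ on its range, $\lVert\e^{-tH_0}\caP_\La(S)\rVert\leq e^{-gt|S|}$, so that the product of the $n+1$ propagators in $\rho(\frp)$ decays as $e^{-gL_h(\frp)}=\prod_k e^{-g(t_{k+1}-t_k)|S_k|}$. Bounding each $\Upsilon(B_k)$ by $\lVert\Upsilon(B_k)\rVert$ and, via \textbf{Assumption 2} together with the definition \eqref{eq: ObsNonOrth}, estimating the normalization $\Tr(\caP(\emptyset)Q(x))^{-1}$ sitewise by a factor $e^\gamma$ per element of $\caR(\frp)$, multiplication by $e^{a(\frp)}=e^{\alpha_1 L_h(\frp)+\alpha_2 L_v(\frp)}$ gives (up to harmless dimension-dependent constants)
\[
|\rho(\frp)|\,e^{a(\frp)}\leq e^{\gamma|\caR(\frp)|}\prod_{k=1}^n e^{\alpha_2|B_k|}\lVert\Upsilon(B_k)\rVert\;\cdot\;e^{-(g-\alpha_1)L_h(\frp)}.
\]

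Next I classify the sites $j\in\La$ according to the skeleton data. For $j\in I(\frp)$ the site lies in every $S_k$ and so accumulates the full horizontal decay $e^{-(g-\alpha_1)\beta}$; together with the factor $e^\gamma$ it attracts from the normalization, this matches exactly the weight $w(j)=e^{-(g-\alpha_1)\beta+\gamma}$ on $\frK_\mathrm{h}$. Sites outside $I(\frp)\cup\bigcup_k B_k$ contribute nothing, since $j\in S_{k_0}$ for such a site would, by the constraint $S_k\setminus B_k=S_{k-1}\setminus B_k$, propagate to every $S_k$ and contradict $\bigcap_k S_k=I(\frp)$. For each vertical constituent $(B_k,t_k)$ the sites $j\in B_k$ are free to switch status at time $t_k$, with at most four possibilities for $(S_{k-1}\cap\{j\},S_k\cap\{j\})$ per site and hence at most $4^{|B_k|}$ configurations per interaction; combining with $e^{\alpha_2|B_k|}\lVert\Upsilon(B_k)\rVert$ and the remaining $e^{\gamma|B_k|}$ from the root-set normalization within $B_k$, this reproduces the vertical density $w(B_k,\d t)=4^{|B_k|}e^{(\alpha_2+\gamma)|B_k|}\lVert\Upsilon(B_k)\rVert\,\d t$.

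The residual decay $\prod_k e^{-(g-\alpha_1)|S_k\setminus I(\frp)|(t_{k+1}-t_k)}$ is generated by horizontal strands that are required to glue the $N$ constituents of $\frs$ into a single, non-decomposable polymer. The graph on these constituents with edges encoding either a direct overlap of $\mathrm{Dom}_r$'s or a linking strand is therefore connected and contains a spanning tree; each tree edge $\{\frx_i,\frx_j\}$ can be supplied with the factor $\xi_{\alpha_1}(\frx_i,\frx_j)$ — equal to $e^{-(g-\alpha_1)|t-t'|}$ for two close-in-space vertical constituents and to $1$ when at least one of $\frx_i,\frx_j$ lies in $\frK_\mathrm{h}$ — while non-tree adjacencies can be discarded using $\xi_{\alpha_1}\leq 1$. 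Bounding uniformly over $\frp\in\frP(\frs)$ by the largest such product, $\max_{T\in\caT_N^*}\prod_{\{i,j\}\in E(T)}\xi_{\alpha_1}(\frx_i,\frx_j)$, and assembling with $\prod_i w(\frx_i)$ proves the lemma. The most delicate step is this last one: one must disentangle the sum over horizontal $(S_k)$-configurations so that the $4^{|B_k|}$ combinatorial budget at each vertical constituent is not double-counted against the tree-edge decays, and the use of $\max_T$ rather than a sum over trees or a canonical choice of tree is precisely what keeps this bookkeeping tractable.
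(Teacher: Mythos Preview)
Your proposal is correct and follows essentially the same approach as the paper. The paper organizes the argument into three separate pieces---the pointwise bound $|\rho(\frp)|\le e^{\gamma(|I(\frp)|+L_v(\frp))}e^{-gL_h(\frp)}\prod_k\|\Upsilon(B_k)\|$, the geometric lower bound $L_h(\frp)\ge |I|\beta+\min_T\sum_{\{i,j\}}|t_i-t_j|$, and the count $|\frP(\frs)|\le 4^{L_v(\frp)}$---whereas you weave these together via a site-by-site allocation of factors; but the ingredients and the final inequality are identical.
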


\begin{proof}
 Recall the details in the definition of the measure $W_n$ given in~\eqref{eq: density} and~\eqref{eq: density0}. Given $\frp=(\bst_n,\bsB_n,\bsS_n)\in\frP$, the density is bounded from above by 
 \begin{equation}\label{eq: BoundMeasure}
  \begin{split}
   &\lvert \rho(\frp)\rvert\\
   &\leq\Bigl\lvert\Bigl(\prod_{i\in\caR(\frp)}\Tr_i\bigl(Q_i(x_i)\caP_i\bigr)^{-1}\Bigr) \Tr\Bigl( \bigl(\otimes_{i\in \caR(\frp)} Q_i(x_i)\bigr)\otimes\bigl( \otimes_{i\notin \caR(\frp)} \caP_i \bigr) \\[-.8\baselineskip]
   &\hspace{.15\textwidth} \bigl(\caP(S_n)\e^{-(\beta-t_n) H_0}\Upsilon(B_n)\bigr)\dots \bigl(\caP(S_1)\e^{-(t_2-t_1)H_0}\Upsilon(B_1)\bigr) \caP(S_0) \e^{-t_1 H_0}\Bigr)\vphantom{\int}\Bigr\rvert\\
   & \leq \e^{\gamma(|I(\frp)|+L_v(\frp))}\; \bigl\lVert \bigl(\caP(S_n)\e^{-(\beta-t_n) H_0}\Upsilon(B_n)\bigr)\dots\\
   &\hspace{.3 \textwidth}\dots\bigl(\caP(S_1)\e^{-(t_2-t_1)H_0}\Upsilon(B_1)\bigr) \caP(S_0) \e^{-t_1 H_0} \bigr\rVert\vphantom{\int}\\
   &\leq \e^{\gamma(|I(\frp)|+L_v(\frp))}_{}\e^{-gL_h(\frp)}_{}\prod_{k=1}^n \lVert \Upsilon({B_k}) \rVert
  \end{split}
 \end{equation}
 where the second trace was estimated by the norm of the operator product (note that the norm of each orthogonal projections $\caQ_i$ and $\caP_i$ equals one) multiplied with $\rank (\caP(S_0))<m^{\lvert \frR(\frp)\rvert}$, $m=\dim(\caH)$. This factor together with the product over inverse traces then was absorbed in $\e^{\gamma(|I(\frp)|+L_v(\frp))}$ where we used that $\lvert \frR(\frp)\rvert\leq|I(\frp)|+L_v(\frp)$. The last inequality in \eqref{eq: BoundMeasure} follows from the Peierls' type condition~\eqref{eq: DecayOfExcitations}.\\
 Take any skeleton and cast it in the form
 \begin{equation}\label{eq: SkeletonExplicit}
  \begin{split}
   \frs=\frs_{n,k}&=\{\frx_1,\dots,\frx_n,\frx_{n+1}, \dots,\frx_N\}\in\skl(\frP_n)\\
   &\frx_i=(B_i,t_i)\in\frK_{\mathrm{v}},\quad i=1,\dots,n\\
   &\frx_i\in I\subset\frK_{\mathrm{h}},\quad i=n+1,\dots,(n+k)= N,\quad k:=\lvert I\rvert
  \end{split}
 \end{equation}
 The minimal horizontal length of any polymer $\frp\in\frP(\frs)$ belonging to such a skeleton can be estimated as
 \begin{equation}
  L_{\mathrm{h}}(\frp) \geq \lvert I \rvert \beta +\min_{T\in\caT_N^*}\chi\bigl[\xi_{\al_1}(\frx_i,\frx_j)\neq 0\;\forall\, \{i,j\}\in E(T)\bigr] \sum_{\substack{\{i,j\}\in E(T);\\i,j\leq n}} \lvert t_i - t_j \rvert
 \end{equation}
 The first term accounts for the contribution from end-to-end segements. The second term gives the minimal length of the horizontal segments which, diagrammatically speaking, must be added to the skeleton between vertical constituents to obtain a polymer. Therefore, since $\al_1<g$,
 \begin{equation}
  \e_{}^{-(g-\al_1)L_{\mathrm{h}}(\frp)}\leq \e_{}^{-(g-\al_1)\lvert I \rvert} \max_{T\in\caT_{N}^*}\prod_{\{i,j\}\in E(T)} \xi_{\alpha_1}(\frx_i,\frx_j)
 \end{equation}
 and using the bound \eqref{eq: ObsNonOrth} gives
 \begin{equation}
  \begin{split}
   \e^{a(\frp)}_{} \,\lvert \rho(\frp)\rvert\leq &  \left( \prod_{i=1}^{n}\e^{(\al_2+\ga)\lvert    B_i\rvert}\bigl\lVert \Upsilon(B_i)\bigr\rVert \right)\,\e_{}^{-(g-\al_1)\be\lvert I\rvert+\ga\lvert I\rvert}\\
    &\times \max_{T\in\caT_{N}^*}\prod_{\{i,j\}\in E(T)} \xi_{\alpha_1}(\frx_i,\frx_j)
   \end{split}
 \end{equation}
 for any polymer $\frp\in\frP(\frs)$. The lemma then follows from the fact that $\lvert \frP(\frs)\rvert\leq  4^{L_{\mathrm{v}}(\frp)}$, where $L_{\mathrm{v}}(\frp)$ is of course independent of the choice $\frp\in\frP(\frs)$.\\
\end{proof}

The next lemma is concerned only with the constituent model. It gives a bound on the integral over `clusters of constituents', where the word cluster here refers to a collection of constituents viewed as vertices that is a connected graph w.r.~to $\xi_{\al_1}$ viewed as edge weight.

\begin{lemma}\label{lem: EstConst}
 For all $\alpha_1, \alpha_2$ with $\alpha_1<g$ and $0<\de_1,\de_2<1$ there exist $\kappa_{\min},\beta_{\min}>0$, so that, for any $\kappa\geq\kappa_{\min}$, $\beta\geq\beta_{\min}$ and for every constituent $\frx_0\in\frK$,
 \begin{equation}\label{eq: ConstModel}
 \begin{split}
  1+\sum_{N= 1}^\infty \frac{1}{N!} \int_\frK \mathrm{d}w(\frx_1)\dots\int_\frK \mathrm{d}w(\frx_N) \max_{T\in \caT_{N}}\prod_{\{i,j\}\in E(T)}\xi_{\alpha_1}(\frx_i,\frx_j)  \leq \e^{d(\frx_0)}
  \end{split}
 \end{equation}
 where 
 \begin{equation}
  d(\frx_0):=\left\{ \begin{array}{ll} 
   \de_1(g-\alpha_1) \beta &\; \text{if }\frx_0=i\in\frK_\mathrm{h}\\
   \de_2|B| &\; \textnormal{if }\frx_0=(B,t)\in \frK_\mathrm{v}
  \end{array} \right.
\end{equation}
 and where $\caT_N$ denotes the set of all connected trees on the vertices $0,\dots,N$.
\end{lemma}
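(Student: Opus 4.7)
\medskip

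\noindent\textbf{Proof plan.} The plan is to recognize the displayed sum as the ``rooted tree partition function'' of a polymer/cluster model on the constituent space $\frK$, and to establish it by the classical Koteck\'y--Preis type induction. The first step is to replace the maximum over trees by the sum over trees, which is a harmless overestimate and will let us exploit the multiplicative structure: for any $N\ge 1$,
\begin{equation*}
 \frac{1}{N!}\int \mathrm{d}w(\frx_1)\dots \mathrm{d}w(\frx_N)\,\max_{T\in\caT_N}\prod_{\{i,j\}\in E(T)}\xi_{\alpha_1}(\frx_i,\frx_j)
 \;\le\; \frac{1}{N!}\sum_{T\in\caT_N}\int \mathrm{d}w(\frx_1)\dots \mathrm{d}w(\frx_N)\prod_{\{i,j\}\in E(T)}\xi_{\alpha_1}(\frx_i,\frx_j).
\end{equation*}

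\medskip

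\noindent The next step is the key reduction. I want to verify the \emph{single-step Koteck\'y--Preis bound}
\begin{equation*}
 (\star)\qquad \int_{\frK}\mathrm{d}w(\frx)\,\xi_{\alpha_1}(\frx,\frx_0)\,\e^{d(\frx)}\;\le\;d(\frx_0)\qquad\text{for every }\frx_0\in\frK,
\end{equation*}
with $\kappa,\beta$ large enough (and $\lVert\Upsilon\rVert_\kappa\le 1$). Given $(\star)$, I proceed by induction on $N$: rooting each tree $T\in\caT_N$ at the distinguished vertex $0$, pick any leaf $\frx_k$ with parent $\frx_{p(k)}$, integrate $\frx_k$ first and apply $(\star)$ with $\frx_0$ replaced by $\frx_{p(k)}$; the factor $d(\frx_{p(k)})$ that is produced is exactly what is needed to continue the induction. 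Combined with the standard Cayley-type counting of labelled rooted trees (or, more cleanly, the elementary identity $\sum_{T\in\caT_N}\prod_{k\ge 1}d(\frx_{p(k)})=(d(\frx_0)+d(\frx_1)+\dots+d(\frx_N))^{\cdots}$ after integration), the resummation produces an exponential, giving the announced bound $\e^{d(\frx_0)}$.

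\medskip

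\noindent It remains to verify $(\star)$ by splitting $\frK=\frK_{\mathrm h}\cup\frK_{\mathrm v}$ in both possible cases for $\frx_0$. \emph{Case} $\frx_0=i_0\in\frK_{\mathrm h}$: the horizontal contribution is a finite geometric sum over $j$ with $\mathrm{dist}(i_0,j)<2r$, giving at most $C_d\,\e^{-(1-\delta_1)(g-\alpha_1)\beta+\gamma}$; the vertical contribution is
\begin{equation*}
 \int_0^\beta\mathrm{d}t\sum_{B:\,\mathrm{dist}(i_0,B)<2r} 4^{|B|}\e^{(\alpha_2+\gamma+\delta_2)|B|}\lVert\Upsilon(B)\rVert
 \;\le\;C'_d\,\beta\,\lVert\Upsilon\rVert_\kappa
\end{equation*}
provided $\kappa\ge \alpha_2+\gamma+\delta_2+\log 4$, so that the combinatorial factor is absorbed in the norm. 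Both terms can be made $\le \delta_1(g-\alpha_1)\beta=d(i_0)$ by choosing $\beta\ge\beta_{\min}$ large enough (and $\kappa\ge\kappa_{\min}$). \emph{Case} $\frx_0=(B_0,t_0)\in\frK_{\mathrm v}$: the horizontal contribution sums over $j$ with $\mathrm{dist}(j,B_0)<r$, giving at most $C''_d|B_0|\,\e^{-(1-\delta_1)(g-\alpha_1)\beta+\gamma}$; the vertical contribution is bounded by
\begin{equation*}
 \sum_{B':\,\mathrm{dist}(B',B_0)<2r}\!\!\!\!\! 4^{|B'|}\e^{(\alpha_2+\gamma+\delta_2)|B'|}\lVert\Upsilon(B')\rVert
 \int_0^\beta\e^{-(g-\alpha_1)|t_0-t'|}\mathrm{d}t'
 \;\le\;\frac{2}{g-\alpha_1}\,C'''_d\,|B_0|\,\lVert\Upsilon\rVert_\kappa,
\end{equation*}
where the factor $|B_0|$ emerges from summing $\lVert\Upsilon(B')\rVert$ over $B'$ in the $r$-neighborhood of $B_0$. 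Both pieces are $\le \delta_2|B_0|=d(\frx_0)$ for $\kappa,\beta$ sufficiently large and $\lVert\Upsilon\rVert_\kappa\le 1$.

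\medskip

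\noindent The main obstacle is in balancing the five free parameters $\alpha_1,\alpha_2,\delta_1,\delta_2,\kappa$ together with $\beta$: $\alpha_1$ must stay strictly below the gap $g$ so that $g-\alpha_1>0$ drives the horizontal decay; $\alpha_2,\delta_2$ appear inside the exponent of the norm $\lVert\Upsilon\rVert_\kappa$, which forces $\kappa$ to be taken large. Once $\kappa\ge\kappa_{\min}$ absorbs the combinatorial prefactors $4^{|B|}\e^{(\alpha_2+\gamma+\delta_2)|B|}$ into the potential norm, the remaining smallness in the vertical contribution comes from $\lVert\Upsilon\rVert_\kappa\le 1$; and once that is secured, the smallness in the horizontal contribution comes from $\beta\ge\beta_{\min}$ through $\e^{-(1-\delta_1)(g-\alpha_1)\beta}$. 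The inductive resummation of trees is standard once $(\star)$ is in place; the delicate point is precisely verifying $(\star)$ uniformly in $\frx_0$ and in the ambient volume $\La$ and configuration $x$, which is transparent from the explicit bounds above since they only involve $\beta$, $|B|$, lattice combinatorics, and the translation-invariant norm $\lVert\Upsilon\rVert_\kappa$.
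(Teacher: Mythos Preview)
Your approach is the same as the paper's: strengthen by replacing $\max$ by $\sum$ over trees, reduce everything to the single-step bound $(\star)$, and deduce the exponential by induction. Two points need fixing, however.

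\emph{The induction.} Your description ``pick a leaf $\frx_k$, integrate it, get $d(\frx_{p(k)})$, continue'' does not close: integrating a leaf produces the factor $d(\frx_{p(k)})$, but the induction hypothesis is $\sum_{N\le M}(\cdots)\le \e^{d(\cdot)}$, not $\le d(\cdot)$, so there is nothing to feed the leftover $d(\frx_{p(k)})$ into. The correct organisation (and the paper's) is to decompose at the \emph{root}: sort trees by the number $m$ of neighbours of vertex $0$; each neighbour $\frx_1$ carries a subtree with at most $M-m$ further vertices, to which the induction hypothesis applies, yielding a factor $\e^{d(\frx_1)}$; then $(\star)$ gives $\int \mathrm{d}w(\frx_1)\,\xi_{\alpha_1}(\frx_0,\frx_1)\e^{d(\frx_1)}\le d(\frx_0)$, and summing $\frac{1}{m!}d(\frx_0)^m$ produces $\e^{d(\frx_0)}-1$.

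\emph{The verification of $(\star)$.} Your bounds only give \emph{boundedness}, not \emph{smallness}. For instance, in the case $\frx_0=i_0\in\frK_{\mathrm h}$ you obtain for the vertical piece $C'_d\,\beta\,\lVert\Upsilon\rVert_\kappa\le C'_d\,\beta$ under the bare condition $\kappa\ge \log 4+\alpha_2+\gamma+\delta_2$, and this cannot be made $\le \delta_1(g-\alpha_1)\beta$ since $C'_d$ is a fixed lattice constant and $\delta_1$ is prescribed. The same problem occurs in the vertical--vertical estimate $\frac{2}{g-\alpha_1}C'''_d|B_0|\lVert\Upsilon\rVert_\kappa$. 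The remedy is to take $\kappa$ \emph{strictly} larger than the combinatorial threshold: writing $\kappa'=\log 4+\alpha_2+\gamma+\delta_2$ and using $|B|\ge 1$, one has $\sum_{B}4^{|B|}\e^{(\alpha_2+\gamma+\delta_2)|B|}\lVert\Upsilon(B)\rVert\le \e^{-(\kappa-\kappa')}\lVert\Upsilon\rVert_\kappa$, so the vertical contributions pick up an extra factor $\e^{-(\kappa-\kappa')}$ that can be made as small as required by enlarging $\kappa_{\min}$. This is precisely the $\e^{-\kappa}$ appearing in the paper's estimate.
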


\begin{proof}
 In fact, we prove a stronger version of the lemma by replacing the above maximum by a sum over all connected trees.  We truncate the series, \ie, replace $\sum_{N=1}^\infty$ by $\sum_{N=1}^M$, and then proceed by induction on $M$. By the exponential decay of the perturbation interaction $\lVert \Upsilon \rVert_\kappa\leq 1$ and by counting the possible constituents $\frx_1$ that can be attached to the fixed one $\frx_0$, \ie, with $\xi(\frx_0,\frx_1)\neq 0$, it is not hard to see that, for sufficiently large $\kappa$ and $\beta$, one has the bound:
 \begin{equation}
  \begin{split}
   &\int_\frK \mathrm{d}w(\frx_1)\,\xi_{\alpha_1}(\frx_0,\frx_1)\,\e^{d(\frx_1)}\\
   &\leq C(\al_1,\al_2)\times\left\{ \begin{array}{ll} 
               \e^{-(1-\de_1)(g-\alpha_1)\beta}+   \e^{-\kappa}\,\beta  &\; \textnormal{if }\frx_0=i\in\frK_\mathrm{h}\\[1mm]
               e^{-(1-\de_1)(g-\al_1)\beta}\,|B|  + \e^{-\kappa}\,|B| &\; \textnormal{if }\frx_0=(B,t)\in \frK_\mathrm{v}
             \end{array} \right.
  \end{split}
 \end{equation}
 where $C(\al_1,\al_2)$ is an irrelevant constant depending only on $\al_1$, $\al_2$. This bound immediately implies that
 \begin{equation}\label{eq: IndStartStrong}
 \int_\frK \mathrm{d}w(\frx_1)\,\xi_{\al_1}(\frx_0,\frx_1)\,\e^{d(\frx_1)}
 \leq d(\frx_0)
\end{equation}
 for $\be, \ka$ large enough. It also allows to start the induction at \underline{$M=1$}. \\
To obtain the induction step \underline{$M-1\rightarrow M$} we first sort the terms within the sum over trees by the number $m$ of different constituents, say $\frx_{1}$, that are connected to $\frx_0$ in the sense $\xi_{\al_1}(\frx_0,\frx_{1})\neq 0$. Each $\frx_{1}$ is itself connected to at most $M-m$ other constituents, so that the induction hypothesis can be used. 
\begin{equation}
 \begin{split}
  &\sum_{N=1}^M\frac{1}{N!}\sum_{T\in\caT_{N}}\int_\frK \mathrm{d}w(\frx_1)\dots\int_\frK \mathrm{d}w(\frx_N)\,\prod_{\{i,j\}\in E(T)}\xi_{\al_1}(\frx_i,\frx_j)\\
  &\leq \sum_{m=1}^M\frac{1}{m!} \Bigl[   \int_\frK \mathrm{d}w(\frx_1)\,\xi_{\al_1}(\frx_0,\frx_1)\\
  &\hspace{.1\textwidth}\sum_{N=0}^{M-m}\frac{1}{N!}\sum_{T\in \caT_{N+1}^*} \int_\frK \mathrm{d}w(\frx_2)\dots\int_\frK \mathrm{d}w(\frx_{N+1})\,\prod_{\{i,j\}\in E(T)}\xi_{\al_1}(\frx_i,\frx_j) \Bigr]^m\\
  &\leq \sum_{m=1}^M\frac{1}{m!} \Bigl[   \int_\frK \mathrm{d}w(\frx_1)\,\xi_{\al_1}(\frx_0,\frx_1) \e^{d(\frx_1)} \Bigr]^m\leq \e^{d(\frx_0)}-1
 \end{split}
\end{equation}
where $T_N^*$ again denotes the set of connected trees on the vertices $1,\dots,N$ and the $N=0$ term in the sum is again understood to be equal to one. For the last inequality we used \eqref{eq: IndStartStrong}.

\end{proof}

\subsubsection{Koteck\'y--Preis criterion}
Now we prove a Koteck\'y--Preis type criterion for our polymer model which is an upper bound for the integral over polymers which are adjacent to a fixed polymer $\frp_0\in\frP$.

\begin{proposition}\label{prop: KoteckyPreis}
 For all $\alpha_1,\alpha_2>0$ with $\alpha_1<g$ and constants $c_1,c_2>0$, there exist $\kappa_{\min},\beta_{\min}>0$, such that, for all $\kappa\leq\kappa_{\min}$, $\beta\leq\beta_{\min}$,
 \begin{equation}\label{eq: kotpre}\begin{split}
  &\int_{\frP} \mathrm{d}|W|(\frp)\;\chi[\frp\leftrightarrow\frp']\, \e^{a(\frp)}\leq c_1 L_h(\frp')+c_2 L_v(\frp')\\
  \text{and}\quad &\int_{\frP} \mathrm{d}|W|(\frp)\;\e^{a(\frp)}<\infty
 \end{split}\end{equation}
for every fixed polymer $\frp'\in \frP$, volume $\Lambda\Subset\bbZ^d$, and classical configuration $x\in\Omega$.
\end{proposition}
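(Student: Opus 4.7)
The plan is to combine the skeleton decomposition of Lemma~\ref{lem: BigToSmall} with the rooted-cluster estimate of Lemma~\ref{lem: EstConst}, choosing as the root not a constituent of the integrated polymer $\frp$ but one of the fixed polymer $\frp'$.

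First I would decompose $\int_\frP \d|W|(\frp)\,\chi[\frp\leftrightarrow\frp']\e^{a(\frp)}$ along the skeleton of $\frp$. Writing the integration over polymers as an integration over unordered sets of constituents and invoking Lemma~\ref{lem: BigToSmall} gives
\begin{equation*}
\int_\frP \d|W|(\frp)\,\chi[\frp\leftrightarrow\frp']\e^{a(\frp)}\le \sum_{N\ge 1}\frac{1}{N!}\int_\frK\prod_{k=1}^N \d w(\frx_k)\,\chi\bigl[\{\frx_1,\dots,\frx_N\}\leftrightarrow\frp'\bigr]\max_{T\in\caT_N^*}\prod_{\{i,j\}\in E(T)}\xi_{\alpha_1}(\frx_i,\frx_j).
\end{equation*}
A preliminary observation is needed: if $\frp\leftrightarrow\frp'$, then at least one pair $\frx\in\skl(\frp)$, $\frx'\in\skl(\frp')$ satisfies $\xi_{\alpha_1}(\frx,\frx')\neq 0$. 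This holds because every site in the spatial support of $\frp$ at any given time either sits in the end-to-end set $I(\frp)$ or, by the transition rule $S_k\setminus B_k=S_{k-1}\setminus B_k$, belongs to some vertical skeleton constituent $(B_j,t_j)$; in either situation a nearby constituent of $\frp'$ is $\xi_{\alpha_1}$-connected to it.

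Next I would apply the union bound $\chi[\{\frx_1,\dots,\frx_N\}\leftrightarrow\frp']\le\sum_{i=1}^N\sum_{\frx'\in\skl(\frp')}\chi[\frx_i\leftrightarrow\frx']$ and exploit the symmetry of the integrand in the labels $i$ to pull out a factor $N$, converting $1/N!$ into $1/(N-1)!$. Relabelling $\frx_1$ as a new root $\frx_0'$ and shifting $M=N-1$ leaves
\begin{equation*}
\sum_{\frx'\in\skl(\frp')}\int_\frK \d w(\frx_0')\,\chi[\frx_0'\leftrightarrow\frx']\sum_{M\ge 0}\frac{1}{M!}\int_\frK\prod_{k=1}^M \d w(\frx_k)\max_{T\in\caT_M}\prod_{\{i,j\}\in E(T)}\xi_{\alpha_1}(\frx_i,\frx_j),
\end{equation*}
where $\caT_M$ is now the set of trees on $\{0,1,\dots,M\}$ with $\frx_0'$ playing the role of vertex $0$. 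By Lemma~\ref{lem: EstConst} the inner cluster sum is then bounded by $\e^{d(\frx_0')}$.

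To close the argument I would use $\chi[\frx_0'\leftrightarrow\frx']\le\xi_{\alpha_1}(\frx_0',\frx')$ (the only case in which $\xi_{\alpha_1}$ can be strictly between $0$ and $1$ is the vertical-vertical case, but then $\chi$-adjacency forces $t=t'$ and hence $\xi_{\alpha_1}=1$). The remaining single-constituent integral is exactly the object bounded by inequality~\eqref{eq: IndStartStrong} from the proof of Lemma~\ref{lem: EstConst}, giving $\int_\frK \d w(\frx_0')\xi_{\alpha_1}(\frx_0',\frx')\e^{d(\frx_0')}\le d(\frx')$. Summing over $\frx'\in\skl(\frp')$ yields
\begin{equation*}
\int_\frP \d|W|(\frp)\,\chi[\frp\leftrightarrow\frp']\e^{a(\frp)}\le \sum_{\frx'\in\skl(\frp')}d(\frx')=\delta_1(g-\alpha_1)L_h(\frp')+\delta_2 L_v(\frp')
\end{equation*}
directly from the definition of $d$, so that choosing $\delta_1(g-\alpha_1)\le c_1$ and $\delta_2\le c_2$ (which fixes $\beta_{\min},\kappa_{\min}$ via Lemma~\ref{lem: EstConst}) delivers the first bound in~\eqref{eq: kotpre}. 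The second bound in~\eqref{eq: kotpre} follows from the very same rooted-cluster argument performed without the adjacency restriction, selecting any single constituent as root; the resulting estimate $\int_\frK \d w(\frx_0)\e^{d(\frx_0)}$ is then finite simply because $\Lambda$ is finite. The main obstacle I anticipate is the first, combinatorial step: justifying the skeleton-level union bound rigorously (including the implication $\frp\leftrightarrow\frp'\Rightarrow \xi_{\alpha_1}$-connectivity of skeletons) and cleanly matching the unrooted trees $\caT_N^*$ on $\{1,\dots,N\}$ to the rooted trees $\caT_M$ on $\{0,\dots,M\}$ required by Lemma~\ref{lem: EstConst}.
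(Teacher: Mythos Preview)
There is a genuine gap in the passage from $\frp\leftrightarrow\frp'$ to your union bound. Your preliminary observation correctly gives $\xi_{\alpha_1}(\frx,\frx')\neq 0$ for some $\frx\in\skl(\frp)$, $\frx'\in\skl(\frp')$, but the union bound you write uses $\chi[\frx_i\leftrightarrow\frx']$, i.e.\ $\mathrm{Dom}_r$-adjacency of constituents. These are not the same: take $\frp$ with skeleton $\{(\{0,1\},0.1\beta),(\{0,1\},0.9\beta)\}$ and a single non-end-to-end horizontal strand at site $0$ on $[0.1\beta,0.9\beta]$, and $\frp'$ with skeleton $\{(\{1,2\},0.4\beta),(\{1,2\},0.6\beta)\}$ and a strand at site $1$ on $[0.4\beta,0.6\beta]$. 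For $r=2$ one has $\frp\leftrightarrow\frp'$, yet every pair of skeleton constituents sits at distinct times and hence satisfies $\frx\nleftrightarrow\frx'$. So your union bound is simply false in this case. If instead you replace $\chi[\frx_i\leftrightarrow\frx']$ by $\chi[\xi_{\alpha_1}(\frx_i,\frx')\neq 0]$, the union bound becomes correct, but now your later step $\chi[\frx_0'\leftrightarrow\frx']\le\xi_{\alpha_1}(\frx_0',\frx')$ breaks: for two vertical constituents that are spatially close but temporally far, $\chi[\xi_{\alpha_1}\neq 0]=1$ while $\xi_{\alpha_1}=\e^{-(g-\alpha_1)|t-t'|}$ can be arbitrarily small, so you cannot feed this into~\eqref{eq: IndStartStrong}.

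This is exactly why the paper does the case split (i)/(ii')/(iii'): the problematic scenario---adjacency through non-end-to-end horizontal segments of both polymers---is isolated as~(iii'), where one anchors at a vertical singleton $\frx_0$ of $\skl(\frp')$ and exploits the inequality $\max_{T\in\caT_N^*}\prod\xi\le\max_{T\in\caT_N}\prod\xi$. The point is that the ``missing'' exponential factor $\e^{-(g-\alpha_1)|t_j-t'|}$ between a vertical constituent $(B_j,t_j)$ of $\frp$ and the anchor $\frx_0=(\{i\},t')$ is compensated by rerouting the spanning tree through $\frx_0$: the horizontal segment of $\frp$ that realises the adjacency already contributes a comparable time-stretch to the tree on $\caT_N^*$, and replacing one edge by two edges through~$\frx_0$ does not decrease the product. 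Your argument has no mechanism for this absorption; to repair it you would essentially have to reproduce the paper's case~(iii'), which in turn forces the remaining cases~(i) and~(ii') to be treated separately.

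A minor point: your displayed equality $\sum_{\frx'\in\skl(\frp')}d(\frx')=\delta_1(g-\alpha_1)L_h(\frp')+\delta_2 L_v(\frp')$ should be an inequality, since horizontal constituents contribute only $\delta_1(g-\alpha_1)\beta\,|I(\frp')|$ and $\beta\,|I(\frp')|\le L_h(\frp')$; this does not affect the conclusion.
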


\begin{proof}
Note that $\frp\leftrightarrow \frp'$ implies at least one of the following conditions:
\begin{itemize}
 \item[(i)] The vertical skeleton of $\frp$ is `connected' to $\frp'$, \ie, \\
 $\exists \,\frx \in \skl (\frp)\cap \frK_\mathrm{v} \text{ such that } \frx\leftrightarrow\frp'$
 \item[(ii)] The horizontal skeleton of $\frp$ is `connected' to $\frp'$, \ie,\\ 
 $\exists \,\frx \in \skl (\frp)\cap \frK_\mathrm{h} \text{ such that } \frx\leftrightarrow\frp'$
 \item[(iii)] A horizontal segment of $\frp$ that is not end-to-end is `connected' to $\frp'$, \ie,\\
 $\bigl[ \mathrm{Dom}_r(\frp)\setminus {\textstyle\bigcup_{\frx\in\skl(\frp)}}\mathrm{Dom}_r(\frx) \bigr]\cap\mathrm{Dom}_r(\frp')\neq\emptyset$
\end{itemize}
Furthermore it can be seen that polymers $\frp, \frp'\in\frP$ for which (iii) holds must satisfy either (i) and/or 
\begin{itemize}
 \item[(iii')] $\frp$ is `connected' to the vertical skeleton of $\frp'$, \ie,\\
 $\exists \,\frx' \in \skl(\frp')\cap \frK_{\mathrm{v}} \text{ such that } \frp\leftrightarrow\frx'$
\end{itemize}
At last, note that polymers $\frp, \frp'\in\frP$ for which (ii) is true must satisfy either (iii') and/or
\begin{itemize}
 \item[(ii')] The horizontal skeletons of $\frp$ and $\frp'$ are connected, \ie,\\
 $\exists \,\frx \in \skl (\frp)\cap \frK_\mathrm{h},\,\frx' \in \skl (\frp')\cap \frK_\mathrm{h} \text{ such that } \xi_{\alpha_1}(\frx,\frx')=1$
\end{itemize}
Therefore
\begin{equation}
\begin{split}
  &\int_{\frP} \mathrm{d}|W|(\frp)\;\chi[\frp\leftrightarrow\frp_0]\, \e^{a(\frp)}\\
  &\leq \int_{\frP} \mathrm{d}|W|(\frp)\;\bigl(\chi[\text{(i)}]+\chi[\text{(ii')}]+\chi[\text{(iii')}]\bigr)\, \e^{a(\frp)}
 \end{split}\end{equation}
and we proceed by giving bounds for each of the three terms. 

For the case (i) we first reorganize the integral for given $n,k\geq 0$ and $\bst_n\in\caS_n$ by collecting polymers with common skeleton of the form $\frs_{n,k}=\{\frx_1,\dots,\frx_{n+k}\}$ parametrized as in \eqref{eq: SkeletonExplicit} with $\frx_j\equiv(B_j,t_j)\in\frK_\mathrm{v}$, $j=1,\dots,n$, and where $\frx_{j+n}\equiv i_{j}\in\frK_{\mathrm{h}}$, for $j=1,\dots,k$, enumerates elements in $I\subset\frK_\mathrm{h}$ (in arbitrary order). 
\begin{align}\label{eq: CaseOne}\nonumber
  &\int_{\frP} \mathrm{d}|W|(\frp)\;\chi[\text{(i)}]\, \e^{a(\frp)}\\ \nonumber
  &=\sum_{n=1}^{\infty}\sum_{k=0}^\infty \int_{\caS_n}\d t_1\dots\d t_n \sum_{\frs_{n,k}} \sum_{l=1}^n \chi[\frx_l\leftrightarrow \frp'] \sum_{\frp\in\frP(\frs_{n,k})} \e^{a(\frp)} \lvert \rho(\frp) \rvert\\\nonumber
  &\leq \sum_{n=1}^{\infty}\sum_{k=0}^\infty \int_{\caS_n}\d t_1\dots\d t_n \sum_{\frs_{n,k}} \sum_{l=1}^n \chi[\frx_l\leftrightarrow \frp'] \biggl(\prod_{i=1}^{N} w(\frx_i)\biggr) \max_{T\in\caT_{N}^*}\prod_{\{i,j\}\in E(T)} \xi_{\alpha_1}(\frx_i,\frx_j)\\
  &\leq \sum_{n=1}^{\infty}\frac{n}{n!}\sum_{k=0}^\infty \frac{1}{k!} \int_{[0,\be]^n}\d t_1\dots\d t_n \sum_{\substack{\bsB_n\\\in{\wp'(\La)}^{n}}}\chi[\frx_1\leftrightarrow \frp']\sum_{\substack{(\frx_{n+1},\dots,\frx_{n+k})\\\in\frH_\mathrm{h}^k}}\\\nonumber
  &\hspace{.4 \textwidth}\biggl(\prod_{i=1}^{N} w(\frx_i)\biggr) \max_{T\in\caT_{N}^*}\prod_{\{i,j\}\in E(T)} \xi_{\alpha_1}(\frx_i,\frx_j)\\\nonumber
  &\leq \int_{\frK_{\mathrm{v}}} \d w(\frx_1)\,\chi[\frx_1\leftrightarrow\frp']\\\nonumber
  &\hspace{.05\textwidth}\Bigl[1+\sum_{N=2}^\infty { \frac{1}{(N-1)!}} \int_\frK \mathrm{d}w(\frx_2)\dots\int_\frK \mathrm{d}w(\frx_N) \max_{T\in \caT^*_{N}}\prod_{\{i,j\}\in E(T)}\xi_{\alpha_1}(\frx_i,\frx_j)\Bigr]\\\nonumber
  &\leq  \int_{\frK_{\mathrm{v}}} \d w(\frx_1)\,\chi[\frx_1\leftrightarrow\frp']\,\e^{d(\frx_1)}_{}\\\nonumber
  &\leq C'\,\e^{-\ka}\,L_{\mathrm{h}}(\frp')
\end{align}
for a constant $C'$ which only depends on $\al_1,\al_2$. To obtain the first inequality Lemma \ref{lem: BigToSmall} was used. The integrand is explicitly invariant under exchange of time coordinates and for the second inequality we replaced the integration over the simplex $\caS_n$ by integrating the cube $[0,\be]^n$ and dividing by $n!$. Furthermore we spelled out the sum over skeletons $\frs_{n,k}$ more explicitly, but instead of summing over sets containing $k$ horizontal constituents we summed over $k$-tuples divided by $k!$ for the upper bound. The additional factor $n$ is a consequence of rewriting (made possible by the symmetrization) the condition that at least one vertical constituent, namely $\frx_1$, of the polymer $\frp$ must be adjacent to $\frp'$. One arrives at the third inequality by taking out the integral over this adjacent constituent $\frx_1$ and by writing the remaining sums and integrals as multiple integral over (both horizontal and vertical) constituents. The last two steps follow 
from Lemma \ref{lem: EstConst} with $\frx_1$ assuming the role of $\frx_0$ in the Lemma. \\
Similarly we proceed in case (ii'), but this time in the sum over $\frs_{n,k}=\{\frx_1,\dots,\frx_{k+n}\}$ we first enumerate horizontal constituents, \ie, $\frx_j\equiv i_j\in\frK_\mathrm{h}$ for $j=1,\dots,k$ (again in arbitrary order), and then vertical ones, \ie, $\frx_{j+k}\equiv(B_j,t_j)\in\frK_\mathrm{v}$ for $j=1,\dots,n$. We get the following upper bound
\begin{equation}\label{eq: CaseTwo}
 \begin{split}
 &\int_{\frP} \mathrm{d}|W|(\frp)\;\chi[\text{(ii')}]\, \e^{a(\frp)}\\
 &=\sum_{k=1}^\infty  \sum_{n=0}^{\infty}\int_{\caS_n}\d t_{1}\dots\d t_{n} \sum_{\frs_{n,k}}\sum_{l=1}^k
 \chi\Bigl[ \begin{array}{c} \exists\, \frx' \in \skl(\frp')\cap\frK_{\mathrm{h}};\\ \xi_{\al_1}(\frx_{l},\frx') =1\end{array}\Bigr]
 \sum_{\substack{\frp\in\\ \frP(\frs_{n,k})}}\e^{a(\frp)}\,\lvert\rho(\frp)\rvert\\
  &\leq\sum_{k=0}^\infty \frac{k}{k!}\sum_{n=1}^\infty\frac{1}{n!}\int_{[0,\be]^n}\d t_{1}\dots\d t_{n} \sum_{\substack{\bsB_n\\\in{\wp'(\La)}^{n}}}\sum_{\substack{(\frx_{n+1},\dots,\frx_{n+k})\\\in\frH_\mathrm{h}^k}}\biggl(\prod_{i=1}^{N} w(\frx_i)\biggr)\\
  &\hspace{.25\textwidth}\chi\Bigl[ \begin{array}{c} \exists\, \frx' \in \skl(\frp')\cap\frK_{\mathrm{h}};\\ \xi_{\al_1}(\frx_{1},\frx') =1\end{array}\Bigr]\max_{T\in\caT_{n+k}^*}\prod_{\{i,j\}\in E(T)}\xi_{\al_1}(\frx_i,\frx_j)\\
  &\leq \int_{\frK_{\mathrm{h}}}\d w(\frx_1)\,\chi\Bigl[ \begin{array}{c} \exists\, \frx' \in \skl(\frp')\cap\frK_{\mathrm{h}};\\ \xi_{\al_1}(\frx_{1},\frx') =1\end{array}\Bigr]\\
  &\hspace{.05\textwidth}\Bigl[1+\sum_{N=2}^\infty { \frac{1}{(N-1)!}} \int_\frK \mathrm{d}w(\frx_2)\dots\int_\frK \mathrm{d}w(\frx_N) \max_{T\in \caT^*_{N}}\prod_{\{i,j\}\in E(T)}\xi_{\alpha_1}(\frx_i,\frx_j)\Bigr]\\
  &\leq \int_{\frK_{\mathrm{h}}}\d w(\frx_1)\,\chi\Bigl[ \begin{array}{c} \exists\, \frx' \in \skl(\frp')\cap\frK_{\mathrm{h}};\\ \xi_{\al_1}(\frx_{1},\frx') =1\end{array}\Bigr] \,\e^{d(\frx_1)}\\
  &\leq C''\,\e^{-(1-\de_1)(g-\al_1)\be}\,\frac{L_{\mathrm{h}}(\frp')}{\be}
 \end{split}
\end{equation}
where the last fraction is a bound on the number of end-to-end segments in the polymer $\frp'$ and where $C''$ is another constant which only depends on $\al_1,\al_2$. $\de_1$ here has the same meaning as in Lemma \ref{lem: EstConst} and can be chosen to be small.\\
For the remaining third integral (iii') we first split the vertical skeleton of $\frp'$ into `singletons' $\frx_0\in\{(i,t)\in\frK_\mathrm{v}\,|\,i\in\La, \,\exists \,(B,t)\in\skl(\frp')\cap\frK_{\mathrm{v}};\,i\in B \}$ to obtain
\begin{equation}\label{eq: SplittingSingletons}
 \int_{\frP} \mathrm{d}|W|(\frp)\;\chi[\text{(iii')}]\, \e^{a(\frp)}\leq \sum_{\frx_0}\int_{\frP} \mathrm{d}|W|(\frp)\;\chi\bigl[\exists\,\widetilde{\frp}\in\frP\bigl(\skl(\frp)\bigr);\,\widetilde{\frp}\leftrightarrow\frx_0\bigr]
\end{equation}
For every skeleton $\frs=\{\frx_1,\dots,\frx_{N}\}$ such that there is $\widetilde{\frp}\in\frP(\frs)$ with $\widetilde{\frp}\leftrightarrow\frx_0$ one finds
\begin{equation}
 \max_{T\in \caT^*_{N}}\prod_{\{i,j\}\in E(T)}\xi_{\alpha_1}(\frx_i,\frx_j)\leq\max_{T\in \caT^{}_{N}}\prod_{\{i,j\}\in E(T)}\xi_{\alpha_1}(\frx_i,\frx_j)
\end{equation}
By transferring the integral over polymers to an integral over clusters of constituents (just as it was done for case (i) and (ii') to obtain the first inequality in \eqref{eq: CaseTwo} and \eqref{eq: CaseOne} respectively) \eqref{eq: SplittingSingletons} is bounded by
\begin{equation}
 \begin{split}
  &\sum_{\frx_0}\sum_{N=1}^\infty { \frac{1}{N!}} \int_\frK \mathrm{d}w(\frx_1)\dots\int_\frK \mathrm{d}w(\frx_N) \max_{T\in \caT^{}_{N}}\prod_{\{i,j\}\in E(T)}\xi_{\alpha_1}(\frx_i,\frx_j)\\
  &\leq C''' \,L_{\mathrm{v}}(\frp')\,\de_2
 \end{split}
\end{equation}
for a constant $C'''$ that only depends on $\al_1,\al_2$. This finishes the proof, since the parameter $\de_2$ (same as in Lemma \ref{lem: EstConst}) can be chosen arbitrarily small for $\beta,\kappa$ large enough.\\
\end{proof}

\subsection{Construction of the classical potential}

The Koteck\'y--Preis criterion of Proposition \ref{prop: KoteckyPreis} allows to write the classical restriction $\mu_\La^{\be,X}$ in the form of \eqref{eq: polymerexpansion}, the polymer expansion. On the level of this polymer model, we moreover verified the conditions to proceed with a cluster expansion in the sense of \cite{Ue}, from where we extract what is relevant in our context in the following proposition. \\
We continue to suppress the dependence on $\La$, $\be$, $X$, and $x$ in the notation.

\begin{proposition}\label{prop: ClusterExpansion}
 For any choice of constants $0<c_1<g$ and $0<C_1,c_2,C_2$, there are $\kappa_{\min},\beta_{\min}>0$, so that, for any volume $\Lambda\Subset\bbZ^d$, any classical configuration $x\in \Omega$, and as long as $\beta\geq\beta_{\min}$, $\kappa\geq\kappa_{\min}$,
 \begin{equation} \label{eq: ClusterExpansion}
 \begin{split}
  \text{(1)}\quad &1+\sum_{N= 1}^\infty\frac{1}{N!}\int_{\frP} \mathrm{d}W(\frp_1)\dots \int_{\frP} \mathrm{d}W(\frp_N)\prod_{1\leq i<j\leq N}\chi[\frp_i \nleftrightarrow \frp_j]\\
  &=\exp \biggl[ \sum_{N=1}^\infty \frac{1}{N!}\int_{\frP} \mathrm{d}W(\frp_1)\dots \int_{\frP} \mathrm{d}W(\frp_N)\, \varphi(\frp_1,\dots,\frp_N)\biggr]\\
 \end{split}
 \end{equation}
 with
 \begin{equation}
  \varphi(\frp_1,\dots,\frp_N):=\left\{ \begin{array}{ll}
                                     1 &\text{if }N=1\\
                                     \sum_{G\in\caC_N}\prod_{(i,j)\in G}\bigl( - \chi[\frp_i\leftrightarrow \frp_j]\bigr)&\text{if }N\geq 2
                                    \end{array}\right. 
 \end{equation}
 where combined sum and integrals, the `integral over clusters', converge absolutely, and where $\caC_N$ denotes the set of connected graphs on the vertices $\{1,\dots,N\}$.
 
 (2) The `weight' of clusters decays exponentially in their length, i.~e., the integral of clusters adjacent to a polymer $\frp_0\in\frP$ can be bounded according to
 \begin{equation}\label{eq: ClusterBound}
  \begin{split}
   &\sum_{N=1}^\infty \frac{1}{N!}\int_{\frP} \mathrm{d}\bigl\lvert W\bigr\rvert (\frp_1)\dots \int_{\frP} \mathrm{d}\bigl\lvert W\bigr\rvert(\frp_N)\,\chi[\exists \, i \text{ with } \frp_0\leftrightarrow \frp_i]\\[-.2\baselineskip]
   &\hspace{.3\textwidth}\times |\varphi(\frp_1,\dots,\frp_N)|\, \prod_{i=1}^N \exp \bigl(c_1 L_\text{h}(\frp_i) + c_2 L_\text{v}(\frp_i)\bigr)\\
   &\leq C_1 L_\text{h}(\frp_0) + C_2 L_\text{v}(\frp_0)
  \end{split}
 \end{equation}
\end{proposition}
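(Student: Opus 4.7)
My plan is to invoke the standard abstract cluster expansion machinery (as developed in \cite{Ue}), whose hypotheses are essentially encoded in Proposition~\ref{prop: KoteckyPreis}. The polymer system consists of the measurable space $(\frP,\caF^\frp)$ equipped with the signed measure $W$ and the symmetric binary relation $\leftrightarrow$. The left-hand side of \eqref{eq: ClusterExpansion} is the formal ``polymer partition function'' with hard-core exclusion given by $\leftrightarrow$, and one wants to recognize it as the exponential of a sum of Ursell weights $\varphi$ evaluated on clusters. This is a formal combinatorial identity; the content is to prove that the cluster series converges absolutely.

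For absolute convergence I would check the Koteck\'y--Preis criterion in the form: there exists $a:\frP\to\bbR_{\ge 0}$ such that
\begin{equation*}
 \int_{\frP}\mathrm{d}|W|(\frp')\,\chi[\frp'\leftrightarrow\frp]\,\e^{a(\frp')}\leq a(\frp)\quad\text{for every }\frp\in\frP.
\end{equation*}
Setting $a(\frp)=\alpha_1 L_h(\frp)+\alpha_2 L_v(\frp)$, Proposition~\ref{prop: KoteckyPreis} produces exactly the desired integral bounded by $c_1 L_h(\frp)+c_2 L_v(\frp)$; since the constants $c_1,c_2$ are at our disposal, arbitrarily small at the cost of increasing $\kappa_{\min},\beta_{\min}$, one arranges $c_i\leq\alpha_i$, which is the KP bound. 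Ueltschi's theorem then yields part~(1), i.e.\ the exponential representation together with absolute convergence of the combined integral/sum over clusters. The definition of $\varphi$ via the sum over connected graphs in $\caC_N$ is exactly the standard Ursell function and is already built into that output.

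Part~(2) is the ``pinned cluster'' estimate that also comes out of the same framework via one further application of Proposition~\ref{prop: KoteckyPreis} with enhanced exponents. Define $b(\frp):=c_1 L_h(\frp)+c_2 L_v(\frp)$ to match the enhancement inside the product in \eqref{eq: ClusterBound}, and set $\widetilde a(\frp):=a(\frp)+b(\frp)$. Re-applying Proposition~\ref{prop: KoteckyPreis} with the pair $(\alpha_1+c_1,\alpha_2+c_2)$ in place of $(\alpha_1,\alpha_2)$ (still subject to $\alpha_1+c_1<g$, which together with the requirement that the new output constants be $\leq\alpha_i+c_i$ fixes the thresholds $\kappa_{\min},\beta_{\min}$) gives the KP bound for the enhanced weight. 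Feeding this into the general tree-graph/pinned-cluster estimate of \cite{Ue}, the sum of $|\varphi|$ weighted by $\prod_i\mathrm{d}|W|(\frp_i)\,\e^{b(\frp_i)}$ over clusters containing at least one polymer adjacent to $\frp_0$ is bounded by a multiple of $\widetilde a(\frp_0)$, which is of the form $C_1 L_h(\frp_0)+C_2 L_v(\frp_0)$ with $C_i$ proportional to $\alpha_i+c_i$.

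The main obstacle is bookkeeping rather than substance: one must choose $\alpha_1,\alpha_2$ small enough that $\alpha_1+c_1<g$ (so that Proposition~\ref{prop: KoteckyPreis} can be reapplied with the enhanced constants), yet large enough that the output constants in the second pass remain below the prescribed $C_1,C_2$. Since Proposition~\ref{prop: KoteckyPreis} delivers arbitrarily small $c_1,c_2$ by further increasing $\kappa_{\min},\beta_{\min}$, this is always achievable, but it forces one final tightening of the thresholds beyond those required for part~(1). No new probabilistic or combinatorial input is needed; everything reduces to one additional invocation of Proposition~\ref{prop: KoteckyPreis} combined with the abstract cluster expansion output.
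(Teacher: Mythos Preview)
Your proposal is correct and follows essentially the same route as the paper: both reduce the proposition to the abstract cluster expansion of \cite{Ue} (Theorems~1 and~3 and equation~(19) there, with $1+\zeta(\cdot,\cdot)=\chi[\cdot\nleftrightarrow\cdot]$), the required Koteck\'y--Preis hypothesis being exactly Proposition~\ref{prop: KoteckyPreis}. The paper's proof is a one-line citation, whereas you spell out the constant bookkeeping (choosing $c_i\leq\alpha_i$ for part~(1), and reapplying Proposition~\ref{prop: KoteckyPreis} with enhanced exponents $\alpha_i+c_i$ for part~(2)); this is the natural way to unpack the citation and involves no substantive deviation.
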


\begin{proof}
 This proposition is a consequence of Theorems 1 and 3 and equation (19) in \cite{Ue}, where the function $\zeta$ of this reference is given through $1+\zeta(\,\cdot\,,\,\cdot\,)=\chi[\,\cdot\nleftrightarrow \cdot\,]$. The conditions for these results to work are contained in Proposition \ref{prop: KoteckyPreis}.
 
\end{proof}

 Motivated by this result we abbreviate the integral over clusters of polymers, in notation $\frc=(\frp_1,\dots,\frp_N)\in\frC$, $N\geq 1$, as
\begin{equation}
 \int_{\frC} \mathrm{d}\caM(\frc)\;\cdot\;:= \sum_{N=1}^\infty\frac{1}{N!}\int_{\frP} \mathrm{d}W(\frp_1)\dots \int_{\frP} \mathrm{d}W(\frp_N)\, \varphi(\frp_1,\dots,\frp_N)\;\cdot\;
\end{equation}
and indeed $\caM$ is a consistently defined measure on $\frC(=\frC_\La^\be)$ for different $\La\Subset\bbZ^d$.  We also write $\caR(\frc)=\bigcup_i\caR(\frp_i)$ for the root-set of a cluster, $\mathrm{Dom}(\frc)=\bigcup_i\mathrm{Dom}(\frp_i)$ for its domain, $L_{\mathrm{v}/\mathrm{h}}(\frc)=\sum_i L_{\mathrm{v}/\mathrm{h}}(\frp_i)$ for its length, $\mathrm{span}_\mathrm{h}(\frc)$ for its horizontal span, i.~e.~the added minimal length of two intervals $I_\mathrm{l},I_\mathrm{r}\subset[0,\beta]$, so that $\mathrm{Dom}(\frc)\subset\Lambda\times I_\mathrm{l}\cup I_\mathrm{r}$, and $\frc\leftrightarrow \frp$ if it is adjacent to a polymer $\frp\in\frP$, i.~e., there is $i\in\{1,\dots,N\}$ with $\frp_i\leftrightarrow\frp$.\\

 We define the classical potential $\Psi$ as limit of the following finite volume approximations, depending on $\Lambda\Subset\bbZ^d$:
\begin{equation}\label{eq: ClassicalPotential}
 \Psi_{\Lambda,A}^{}(x):=\left\{\begin{array}{ll}
              \int_{\frC_\La} \mathrm{d}\caM_{}^{}(\frc)\,\chi[\cup_{i} \caR(\frp_i)=A]& \text{ if }|A|>1\\[2mm]
              \text{`as above' }+\log \Tr_i(\caP_i Q_i(x_i)) & \text{ if } A=\{i\}, i\in\Lambda\\[2mm]
              0 & \text{ if } A=\emptyset 
             \end{array}
  \right. 
\end{equation}
One way to see that the ${ \Psi_{\Lambda,A}^{}}$ are real is by the expansion's reflection symmetry with respect to the equal $\beta/2$-plane and, as desired, they only depend on $x_A\in\Omega_A$. 
\begin{theorem}\label{thm: GibbsLow2}
 Provided that the assumptions of Theorem~\ref{thm: gibbs1} hold and given a constant $c>0$, there exist $\kappa_{\min},\beta_{\min}>0$, such that, for any $\kappa\geq\kappa_{\min}$, $\beta\geq\beta_{\min}$, and $\La\Subset\bbZ^d$, the classical restriction takes the form
 \begin{equation}\label{eq: GibbsEnsemble}
  \begin{split}
   &\mu_\Lambda^{}(x_\Lambda)=\frac{1}{\tilde{Z}_\Lambda^{}}\exp\Bigl( \sum_{A\subset\Lambda} \Psi_{\Lambda,A}^{}(x_A) \Bigr)\\
  \end{split}
 \end{equation}
The (unique) thermodynamic limit $\mu$ of these Gibbs distributions is a Gibbs distribution for a potential given through
\begin{equation}\label{eq: ThDynLimPot}
 \Psi_{A}^{}(x_A):= \lim_{\Lambda\nearrow \bbZ^d} \Psi_{\Lambda,A}^{}(x_A)
\end{equation}
 which decays exponentially according to
 \begin{equation}\label{eq: ExpDecLim}
 \bigl\lVert \Psi \bigr\rVert_c<\infty  \quad \text{and} \quad  \sum_{A\ni 0} \max_{x_A\in\Om_A} \e^{c\cdot \mathrm{diam}(A)} \bigl\lvert\Psi_A (x_A)\bigr\rvert<\infty
 \end{equation}
 Moreover these statements remain true for the classical restriction of the ground state, i.e., for the (unique) probability distribution obtained by first taking $\beta\rightarrow \infty$ in~\eqref{eq: GibbsEnsemble} or after the thermodynamic limit in $\mu^{}$. The corresponding classical potential is given by \eqref{eq: ThDynLimPot} as $\be\rightarrow \infty$.
\end{theorem}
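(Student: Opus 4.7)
The plan is to combine the polymer factorization \eqref{eq: polymerexpansion} with the cluster expansion \eqref{eq: ClusterExpansion} to read off the Gibbs form, and then to exploit the uniform bounds of Proposition \ref{prop: ClusterExpansion} both for the exponential decay of the potential and for the thermodynamic and ground-state limits. Plugging \eqref{eq: polymerexpansion} into \eqref{eq: NonNormalClassProj} and applying \eqref{eq: ClusterExpansion} gives
\begin{equation*}
 \Tr\bigl(Q(x)\,\e^{-\be H}\bigr) \;=\; \Tr\bigl(\caP(\emptyset)\,Q(x)\bigr)\,\exp\Bigl(\int_{\frC_\La} \d\caM(\frc)\Bigr)
\end{equation*}
Since $\caP(\emptyset)=\bigotimes_i\caP_i$ and $Q(x)=\bigotimes_i Q_i(x_i)$, the prefactor factorizes into $\prod_i \Tr_i(\caP_i Q_i(x_i))$, which is strictly positive by Assumption 2. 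By Lemma \ref{lem} the cluster weight depends on $x$ only through $x_{\caR(\frc)}$, so I would group clusters by their root set $A=\caR(\frc)$ and absorb the $x$-independent normalization $Z_\La$ into $\tilde Z_\La$. The ratio $\mu_\La(x_\La)=\Tr(Q(x)\e^{-\be H})/Z_\La$ then takes the form \eqref{eq: GibbsEnsemble} with $\Psi_{\La,A}$ exactly as in \eqref{eq: ClassicalPotential}.

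The decay \eqref{eq: ExpDecLim} follows from part (2) of Proposition \ref{prop: ClusterExpansion}. The key geometric input is that any cluster $\frc$ with $\caR(\frc)=A$ must form an adjacency-connected graph whose space-time domain reaches every site of $A$ at the boundary times, forcing a lower bound $L_{\mathrm{h}}(\frc)/\be + L_{\mathrm{v}}(\frc) \geq c_0^{-1}\,\mathrm{diam}(A)$ for a constant $c_0>0$ depending only on the interaction range $r$. The factor $\e^{c\,\mathrm{diam}(A)}$ can therefore be absorbed into a fraction of the $\e^{c_1 L_{\mathrm{h}}+c_2 L_{\mathrm{v}}}$ budget available in \eqref{eq: ClusterBound}, after which the summability of $\sum_{A\ni 0}\max_x \e^{c\,\mathrm{diam}(A)}\lvert\Psi_A(x_A)\rvert$ reduces to a convergent cluster integral anchored at the origin, itself controlled by the Koteck\'y--Preis estimate \eqref{eq: kotpre} after enlarging $a(\cdot)$ to accommodate the surviving polynomial prefactor. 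The thermodynamic limit is then routine: $W$ is $\La$-independent by Lemma \ref{lem}, so for fixed $A$ the integrand $\chi[\caR(\frc)=A]$ selects only clusters of bounded spatial extent, and $\frC_\La \nearrow \frC_{\bbZ^d}$ yields pointwise convergence $\Psi_{\La,A}(x_A)\to\Psi_A(x_A)$. The summable bound just established serves as a dominating majorant, and standard classical Gibbs theory (\cite{Ru, georgii}) identifies $\mu=\lim_\La \mu_\La$ as the unique Gibbs distribution for $\Psi$.

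The $\be\to\infty$ limit is the most delicate step, since the time integrations in the polymer density \eqref{eq: density} range over the unbounded interval $[0,\be]$. The rescue comes once more from the Peierls factor $\e^{-g L_{\mathrm{h}}(\frp)}$ present in \eqref{eq: BoundMeasure}: polymers with temporal extent growing in $\be$ are exponentially suppressed, as are end-to-end horizontal segments, which carry weight $\e^{-(g-\al_1)\be\lvert I\rvert}$ (cf.~the estimate inside the proof of Lemma \ref{lem: BigToSmall}). Hence every cluster contributing non-negligibly to a fixed $\Psi_{\La,A}$ is temporally localized in an $\mathcal{O}(1)$ neighborhood of the boundary times $\{0,\be\}$, while on the remaining large time intervals the propagators $\e^{-tH_0}$ reduce to the ground-state projector $\caP(\emptyset)$ by the gap condition \eqref{eq: DecayOfExcitations}. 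Because the bounds in Propositions \ref{prop: KoteckyPreis} and \ref{prop: ClusterExpansion} are uniform in $\be \geq \be_{\min}$, dominated convergence applies inside each $\Psi_{\La,A}$ and commutes with the thermodynamic limit, producing the zero-temperature potential with the same exponential decay.
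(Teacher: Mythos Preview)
Your overall strategy matches the paper's: derive the Gibbs form from \eqref{eq: NonNormalClassProj}--\eqref{eq: polymerexpansion}--\eqref{eq: ClusterExpansion}, then use the cluster bound \eqref{eq: ClusterBound} both for the exponential decay and for the limits. The geometric input you state, $L_{\mathrm h}(\frc)/\be + L_{\mathrm v}(\frc)\gtrsim \mathrm{diam}(A)$, is exactly what the paper encodes via the choice $c=c_2\le c_1\be_{\min}/(2r)$ together with $|I(\frp)|\le L_{\mathrm h}(\frp)/\be$.

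Where your write-up diverges is in the handling of the two limits. The paper does not treat $\La\nearrow\bbZ^d$ and $\be\to\infty$ by separate dominated-convergence arguments; instead it proves a single Cauchy-type estimate
\[
m_c(\La',\La,\be',\be)\,\bigl|\Psi^{\be'}_{\La',A}(x)-\Psi^{\be}_{\La,A}(x)\bigr|\;\le\;2C\,|A|,
\]
where the prefactor $m_c$ grows like $\e^{c\,\mathrm{dist}(\La\setminus\La',A)}$ or $\e^{c\be'}$. This is obtained by writing the difference as an integral over clusters that either reach into $\La\setminus\La'$ or have horizontal span exceeding $\be'$ (including the end-to-end pieces), and then absorbing $m_c$ into the $\e^{c_1L_{\mathrm h}+c_2L_{\mathrm v}}$ weight in \eqref{eq: ClusterBound}. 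This single estimate gives existence of both limits \emph{and} their interchangeability in one stroke, which your sketch does not explicitly address.

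Two smaller points. First, the sentence ``$\chi[\caR(\frc)=A]$ selects only clusters of bounded spatial extent'' is false as stated: clusters rooted in $A$ can extend arbitrarily far in space, they just cannot touch the time-boundary outside $A$; what is true (and what you use) is that their weight decays. Second, your dominated-convergence argument for $\be\to\infty$ is not immediate, because the integration domain $\frP^{\be}_\La$ itself changes with $\be$; one has to embed all clusters in $\bbZ^d\times\bbR$ and argue carefully that the end-to-end contributions vanish. The paper's Cauchy estimate sidesteps this embedding issue entirely.
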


\begin{proof}
 \eqref{eq: GibbsEnsemble} follows if we summarize \eqref{eq: NonNormalClassProj}, \eqref{eq: polymerexpansion}, and \eqref{eq: ClusterExpansion} by
 \begin{equation}
  \Tr\bigl( Q(x) \e^{-\be H}\bigr)=\Tr\bigl(Q(x)\caP(\emptyset)\bigr)\exp\Bigl( \int_{\frC} \d \caM(\frc) \Bigr)
 \end{equation}
 and furthermore recall that the weight of a cluster $\frc=(\frp_1,\dots,\frp_n)$ with empty root set $\bigcup_i \frR(\frp_i)=\emptyset$ does not depend on the configuration $x\in\Omega$, so that the contribution of these \textit{bulk} clusters is canceled by normalization.\\
 For $\beta_{\min}$ large enough, we introduce two positive constants $c,C>0$, satisfying
\begin{equation}
 C:=C_1=C_2\quad \text{and }\quad c:=c_2\leq\frac{c_1}{2r}\beta_{\min}
\end{equation}
 in terms of the constants $c_1,c_2>0$, $c_1<g$, appearing in Proposition~\ref{prop: ClusterExpansion}. Denote with $\mathrm{span}_\mathrm{h}(\frp)$ the horizontal span of a polymer $\frp\in\frP$, i.~e.~the minimal added length of two intervals $I_\mathrm{l},I_\mathrm{r}\subset[0,\beta]$, so that $\mathrm{Dom}(\frp)\subset\Lambda\times I_\mathrm{l}\cup I_\mathrm{r}$. Using the bound~\eqref{eq: ClusterBound} we can estimate the difference of the classical potential for possibly different volumes $\Lambda'\subset\Lambda$ and temperatures $\beta'\leq\beta$ evaluated at the same $x\in\Omega$, $A\subset\Lambda'$:
\begin{equation}\label{eq: CauchyPotential}
\begin{split}
 &m_c(\Lambda',\Lambda,\beta',\beta)\,\bigl\lvert \Psi_{\Lambda',A}^{\beta'}(x) -\Psi_{\Lambda,A}^{\beta}(x) \bigr\rvert\\
 &\leq \sum_{a\in A}\int_{\frC} \mathrm{d}\left\lvert\caM\right\rvert(\frc)\, \chi[\textstyle{\mathrm{Dom}_r(\frc)\cap\bigl((\Lambda\setminus\Lambda')\times[0,\beta]\bigr)\neq\emptyset \text{ or }\mathrm{span}_\mathrm{h}(\frc)> \beta'}]\\
 &\hphantom{\leq \sum_{a\in A}\int_\frC }\chi[\textstyle{a\in \caR(\frc)}]\, \exp \bigl(c_1 L_\text{h}(\frc) + c_2 L_\text{v}(\frc)\bigr)\\
 &\hphantom{\leq}+\sum_{a\in A}\int_{\frC'} \mathrm{d}\left\lvert\caM \right\rvert(\frc)\, \chi[\mathrm{span}_\mathrm{h}(\frc)= \beta']\,\chi[\textstyle{a\in \caR(\frc)}]\,\exp \bigl(c_1 L_\text{h}(\frc)\bigr)\\
 &\leq 2\,C\,|A|
 \end{split}
 \end{equation}
 with $\frC=\frC_\La^\be$, $\frC'=\frC_{\La'}^{\be'}$, and with
 \begin{equation}
  m_c(\Lambda',\Lambda,\beta',\beta):=\max\bigl\{ \chi[\Lambda'\neq\Lambda]\,\exp(c\,\mathrm{dist}(\Lambda\setminus\Lambda',A))\,,\,\chi[\beta'\neq\beta]\,\exp(c\,\beta') \bigr\}\\
 \end{equation}
 In terms of the graphical representation, note that the above difference is merely an integral over those clusters $\frc\in\frC$, rooted in $A$ which are end-to-end clusters (second term), which have a horizontal span greater than $\beta'$, or which reach vertically into the complemental volume $\Lambda\setminus\Lambda'$ through vertical segments or through horizontal segments with an effective vertical range $r$. Recall again our assumption of all interaction sets $B_i$ being connected. By the choice of the constants the contributing clusters satisfy either
 \begin{equation}
 \begin{split}
  &\beta' \leq L_\text{h}(\frc) \quad\text{ and/or}\\
  &c\cdot\mathrm{dist}(\Lambda\setminus\La',A)\leq c\, L_\text{v}(\frc)+2r c\,|I(\frp_i)| \leq  c_1\, L_\text{h}(\frc)+c_2\, L_\text{v}(\frc)
 \end{split}
 \end{equation}
 which shows how we could absorb the factor $m_c$ in the bounding integral. To obtain the second inequality in \eqref{eq: CauchyPotential} we covered the root-set $A$ with $|A|$ polymer `singletons' $(t_1,\emptyset,\{i\})\in\frP_1$, which play the role of the fixed polymer $\frp_0$ in~\eqref{eq: ClusterBound}.\\
 We have thus proven the existence the thermodynamic limit \eqref{eq: ThDynLimPot} for each $A\Subset\bbZ^d$, which can be understood as integral over clusters (of course with finite length) in $\bbZ^d\times [0,\beta]$ rooted in $A$, and furthermore that it is interchangeable with the limit $\beta\rightarrow \infty$. In this limit the contribution of end-to-end clusters vanishes exponentially and by the cyclicity of the trace we may think of the classical potential at zero temperature $\Psi_A^{\infty}$ as integral over clusters in $\bbZ^d\times\bbR$ that have contact with the $\beta=0$ plane at positions in $A$. In the following we always allow $\beta=\infty$.\\
 The exponential decay property~\eqref{eq: ExpDecLim} can be read as integral over all clusters $\frc\in\frC$, that are rooted in $A\ni 0$ and respectively weighted with the exponential of
 \begin{equation}
  c\cdot\lvert A\rvert, \,c\cdot \mathrm{diam} (A)\leq c_1 \,L_\text{h}(\frc)+c_2\, L_\text{v}(\frc)
 \end{equation}
 and this integral can be bounded from above by the constant $C$ by using again the estimate in Proposition~\ref{prop: ClusterExpansion} similarly as in~\eqref{eq: CauchyPotential}.\\
We can now immediately conclude that, for $\Gamma\subset \Lambda\Subset\bbZ^d$, the conditional probabilities,
\begin{equation}
 \mu_\Lambda^{}\bigl(x_\Gamma\,|\,x_{\Lambda\setminus\Gamma}\bigr):=\text{(norm.)}\times\exp\biggl( \sum_{\substack{A\subset\Lambda\\A\cap\Gamma\neq\emptyset}}\Psi_{\Lambda,A}^{}(x_A) \biggr)
\end{equation}
converge uniformly in $x\in \Omega$ as $\Lambda\nearrow \bbZ^d$. Almost by definition, this proves that any thermodynamic limit point of $\mu_\Lambda^{}$ is a Gibbs distribution for the potential $\Psi^{}$, see e.~g.~\cite{Ru,Si} for standard arguments.\\
 Expectation values with respect to $\mu_\Lambda^{}$ of local functions, say only depending on $x_\Gamma\in\Omega_\Gamma$, $\Gamma\subset\Lambda$, converge as $\Lambda\nearrow \bbZ^d$ (again interchangeable with $\beta\rightarrow \infty$), which can be verified by beginning right from the start to work with $ Q_\Gamma(x_\Gamma)\otimes \opunit_{\Lambda\setminus\Gamma}$ instead of $Q_\Lambda(x_\Lambda)$ in the definition of the classical restriction. With this replacement, which does not harm the previous constructions, one obtains the marginal distribution of the classical restriction as
 \begin{equation}
 \begin{split}
  &\mu_\La(x_\Ga)=\exp \biggl[ \int_\frC \mathrm{d}\caM(\frc)\,\chi[\textstyle{\caR(\frc)\cap \Gamma\neq\emptyset}]\biggr]
  \end{split}
 \end{equation}
 where we have abused the notation, as the measure on the {\small RHS} is now defined with respect to the `inhomogeneous observable' with $X_i=X$ for $i\in\Gamma$, and $X_i=\opunit$ at sites from the complement $\Ga^\complement$, and for classical configurations of the form $x=x_\Ga\equiv(x^{}_\Gamma,\boldsymbol{1}_{\Ga^\complement})$. The contribution from clusters which are not rooted in $\Gamma$ is again canceled by normalization. By the same arguments as earlier in this proof, mainly the exponential decay of the cluster weights, the above expression has a well-defined thermodynamic limit, which is interchangeable with taking $\beta\rightarrow\infty$.\\
\end{proof}

\section{The ground state of the Ising chain in a transverse field}

In this section we prove  Theorems \ref{thm: nongibbs} and \ref{thm: LdpForIsing} concerning a non-locality property of the ground state of the Ising chain in a transverse field. The origin of this non-locality is easily  understood in finite volume, as we explain now:\\
Note that the `parity operator' $P:=\exp(i\pi \sum_l \sigma_l^z)$ commutes with the local Hamiltonian $H_\La$. In volumes $\La$ consisting of an even number of sites and for $J=0$ the non-degenerate ground state $\omega_\La(\,\cdot\,)=\langle \psi_\mathrm{gs}|\,\cdot\,|\psi_\mathrm{gs}\rangle$ has positive parity in the sense that it is an eigenstate of $P$ for the eigenvalue $p=+1$. By simple perturbation theory the gapped ground state maintains positive parity for $|J/h|<1$. For $X=\sigma^z$, the classical restriction $\mu_\Lambda^{X}(x)$, $x\in\Omega_\La=\{-1,+1\}^\Lambda$, then vanishes whenever the number of spins facing the same direction or equivalently whenever $\sum_l x_l$ is odd. This is clearly a non-local effect and the core of our argument is to show that this nonlocality persists in infinite volume. \\
In the following we always have in mind the choice $X=\sigma^z$ and as in the introduction we write $\mu^z$ for the belonging classical classical restriction of the ground state.

\subsection{Absence of quasi-locality}
Instead of fermionizing the spin in a Jordan-Wigner-tranformation as is commonly done for solving this model explicitly, see the Appendix, we use  the previously presented cluster expansion which is not restricted to spin chains. We treat here the Ising model for a slightly modified Hamiltonian,
\begin{equation}
 H_\Lambda=\sum_{i}(\sigma_i^z+\opunit)-\e^{-2\kappa}\sum_{i} \sigma_i^x\sigma_{i+1}^x
\end{equation}
to make the quantum interaction exactly in line with Assumption 1 of Theorem \ref{thm: gibbs1}. For notational purposes we only treat the one-dimensional setting $i\in\bbZ$ explicitly, but it is straightforward to check that the proof given here carries over to higher dimensions. 

Once again, note that the (infinite volume) ground state $\om$ of the Ising chain in transverse field is unique; see~\cite{AM}, and that its classical restriction equals the limit 
\begin{equation}
  \mu^z=\lim_{\Lambda\nearrow\bbZ,\, \beta \rightarrow \infty}\mu_{\Lambda}^{\beta ,X},\quad X=\sigma^z
 \end{equation}in arbitrary order. As before we mostly keep the dependence on $\be$ and $X=\sigma^z$ implicit in notation and write $\mu_\La=\mu_{\La}^{\be,X}$. Recall the notation introduced in Section \ref{subsec: Gibbsianess} and in particular the notion of absence of quasi-locality as in \eqref{eq: locality concrete}. For $L> 1$  we set $\Ga_L:=\{-L^2,\dots,L^2\}\subset \bbZ$, but we often suppress the subscript $L$ as in the following proposition. Cylinder sets of configurations on the infinite lattice are abbreviated by their defining constraint.
 
 \begin{proposition}\label{prop: CondProp}
  Given $\kappa>0$ large enough, the conditional probabilities of the classical restriction $\mu^z$ satisfy
  \begin{equation}\label{eq: CondPropZero}
   \mu^z\bigl( \{x_0=+1\} \,|\, \{x_{\Ga\setminus\{0\}}\equiv -1\}\bigr)\; \xrightarrow[]{\; L\rightarrow \infty\;} 0
  \end{equation}
  and
  \begin{equation}\label{eq: CondPropOne}
   \mu^z\bigl( \{x_0=+1\} \,|\, \{x_L=+1,\,x_{\Ga\setminus\{0,L\}}\equiv -1\}\bigr)\; \xrightarrow[]{\; L\rightarrow \infty\;} 1
  \end{equation}
  and the above expressions are well-defined, since $\mu^z$ is positive on each cylinder set. Therefore $x\in\Om$ defined through $x_i=-1$, $i \in\bbZ$, is a bad configuration in the sense of section \ref{subsec: Gibbsianess}.
 \end{proposition}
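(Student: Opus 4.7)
The approach is to exploit the exact $\mathbb{Z}_2$ parity symmetry $P = \prod_{i\in\Lambda}\sigma^z_i$, which commutes with the modified Hamiltonian $H_\Lambda$ and takes eigenvalue $(-1)^{|\Lambda|}$ on the unperturbed ground state $|{\downarrow}\rangle^{\otimes|\Lambda|}$; by Assumption 1 and nondegenerate perturbation theory (for $|J/h| = e^{-2\kappa}\ll 1$) the full ground state $|\psi_{\mathrm{gs}}\rangle$ stays in the sector $P = +1$ whenever $|\Lambda|$ is even. For such a finite volume $\Lambda \supset \Gamma$ I would write
\begin{equation*}
\mu^z_\Lambda\bigl(x_0 = +1 \,\bigl|\, x_{\Gamma\setminus 0} \equiv -1\bigr) = \frac{A_\Lambda(x^+)}{A_\Lambda(x^+) + A_\Lambda(x^-)}, \qquad A_\Lambda(x_\Gamma) := \omega_\Lambda\bigl(Q_\Gamma(x_\Gamma)\bigr),
\end{equation*}
with $x^+$ the configuration on $\Gamma$ that is $+1$ at $0$ and $-1$ elsewhere, and $x^-$ the all-$(-1)$ configuration, and similarly for \eqref{eq: CondPropOne}. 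Passing to the commuting limits $\Lambda \nearrow \bbZ$ and $\beta \to \infty$, both statements reduce to sharp $L$-asymptotics of the ratios $A_\Lambda(x^+)/A_\Lambda(x^-)$ and $A_\Lambda(x^{++})/A_\Lambda(x^{-+})$.

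To obtain these asymptotics I would represent each $A_\Lambda(x_\Gamma) = \lim_{\beta\to\infty}\Tr(Q_\Gamma(x_\Gamma) e^{-\beta H_\Lambda})/\Tr(e^{-\beta H_\Lambda})$ via a Dyson expansion around $H_0 = \sum_i(\sigma^z_i+I)$, as in Section~\ref{sub: PolymerModel}, with the projection $Q_\Gamma(x_\Gamma)$ inserted at an intermediate imaginary time $t_0\in(0,\beta)$. The resulting picture is an integral over space--time configurations of up-spin world-lines: because $\sigma^x_j\sigma^x_{j+1}$ can only create, annihilate, or translate up-spins \emph{in pairs}, all world-lines form closed loops, each paying the Peierls cost $e^{-g\cdot\mathrm{length}}$ from~\eqref{eq: DecayOfExcitations} times $e^{-2\kappa}$ per horizontal transition. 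The condition $x^+$ forces exactly one world-line to cross $\Gamma\times\{t_0\}$, at site $0$; by the pair structure of the loops the matching partner at time $t_0$ must lie outside $\Gamma$, hence at spatial distance at least $L^2$ from $0$. The cluster bounds of Proposition~\ref{prop: KoteckyPreis}, applied to this enlarged polymer model, then give $A_\Lambda(x^+)\leq C e^{-c L^2}$ for some $c>0$ that can be made arbitrarily large by taking $\kappa$ large, while the empty (vacuum) configuration yields $A_\Lambda(x^-)\geq c_0 > 0$ uniformly; this proves~\eqref{eq: CondPropZero} and also nonnullity of $\mu^z$ on cylinders, since for any parity-compatible $x_\Gamma$ at least one loop configuration in $\Lambda\setminus\Gamma$ contributes positively. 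For~\eqref{eq: CondPropOne}, the configuration $x^{++}$ admits the cheap \emph{direct} pairing of world-lines between the two prescribed up-spin sites $0$ and $L$ at cost $\sim e^{-c L}$, whereas $x^{-+}$ still requires an escape pairing of cost $\sim e^{-c(L^2-L)}$, so $A_\Lambda(x^{++})/A_\Lambda(x^{-+})\to\infty$.

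\textbf{The main obstacle} is that the cluster expansion of Section~\ref{prs2} is built under Assumption 2, which fails exactly here: for $X = \sigma^z$ and $\caP = |{\downarrow}\rangle\langle{\downarrow}|$ one has $\Tr(Q(+1)\caP) = 0$, so the density~\eqref{eq: density} contains a singular prefactor $\Tr(Q(x_i)\caP)^{-1}$ whenever $x_i = +1$. Making the argument rigorous requires reorganising the expansion around the actual target configuration: each site $i\in\Gamma$ with $x_i = +1$ must be treated as a \emph{mandatory, permanent defect} that belongs to every $S_k$ of every nonvanishing diagram throughout the time interval $(0, t_0)$ or $(t_0,\beta)$, and pays the full Peierls cost $e^{-g\cdot(\mathrm{lifetime})}$. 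One then has to verify that the Koteck\'y--Preis estimates of Proposition~\ref{prop: KoteckyPreis} still hold for this modified polymer model, with the defect sites behaving as additional ``anchors'' that force the admissible loops to reach out to spatial distance $\sim L^2$ (respectively $L$ in case~\eqref{eq: CondPropOne}) while controlling the accompanying entropy. It is in this combinatorial bookkeeping that the $e^{-c L^2}$ versus $e^{-c L}$ comparison ultimately emerges and gives the non-quasi-locality of $\mu^z$.
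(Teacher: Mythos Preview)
Your overall strategy matches the paper's: represent $\Tr(Q_\Gamma(x_\Gamma)e^{-\beta H_\Lambda})$ as a loop gas (since $\sigma^x_j\sigma^x_{j+1}$ flips spins in pairs, the worldlines close up), observe that a single forced $+1$ inside the all-$(-1)$ window $\Gamma$ must be matched by a partner outside $\Gamma$ at spatial distance $\geq L^2$, and compare the resulting $e^{-cL^2}$ cost with the $e^{-cL}$ cost of a direct $0\leftrightarrow L$ pairing. The upper bounds via Koteck\'y--Preis and the lower bound by exhibiting one explicit loop are exactly what the paper does.

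Where your plan diverges is in how you propose to cure the failure of Assumption~2. You suggest rebuilding the polymer model around the target configuration, treating each $+1$-site as a ``mandatory permanent defect'' and then re-checking the Koteck\'y--Preis estimates for this modified model. The paper sidesteps this entirely by a simple observation: run the whole cluster expansion of Section~\ref{prs2} for the \emph{trivial} observable $X=\opunit$, for which Assumption~2 holds vacuously and the measure $\caL:=W$ is well-defined and nonnegative. One then notices that $Q(-1)=\caP$ and $Q(+1)=\caP^\perp$, so inserting $Q_\Gamma(x_\Gamma)$ at the $t=0$ boundary does nothing but impose the \emph{constraints} $p_\Lambda(x)\subset\caR$ and $\caR\cap m=\emptyset$ on the root sets of the already-constructed loops. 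No new expansion and no new Koteck\'y--Preis check is needed; Propositions~\ref{prop: KoteckyPreis} and~\ref{prop: ClusterExpansion} apply verbatim to the $X=\opunit$ model, and equation~\eqref{eq: MargProp} drops out. Your route would ultimately arrive at the same picture (your ``mandatory defects'' are precisely the loops rooted in $p$), but you would be redoing work that can be borrowed for free. The paper's trick also makes nonnullity immediate: the loop measure $\caL$ is positive, so any cylinder has strictly positive mass.
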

 
By proceeding just as before in the general setting we can again express the classical restriction $\mu^{}_\La$ in terms of a polymer model. As a pecularity of the Ising model the polymers can be seen to be non-intersecting loops (in particular without ends) which furthermore have non-negative polymer weights, \ie, the density $\rho$ as defined in \eqref{eq: density} and \eqref{eq: density0} is non-negative. We want to give a rough sketch of the proof for the above result in terms of the diagrammatic language of such a loop gas. The exact details will be supplied only in the next section.

 \begin{figure}[h] 
 \subfigure[]{
  \def\svgwidth{.3\textwidth}
\begingroup%
  \makeatletter%
  \providecommand\color[2][]{%
    \errmessage{(Inkscape) Color is used for the text in Inkscape, but the package 'color.sty' is not loaded}%
    \renewcommand\color[2][]{}%
  }%
  \providecommand\transparent[1]{%
    \errmessage{(Inkscape) Transparency is used (non-zero) for the text in Inkscape, but the package 'transparent.sty' is not loaded}%
    \renewcommand\transparent[1]{}%
  }%
  \providecommand\rotatebox[2]{#2}%
  \ifx\svgwidth\undefined%
    \setlength{\unitlength}{176.1bp}%
    \ifx\svgscale\undefined%
      \relax%
    \else%
      \setlength{\unitlength}{\unitlength * \real{\svgscale}}%
    \fi%
  \else%
    \setlength{\unitlength}{\svgwidth}%
  \fi%
  \global\let\svgwidth\undefined%
  \global\let\svgscale\undefined%
  \makeatother%
  \begin{picture}(1,1.72671777)%
    \put(0,0){\includegraphics[width=\unitlength]{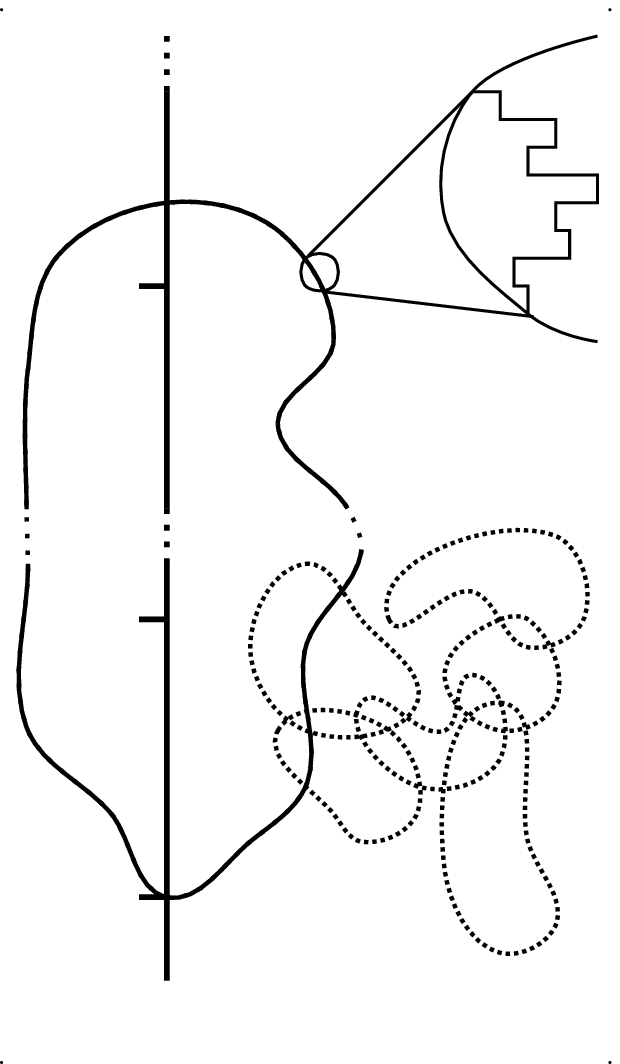}}%
    \put(0.22418265,0.04562893){\color[rgb]{0,0,0}\makebox(0,0)[lb]{\smash{$\beta=0$}}}%
    \put(0.13657013,0.27277259){\color[rgb]{0,0,0}\makebox(0,0)[lb]{\smash{$0$}}}%
    \put(0.13657013,0.72705993){\color[rgb]{0,0,0}\makebox(0,0)[lb]{\smash{$L$}}}%
    \put(0.13657013,1.27220474){\color[rgb]{0,0,0}\makebox(0,0)[lb]{\smash{$L^2$}}}%
    \put(0.13657013,1.54477714){\color[rgb]{0,0,0}\makebox(0,0)[lb]{\smash{$\bbZ$}}}%
    \put(0.55740882,1.09069845){\color[rgb]{0,0,0}\makebox(0,0)[lb]{\smash{$\frp_+$}}}%
    \put(0.72714367,0.90877487){\color[rgb]{0,0,0}\makebox(0,0)[lb]{\smash{cluster $\frc$}}}%
  \end{picture}%
\endgroup%
}
 \subfigure[]{
  \def\svgwidth{.3\textwidth}
\begingroup%
  \makeatletter%
  \providecommand\color[2][]{%
    \errmessage{(Inkscape) Color is used for the text in Inkscape, but the package 'color.sty' is not loaded}%
    \renewcommand\color[2][]{}%
  }%
  \providecommand\transparent[1]{%
    \errmessage{(Inkscape) Transparency is used (non-zero) for the text in Inkscape, but the package 'transparent.sty' is not loaded}%
    \renewcommand\transparent[1]{}%
  }%
  \providecommand\rotatebox[2]{#2}%
  \ifx\svgwidth\undefined%
    \setlength{\unitlength}{176.1bp}%
    \ifx\svgscale\undefined%
      \relax%
    \else%
      \setlength{\unitlength}{\unitlength * \real{\svgscale}}%
    \fi%
  \else%
    \setlength{\unitlength}{\svgwidth}%
  \fi%
  \global\let\svgwidth\undefined%
  \global\let\svgscale\undefined%
  \makeatother%
  \begin{picture}(1,1.72671777)%
    \put(0,0){\includegraphics[width=\unitlength]{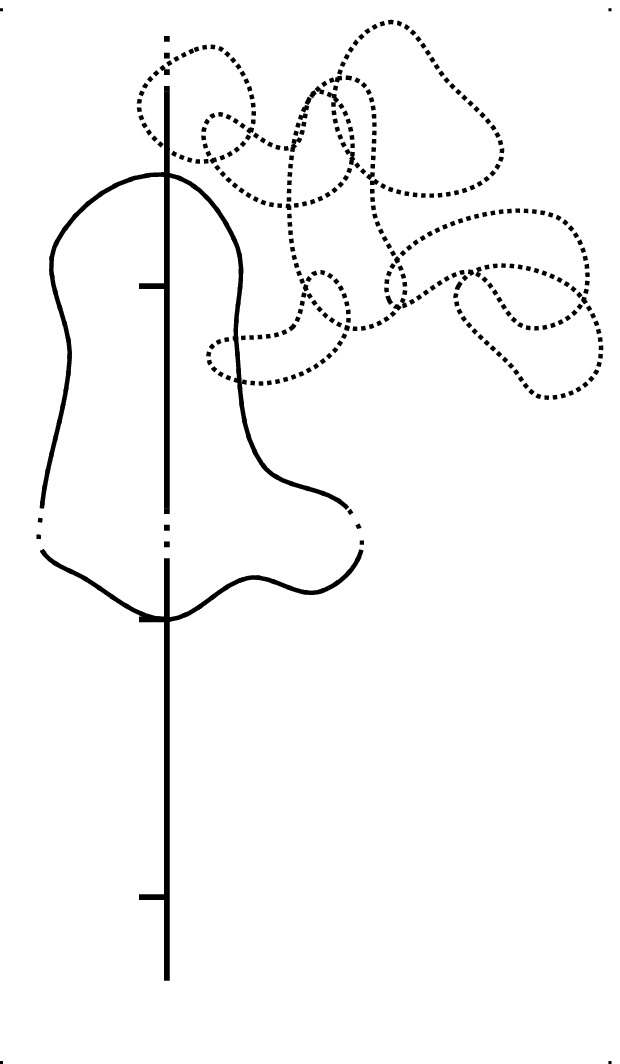}}%
    \put(0.22418265,0.04562893){\color[rgb]{0,0,0}\makebox(0,0)[lb]{\smash{$\beta=0$}}}%
    \put(0.13657013,0.27277259){\color[rgb]{0,0,0}\makebox(0,0)[lb]{\smash{$0$}}}%
    \put(0.13657013,0.72705993){\color[rgb]{0,0,0}\makebox(0,0)[lb]{\smash{$L$}}}%
    \put(0.13657013,1.27220474){\color[rgb]{0,0,0}\makebox(0,0)[lb]{\smash{$L^2$}}}%
    \put(0.13657013,1.54477714){\color[rgb]{0,0,0}\makebox(0,0)[lb]{\smash{$\bbZ$}}}%
    \put(0.4318569,1.0223467){\color[rgb]{0,0,0}\makebox(0,0)[lb]{\smash{$\frp_+$}}}%
    \put(0.72714367,0.90877487){\color[rgb]{0,0,0}\makebox(0,0)[lb]{\smash{cluster $\frc$}}}%
  \end{picture}%
\endgroup%
  }
 \subfigure[]{
  \def\svgwidth{.3\textwidth}
\begingroup%
  \makeatletter%
  \providecommand\color[2][]{%
    \errmessage{(Inkscape) Color is used for the text in Inkscape, but the package 'color.sty' is not loaded}%
    \renewcommand\color[2][]{}%
  }%
  \providecommand\transparent[1]{%
    \errmessage{(Inkscape) Transparency is used (non-zero) for the text in Inkscape, but the package 'transparent.sty' is not loaded}%
    \renewcommand\transparent[1]{}%
  }%
  \providecommand\rotatebox[2]{#2}%
  \ifx\svgwidth\undefined%
    \setlength{\unitlength}{176.1bp}%
    \ifx\svgscale\undefined%
      \relax%
    \else%
      \setlength{\unitlength}{\unitlength * \real{\svgscale}}%
    \fi%
  \else%
    \setlength{\unitlength}{\svgwidth}%
  \fi%
  \global\let\svgwidth\undefined%
  \global\let\svgscale\undefined%
  \makeatother%
  \begin{picture}(1,1.72671777)%
    \put(0,0){\includegraphics[width=\unitlength]{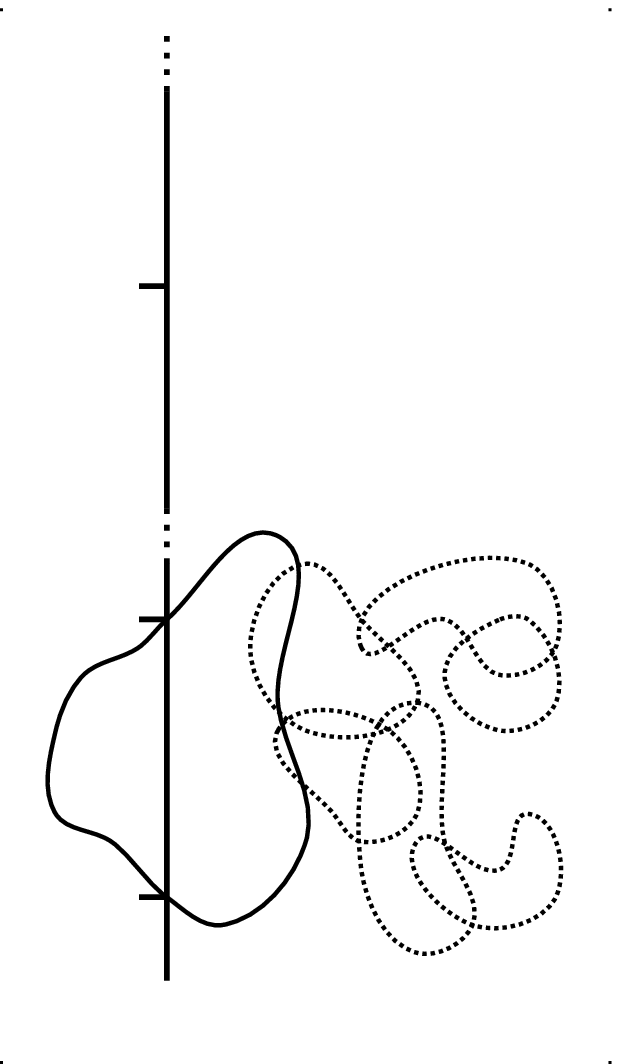}}%
    \put(0.22418265,0.04562893){\color[rgb]{0,0,0}\makebox(0,0)[lb]{\smash{$\beta=0$}}}%
    \put(0.13657013,0.27277259){\color[rgb]{0,0,0}\makebox(0,0)[lb]{\smash{$0$}}}%
    \put(0.13657013,0.72705993){\color[rgb]{0,0,0}\makebox(0,0)[lb]{\smash{$L$}}}%
    \put(0.13657013,1.27220474){\color[rgb]{0,0,0}\makebox(0,0)[lb]{\smash{$L^2$}}}%
    \put(0.13657013,1.54477714){\color[rgb]{0,0,0}\makebox(0,0)[lb]{\smash{$\bbZ$}}}%
    \put(0.38642817,0.8860605){\color[rgb]{0,0,0}\makebox(0,0)[lb]{\smash{$\frp_+$}}}%
    \put(0.72714367,0.90877487){\color[rgb]{0,0,0}\makebox(0,0)[lb]{\smash{cluster $\frc$}}}%
  \end{picture}%
\endgroup%
  }
  \caption{Graphical representation of contributions to \\
  (a) $\mu^z \bigl( \{x_0=+1\}\,\bigl\lvert\,\{x_{\Gamma\setminus\{0\}}\equiv-1\} \bigr)$,\\ 
  (b) $\mu^z \bigl( \{x_0=-1,\,x_L=+1\}\,\bigl\lvert\,\{x_{\Gamma\setminus\{0,L\}}\equiv-1\} \bigr)$,\\
  (c) $\mu^z \bigl( \{x_0=x_L=+1\}\,\bigl\lvert\,\{x_{\Gamma\setminus\{0,L\}}\equiv-1\} \bigr)$.}
  
  \label{fig: LoopGas}
\end{figure}

As we will see, the conditional probability~\eqref{eq: CondPropZero} can be read as integral over loops $\frp_+$ in $\bbZ\times \bbR$ which are `dressed' with clusters of other loops and which are pinned to the origin $(0,0)$ but forbidden to touch $(\Gamma\setminus \{0\})\times 0$, corresponding to the condition $x_i=-1$, for $0<|i|<L$, see (a) in Fig~\ref{fig: LoopGas}. We will also see that the loop $\frp_+$ must reach spatially from the origin into the complement of $\Gamma$, and by the loop weight's exponential decay in its length we get that~\eqref{eq: CondPropZero} decays exponentially in $L^2$. \\
The limiting behaviour~\eqref{eq: CondPropOne} is equivalent to a vanishing ratio
\begin{equation}\label{eq: Ratio}
 \frac{\mu^z \bigl( \{x_0=-1,\,x_L=+1\}\,\bigl\lvert\,\{x_{\Gamma\setminus\{0,L\}}\equiv-1\} \bigr)}{\mu^z \bigl( \{x_0=x_L=+1\}\,\bigl\lvert\,\{x_{\Gamma\setminus\{0,L\}}\equiv-1\} \bigr)}\; \xrightarrow[]{\; L\rightarrow \infty\;} 0
\end{equation}
The numerator is illustrated in (b) of Fig~\ref{fig: LoopGas} and goes to zero exponentially in $L^2-L$, which is the distance between site $L$ and the complement of $\Gamma$. In the diagrammatic representation of the denominator in Fig~\ref{fig: LoopGas}~(c) the contributing loops must cross the $\beta=0$ plane at $0$ and $L$, and therefore it cannot decay faster than exponentially in $L$.

\subsubsection{Proof of Proposition~\ref{prop: CondProp}} $\hphantom{m}$\\

1) The classical restriction as loop-gas:\\
Assume throughout that the values of $\beta$ and $\kappa$ are large enough in the sense of Theorem~\ref{thm: GibbsLow2}. Then we apply the results of the previous part of this low temperature section for the trivial single-site observable $X=\opunit$ (giving a configuration space of only one element $\Om=\{x\equiv \funit\}$, see also the last paragraph in the proof of Theorem \ref{thm: GibbsLow2}), in particular Proposition~\ref{prop: ClusterExpansion}, to express the partition function in the form
\begin{equation}\label{eq: PartitionFunc}
\begin{split}
  &Z_\Lambda^{}=\mathrm{Tr}_\Lambda\left(\e^{ -\beta H_\Lambda} \right)\\
  &=1+\sum_{N=1}^\infty\frac{1}{N!}\int_{\frP} \mathrm{d}\caL(\frp_1)\dots \int_{\frP} \mathrm{d}\caL(\frp_N)\;\chi[\frp_i \nleftrightarrow \frp_j,\;1\leq i<j\leq N]\\
 & =\exp \biggl[\int_{\frC} \mathrm{d}\caM(\frc)\biggr]
\end{split}
\end{equation}
where we introduced $\caL:=W$ to emphasize the following peculiarity: since $\sigma^x$ flips the spin in $z$-basis, i.~e., we have $\sigma^x=\caP\sigma^x(\opunit-\caP) +(\opunit-\caP)\sigma^x\caP$, the measure $\caL$ is non-vanishing only on the set of polymers that are diagrammatically represented by (closed) Loops on $\Lambda\times[0,\beta]$ if we identify the points $(i,0)\sim(i,\beta),i\in\Lambda$. Recall that $\caL$ and $\caM$ implicitly depend on the inverse temperature $\be$ and classical configuration which here is always taken to be $x\equiv \funit$. Furthermore $\caL$ is translation invariant on these cylinders and, if restricted to `contractable'  loops, mutually consistent, i.~e.~the (positive) density of a loop with certain `shape' does not depend on the ambient cylinder.\\
The spectral projection of $X=\sigma^z$ to $x=-1$ trivially equals the local ground state projection of the uncoupled Hamiltonian, i.~e., $Q(-1)=\caP$, and if we repeat the procedure of Section~\ref{sub: PolymerModel} we may use that for any diagram $\frX\in \frS_n$, $n\geq 0$, 
\begin{equation}
 \caR(\frX)\neq p_\Lambda(x):=\{i\in\Lambda\,|\,x_i=+1\}\;\text{ implies }\; \rho(\frX)=0
\end{equation} With this implicit one-to-one correspondence between configurations and root-sets we can express the (marginal) probabilities as restrictions of the partition function~\eqref{eq: PartitionFunc} by imposing conditions on the root-sets of the involved loops and clusters. Let $p,m\subset\Lambda$, $p\cap m=\emptyset$, denote sets of `plus-sites' and `minus-sites' respectively. Since $\caL$ is non-negative there are no convergence concerns when writing
\begin{equation}\label{eq: MargProp}
 \begin{split}
  &\mu_\Lambda^{}\bigl( \{x_p\equiv+1,\, x_m\equiv-1\} \bigr)\\
  &=\frac{1}{Z_\Lambda}\Bigl\{\chi[p=\emptyset] +\sum_{N=1}^\infty\frac{1}{N!}\int_{\frP} \mathrm{d}\caL(\frp_1)\dots \int_{\frP} \mathrm{d}\caL(\frp_N)\;\chi[\cup_i \caR(\frp_i)\cap m=\emptyset] \\
  &\hphantom{=\frac{1}{Z_\Lambda^\beta} \sum_{N=1}^\infty\frac{1}{N!}\int_{\frP} \mathrm{d}\caL_{\Lambda,\beta}(\frp_1)\dots \int_{\frP}}\chi[\frp_i \nleftrightarrow \frp_j,\;1\leq i<j\leq N]\Bigl\}\\
  &=\sum_{N=1}^{|p|}\frac{1}{N!} \int_{\frP} \mathrm{d}\caL(\frp_{1})\dots \int_{\frP} \mathrm{d}\caL(\frp_{N})\\
  &\hphantom{=\sum_{N=\lceil |p|/2\rceil}^{|p|}\frac{1}{N!} \int_{\frP}}\chi[\cup_i \caR(\frp_{i})\cap m=\emptyset]\;\chi[\frp_i \nleftrightarrow \frp_j,\;1\leq i<j\leq N]\\
  &\hphantom{=\sum_{N=\lceil |p|/2\rceil}^{|p|}\frac{1}{N!} \int_{\frP}}\chi[p\subset \cup_i \caR(\frp_i)]\;\chi[p\cap \caR(\frp_i)\neq\emptyset,\,i=1,\dots,N]\\
  &\hphantom{=\sum_{N=\lceil |p|/2\rceil}^{|p|}\frac{1}{N!} \int_{\frP}}\exp\biggl[- \int_{\frC} \mathrm{d}\caM(\frc)\,\chi[\exists i\; \text{with }\frp_i\leftrightarrow\frc \text{ or } \caR(\frc)\cap m\neq\emptyset]\biggr]
 \end{split}
\end{equation}
where in case $p=\emptyset$ we read the above sum as the plain exponential. The last expression is the announced integration over the particular loops which are rooted at `plus-sites' $p$ and dressed with clusters and requires further explanation: unless no configuration is fixed to be $+1$, i.~e., $p=\emptyset$, the polymer expansion of the second equality cannot be processed in our type of cluster expansion, since then the weight on the \textit{empty diagram} (without loops) vanishes.\\
If we fix for a moment the at most $|p|$ different loops which have roots in $p$, we are left with an integration over loops which must not be adjacent to these separated loops. There may also not be an additional loop besides the ones we fixed (a new empty diagram) and we can thus rewrite this remaining polymer expansion as before as exponential of an integral over clusters which must not be adjacent to the separated loops or rooted in $m$. These contributions are all canceled by the normalization factor $Z_\Lambda$ as in~\eqref{eq: PartitionFunc}, leaving behind clusters that are indeed adjacent to the separated loops or rooted in $m$. This also explains the minus sign in the above exponential.\\

2) The limits $\beta\rightarrow \infty$ and $\Lambda \nearrow \bbZ$:\\ By the exponential decay of the loops and clusters, the (infinite volume) ground state's classical restriction can be visualized by a gas of dressed loops on $\bbZ\times\bbR$: the loops must cross the $\beta=0$-line and the dressing clusters must be adjacent to either these secluded loops and/ or to the set $m\times \{0\}$.\\

3) Bounds on conditional probabilities:\\
We now express in formulae what was said about the conditional probabilities illustrated in Fig~\ref{fig: LoopGas}. Using~\eqref{eq: MargProp} we write (a) as
\begin{equation}
 \begin{split}
  &\mu^z \bigl( \{x_0=+1\}\,\bigl\lvert \{x_{\Gamma\setminus\{0\}}\equiv-1\} \bigr)\\
  &=\frac{\mu^z\bigl(\{x_0=+1,\,x_{\Gamma\setminus 0}\equiv -1\}\bigr)}{\mu^z\bigl(\{x_{\Gamma\setminus 0}\equiv -1\}\bigr)}\\
  &=\lim_{\Lambda\nearrow\bbZ,\,\beta\rightarrow\infty}\int_{\frP} \mathrm{d}\caL(\frp_+) \,\chi[ \caR(\frp_+)\cap\Gamma=\{0\}]\\
  &\hphantom{=\lim_{\Lambda\nearrow\bbZ,\,\beta\rightarrow\infty}\int_\frP}\exp\biggl[- \int_{\frC} \mathrm{d}\caM(\frc)\,\chi\left[\frp_+\leftrightarrow\frc\right] \chi\left[\caR(\frc)\cap (\Gamma\setminus\{0\})=\emptyset\right]\biggr]\\
  &=\caO\Bigl(\e^{-C \,L^2}\Bigr)
 \end{split}
\end{equation}
for a positive constant $C$. The denominator canceled the clusters which have roots in $m=\Gamma\setminus\{0\}$ so that the above conditional probability is indeed an integral over one loop $\frp_+$ crossing the $\beta=0$ line at the origin and outside of $\Gamma$ and dressed with clusters which must be adjacent to $\frp_+$ but not rooted in $m$. The exponential upper bound follows from Proposition~\ref{prop: ClusterExpansion} and direct application of Proposition~\ref{prop: KoteckyPreis} and analogously for (b), 
\begin{equation}
 \mu^z \bigl( \{x_0=-1,\,x_L=+1\}\,\bigl\lvert\,\{x_{\Gamma\setminus\{0,L\}}\equiv-1\} \bigr)=\caO\Bigl(\e^{-C \,(L^2-L)}\Bigr).
\end{equation}
To prove Proposition~\ref{prop: CondProp} requires a lower bound on the conditional probability in the denominator of~\eqref{eq: Ratio}. By proceeding as above and neglecting most (positive) contributions in the integral over dressed loops, we find, with suitable constants $c,c',c''>0$,
\begin{equation}
 \begin{split}
  &\mu^z \bigl( \{x_0=x_L=+1\}\,\bigl\lvert\,\{x_{\Gamma\setminus\{0,L\}}\equiv-1\} \bigr)\\
  &\geq \lim_{\Lambda\nearrow\bbZ,\,\beta\rightarrow\infty} \int_{\frP} \mathrm{d}\caL(\frp_+) \chi[\caR(\frp_+)=\{0,L\}]\,\chi[L_\mathrm{v}(\frp_+)=2L]\,\chi[\mathrm{span}_\mathrm{h}(\frp_+)\leq 2]\\
  &\hphantom{\geq \lim_{\Lambda\nearrow\bbZ,\,\beta\rightarrow\infty} \int \mathrm{d}} \exp\biggl[- \underbrace{\int_{\frC} \mathrm{d}|\caM|(\frc)\,\chi\left[\frp_+\leftrightarrow\frc\right]}_{\substack{\leq\; c'' (L_\mathrm{v}(\frp_+)+L_\mathrm{h}(\frp_+))}}\biggr]\\
  &\geq c'\cdot\e^{-c\,L}
 \end{split}
\end{equation}
To obtain the last inequality, we have used the definition of the loop weights and that the above restricted integral with respect to $\caL$ really means to integrate  a positive function on the $2L$-dimensional unit-cube, which is bounded below by $\exp\bigl(-4c'' L -2(\kappa+1)L\bigr)$.

\subsection{Large deviation principle for the magnetization: \\ Proof of Theorem \ref{thm: LdpForIsing}}
 
 The ground state of the Ising model can be determined explicitly using a Jordan--Wigner-transformation. This computation is summarized in Appendix~\ref{apx: GSising}, where, following~\cite{CEM,CD}, it is also shown that the expectation,
 \begin{equation}
  \omega\left( \exp\bigl( t{\textstyle \sum_{j=1}^n} \sigma_j^z \bigr) \right)=\det\left( M_n^t \right)
 \end{equation}
can be written as determinant of an $n\times n$ Toeplitz-matrix,
\begin{equation}
 \bigl(M_n^t\bigr)_{jj'}=\widehat{\phi_t}(j-j'):=\int_{-\pi}^\pi\,\frac{\mathrm{d}k}{2\pi}\,\phi_t(k)\e^{-ik(j-j')}
\end{equation}
for the \emph{symbol}
\begin{equation}
 \phi_t(k)= \cosh(t)-\sinh(t) \frac{h/J+\e^{-i k}}{\sqrt{(h/J+\e^{-ik})(h/J+\e^{ik})}}
\end{equation}
Note that, for any $t \in \bbR$, $\mathrm{Re}( \phi_t)>0$, and that it is analytic (take the positive branch of the square-root) as a function on a sufficiently thin ring domain containing the complex unit-circle $z=\e^{ik}$. Then also $\log \phi_t$ is analytic on such a domain and the Fourier-coefficients of $\log \phi_t$, which are nothing but the Laurent-coefficients, decay exponentially fast. In this case a strong type of Szeg\H{o}'s Theorem, see e.g.~\cite{IKD}, yields
\begin{equation}
 F(t)=\widehat{ \log \phi_t }(0)
\end{equation}
which is differentiable in $t\in\bbR$ by Leibniz's rule.

%----------------------Appendix--------------------------------

\begin{appendix}

\section{Solving the ground state of the Ising chain in a transverse field}\label{apx: GSising}
For real parameters $h,J$ satisfying $\lvert g \rvert> 1$ for $g:=h/J$,  the Ising model in a transverse field defined in Section 2, see \eqref{quantum ising}, has a unique (infinite-volume) ground state $\omega$, see \cite{AM,M}. This ground state can be obtained as the unique weak${}^*$ limit, as $\Lambda\nearrow \bbZ$, of the ground states $\omega_\Lambda$ in finite volumes on $\caA_\Lambda$ (extended by zero to a state on $\caA$). \\
%By the uniqueness of the ground state we may, in this limit, restrict to finite volumes $\Lambda_N$ which are connected chains of an even number of sites $N$. 
Therefore we may take $\Lambda_N=\{-N/2,\dots,N/2-1\}$ with an even number of sites and the Hamiltonian may be modified at the boundaries of these chains without effect on the limit point. We  also use the abbreviations $\Lambda_N'=\{-N/2,\dots,N/2-2\}$ and $\omega_N=\omega_{\Lambda_N}$. Besides the important constraint $\lvert g \rvert> 1$ we furthermore set $J>0$ however merely for convenience.

\subsection{Jordan--Wigner-transformation} For now it is convenient to work with periodic boundary conditions by identifying $\sigma_{N/2}^x\equiv\sigma_{-N/2}^x $ in the Hamiltonian %on $\caH_N:=\caH_{\Lambda_N}$,
\begin{equation}
\begin{split}
 H_N=&-\sum_{j\in\Lambda_N} h \sigma_j^z + J \sigma_j^x \sigma_{j+1}^x
\end{split}
\end{equation}
As usual, we introduce the operators
\begin{equation}
 \begin{split}
  &a_j^*=\sigma_j^+ \exp\left(-i\pi \sideset{}{_{l<j}}\sum \sigma_l^+\sigma_l^- \right),\quad j\in\Lambda_N
 \end{split}
\end{equation}
which are defined in terms of the spin raising/lowering operators $\sigma_j^\pm=\frac{1}{2}(\sigma_j^x\pm \sigma_j^y)$ and which together with their adjoints satisfy the canonical anticommutation relations ({\small CAR}). With these fermion creation/annihilation operators the Hamiltonian can be rewritten as
\begin{equation}
\begin{split}
 H_N= -h\sum_{j\in\Lambda_N} \bigl[ a_j^*, a_j^{} \bigr] &- J\sum_{j\in\Lambda_N'} \bigl( a_j^* - a_j^{} \bigr) \bigl( a_{j+1}^* + a_{j+1}^{} \bigr)\\
 & + J   \bigl( a_{N/2-1}^* - a_{N/2-1}^{} \bigr) \bigl( a_{-N/2}^* + a_{-N/2}^{} \bigr) P
\end{split} 
\end{equation}
where $P=\exp( i\pi \sum_{l}a_l^*  a_l^{} )$ is the `parity operator'. It commutes with each term in the Hamiltonian. Therefore $H_N$ is block-diagonal with respect to the direct sum $\caH_N=\caH_N^\mathrm{even} \oplus\caH_N^\mathrm{odd}$ and $P$ acts as (minus) the identity on $\caH_N^\mathrm{even}$ ($\caH_N^\mathrm{odd}$), the space with even (odd) numbers of Jordan--Wigner-fermions. 

\subsection{Fourier-transformation}
We proceed with diagonalizing $H_N$ separately on each $P$-eigenspace by employing a different Fourier-transformation in each of the two cases $p=\mathrm{even/odd}$,
\begin{equation}
\begin{split}
 &\hat{a}_k^*=N^{-\frac{1}{2}} \sum_{j\in\Lambda_N} \e^{ikj}a_j^*\\
 &a_j^*=N^{-\frac{1}{2}} \sum_{k\in K_N^p} \e^{-ikj} \hat{a}_k^*
\end{split} 
\end{equation}
for $j\in\bbZ$, and
\begin{equation}
\begin{split}
 K^\mathrm{even}_N &=\bigl\{\, {\textstyle\frac{2 \pi}{N}}\bigr(n+{\textstyle\frac{1}{2}}\bigr) \,\bigl| \,n={\textstyle -\frac{N}{2},\dots,\frac{N}{2}-1}\bigr\}\\
 K^\mathrm{odd}_N &=\bigl\{\, {\textstyle\frac{2 \pi}{N}}n \, \bigl|\, n={\textstyle -\frac{N}{2},\dots,\frac{N}{2}-1}\bigr\}
\end{split}
\end{equation}	
For $p=\mathrm{even}$ this choice imposes anti-periodic boundary conditions, in particular $a^*_{N/2}=-a^*_{-N/2}$, whereas for $p=\mathrm{odd}$ periodic boundary conditions are more convenient ensuring that $a_{N/2}^*=a_{-N/2}^*$ and $\hat{a}_{-\pi}^{*}=\hat{a}_{\pi}^{*}$.
One then obtains
\begin{equation}
 H_N\bigl\rvert_{\caH_N^\mathrm{even}}= H_N^\mathrm{even}\bigl\rvert_{\caH_N^\mathrm{even}}\quad\text{and}\quad H_N\bigl\rvert_{\caH_N^\mathrm{odd}}= H_N^\mathrm{odd}\bigl\rvert_{\caH_N^\mathrm{odd}}
\end{equation}
for the following operators on the whole Hilbert-space $\caH_N$
\begin{equation}
 \begin{split}
  &H_N^\mathrm{even}=-2J \sum_{0<k\in K_N^\mathrm{even}} \bigl(g+\cos(k)\bigr) \bigl( \hat{a}_{-k}^*\hat{a}_{-k}^{}-\hat{a}_{k}^{}\hat{a}_{k}^{*} \bigl)\\[-3mm]
  &\hphantom{H_N\bigl\rvert_{\caH_N^\mathrm{even}}=-2J \sum_{0<k\in K_N^\mathrm{even}}}+i\sin(k) \bigl( \hat{a}_{k}^{}\hat{a}_{-k}^{}-\hat{a}_{-k}^{*}\hat{a}_{k}^{*} \bigr)\\
  &H_N^\mathrm{odd}=-2J(g+1) \bigl( \hat{a}_0^*\hat{a}_0^{}-{\textstyle \frac{1}{2}}\bigr)-2J(g-1)\bigl(\hat{a}_{-\pi}^*\hat{a}_{-\pi}^{}-{\textstyle \frac{1}{2}}\bigr)\\
  &\hphantom{H_N\bigl\rvert_{\caH_N^\mathrm{odd}}=}-2J \sum_{0<k\in K_N^\mathrm{odd}} \bigl(g+\cos(k)\bigr) \bigl( \hat{a}_{-k}^*\hat{a}_{-k}^{}-\hat{a}_{k}^{}\hat{a}_{k}^{*} \bigl)\\[-3mm]
  &\hphantom{H_N\bigl\rvert_{\caH_N^\mathrm{odd}}=-2J \sum_{0<k\in K_N^\mathrm{even}}}+i\sin(k) \bigl( \hat{a}_{k}^{}\hat{a}_{-k}^{}-\hat{a}_{-k}^{*}\hat{a}_{k}^{*} \bigr)
 \end{split}
\end{equation}

\subsection{Bogoliubov-transformation}
We finally diagonalize $H_N$ by means of a Bogoliubov-transformation on pairs of Jordan--Wigner-fermions with opposite momenta. We define a new set of operators $\{\alpha_k\}$, $k\in K_N^p$, respectively for even or odd parity $p$, and their adjoints through 
\begin{equation}
\begin{split}
 &\alpha_0^{}=\hat{a}_0^*,\quad \alpha_{-\pi}^{}=\hat{a}_{-\pi}^*,\quad \textnormal{for } g>0\\
 &\alpha_0=\hat{a}_{-\pi},\quad \alpha_{-\pi}=\hat{a}_{0},\quad \textnormal{for } g<0
\end{split}
\end{equation}
and 
\begin{equation}
 \begin{pmatrix}
  \hat{a}_k^{}\\\hat{a}_{-k}^*
 \end{pmatrix}
 =U \begin{pmatrix}
  \alpha_k^{}\\\alpha_{-k}^*
 \end{pmatrix},\quad U=\begin{pmatrix}
  \cos(\theta_k/2)&i \sin(\theta_k/2)\\ -i \sin(\theta_{-k}/2)&\cos(\theta_{-k}/2)
 \end{pmatrix}
\end{equation}
for $0<\lvert k\rvert\in K_N^p$, in obvious matrix notation. If the so-called Bogoliubov-angles $\theta_k$ are chosen according to 
\begin{equation}
 \e^{i\theta_k}=-\frac{g+\e^{-ik}}{\lvert g+\e^{-ik} \rvert}
\end{equation}
then $U$ is a unitary matrix and $\{\alpha_k^{},\alpha_k^*\}$, $k\in K_N^p$, satisfy the ({\small CAR}). With respect to these two algebras of creation and annihilation operators $H_N$ simply describes free fermions on each $P$-eigenspace, i.e.
\begin{equation}
 \begin{split}
  &H_N^p=2J \sum_{k\in K_N^p} \bigl\lvert g+\e^{-ik} \bigr\rvert \bigl( \alpha_k^* \alpha_k^{}-{\textstyle \frac{1}{2}} \bigr)
 \end{split}
\end{equation}

\subsection{The ground state as vacuum of free fermions}
The difference of vacuum energies with respect to $H_N^p$ for different values of the parity $p$ vanishes in the limit $N\rightarrow \infty$, since it is (half) the Riemann-sum of the derivative of the periodic function $2 J \lvert g+\e^{-ik}\rvert$. As typical for the new vacuum state after a Bogoliubov-transformation of the above type, it is a superposition of states with possibly several pairs of opposite-momentum fermions with respect to $\{\hat{a}_k^{},\hat{a}_k^{*}\}$. In particular, the (non-degenerate) ground state of $H_N^p$ is an element of $\caH_N^\mathrm{even}$ for both values of the parity $p$. Therefore, the ground state of $H_N^\mathrm{even}$, i.e., the vacuum for the explicitly given $\{\alpha_k^{},\alpha_k^*\}$, $k\in K_N^\mathrm{even}$, is equal to the ground state of $H_N$.

\subsection{Generating function for the transverse magnetization} 
Here we compute the moment generating function $G^n$ for the magnetic moment in the $z$-direction of a chain of $n\geq 1$ sites viewed as (classical) discrete random variable with distribution 
\begin{equation}
 \bbP(x)=\omega \bigl( \funit_x\bigl( {\textstyle \sum_j} \sigma_j^z \bigr) \bigr),\quad x\in\mathrm{sp}\bigl( {\textstyle \sum_j} \sigma_j^z \bigr)
\end{equation}
induced by the (quantum) infinite-volume ground state $\omega$. Note that $\omega$ is translation-invariant as the unique ground state for a translation-invariant interaction and therefore we restrict to evaluating
\begin{equation}
 G_N^{}(\alpha)=\lim_{N\rightarrow \infty} G_N^n (\alpha),\quad G_N^n (\alpha)=\omega_N \left( \exp\bigl({\textstyle \alpha \sum_{j\in \Gamma_n}\sigma_j^z}\bigr) \right), \;\alpha\in\bbC
\end{equation}
for chains of the form $\Gamma_n=\{1,\dots,n\}$. Defining, 
\begin{equation}
 A_j^{}= \left( a_j^*+a_j^{} \right),\quad B_j^\alpha =\left( \e^{-\alpha}a_j^{*}+ \e^{\alpha} a_j^{} \right), \;j\in\Lambda_N
\end{equation}
which are linear combinations also of the transformed creation/ annihilation operators $\alpha_k^* $/$ \alpha_k^{}$, Wick's theorem can be used to evaluate
\begin{equation}
 G_N^n (\alpha)=\omega_N\left( {\textstyle \prod_{j\in\Gamma_n} A_j^{}B_j^\alpha} \right)=\omega_N \bigl( A_1^{}\dots A_n^{}B_n^\alpha \dots B_1^\alpha \bigr)
\end{equation}
in terms of pair-correlations. Using again the explicit form of $\omega_N$ as Fermi-vacuum, one obtains,
\begin{equation}
 \begin{split}
  &\omega_N \left( A_j^{} A_{j'}^{}\right)=\delta_{jj'},\qquad j,j'\in\Lambda_N,\;N\geq 1,\\
  &\omega \left( A_j^{} B_{j'}^\alpha \right)=\lim_{N\rightarrow\infty}\omega_N\left( A_j^{} B_{j'}^\alpha  \right),\qquad j,j'\in\bbZ,\\
  &=\int_{-\pi}^\pi\,\frac{\mathrm{d}k}{2\pi}\,\left( \cosh(\alpha)-\sinh(\alpha) \e^{-i\theta_{ k}} \right)\e^{-ik(j-j')}
 \end{split}
\end{equation}
and therefore the generating function can be written as the determinant of an $n\times n$ Toeplitz-matrix $M_n^\alpha$,
\begin{equation}
 G^n(\alpha)=\det\left( M_n^\alpha \right),\quad \bigl(M_n^\alpha\bigr)_{jj'}=\int_{-\pi}^\pi\,\frac{\mathrm{d}k}{2\pi}\,\phi_\alpha(k)\e^{-ik(j-j')}
\end{equation}
whose entries are the Fourier-coefficients of the function
\begin{equation}
 \phi_\alpha(k)= \cosh(\alpha)-\sinh(\alpha) \e^{-i\theta_{ k}}
\end{equation}
which is called a \textit{symbol} when viewed as function on the complex unit circle $z=\e^{ik}\in\bbC$, $k\in (-\pi,\pi]$. 

\end{appendix}

\bibliographystyle{plain}

\begin{thebibliography}{10}

\bibitem{AKLT}
I.~Affleck, T.~Kennedy, E.~Lieb, H.~Tasaki, Rigorous results on valence-bond ground states in antiferromagnets, \emph{Phys.~Rev.~Lett.} \textbf{57}, 799--802 (1987).	

\bibitem{AM}
H.~Araki, T.~Matsui, Ground states of the XY-model, \emph{Comm.~Math.~Phys.} \textbf{101}, 213--245 (1985).

\bibitem{BKU}
C.~Borgs, R.~Koteck\`y, and D.~Ueltschi, Low temperature phase
diagrams for quantum perturbations of classical spin systems,
\emph{Commun.\ Math.\ Phys.} \textbf{181}, 409--446 (1996).

\bibitem{BR}
O.~Bratteli and D.~W.~ Robinson, \textsl{Operator Algebras and Quantum Statistical Mechanics 2}
(Springer, Berlin, 1996).

\bibitem{CEM}
P.~Calabrese, F.~Essler, and M.~Fagotti, Quantum quench in the transverse field Ising chain: I. Time evolution of order parameter correlators, \emph{Journal of Statistical Mechanics} \textbf{2012}, 07 (2012).

\bibitem{CD}
R.~Cherng, E.~Demler, Quantum noise analysis of spin systems realized with cold atoms, \emph{New J.~Phys.} \textbf{9}, 7 (2007).

\bibitem{DFF}
N.~Datta, R.~Fern\`andez, and J.~Fr\"ohlich, Low-temperature phase
diagrams of quantum lattice systems: I. stability for quantum
perturbations of classical systems with finitely-many ground
states, \emph{J.\ Stat.\ Phys.} \textbf{84}, 455--534 (1996).

\bibitem{DZ}
A.~Dembo and O.~Zeitouni, {\sl Large Deviation Techniques and Applications} (Jones and Barlett
Publishers, Boston 1993).

\bibitem{DMN}
W.~De Roeck, C.~Maes, and K.~Neto\v{c}n\'y, Quantum macrostates, equivalence of ensembles and an
H-theorem, \emph{J.\ Math.\ Phys.} \textbf{47}, 073303 (2006).

\bibitem{EFS}
A.~C.~D.~van Enter, R.~Fern\'andez, and A.~D.~Sokal, Regularity properties and pathologies of
position-space renormalization-group transformations: Scope and limitations of Gibbsian theory,
\emph{J.\ Stat.\ Phys.} \textbf{72}, 879--1167 (1993).

\bibitem{E}
R.~Ellis, \emph{Large Deviations and Statistical Mechanics} (Springer--Verlag, Heidelberg, 1985).


\bibitem{FNW}
M.~Fannes, B.~Nachtergaele, and R.~Werner, Finitely Correlated States on Quantum Spin Chains, \emph{Comm.~Math.~Phys.} \textbf{144}, 443--490 (1992).

\bibitem{FNW2}
M.~Fannes, B.~Nachtergaele, and R.~Werner, Abundance of Translation Invariant Pure States on Quantum Spin Chains, \emph{Lett.~Math.~Phys.} \textbf{25}, 249--258 (1992).

\bibitem{F}
R.~Fern{\'a}ndez. Gibbsianness and non-Gibbsianness in lattice random fields. In \emph{ Mathematical Statistical Physics} (Elsevier, Amsterdam, 2006)


\bibitem{georgii}
H.-O.~Georgii, \textsl{Gibbs measures and phase transitions} (de Gruyter, Berlin, 1988).


\bibitem{IKD}
{P.~Deift}, {A.~Its}, and {I.~Krasovsky}, Toeplitz matrices and Toeplitz determinants under the impetus of the Ising model. Some history and some recent results, \emph{Comm.~Pure Appl.~Math.~}\textbf{66}, 1360--1438 (2013).

\bibitem{KP}
R.~Koteck\'y and D.~Preis, Cluster expansion for abstract polymer models, \emph{Comm.~Math.~Phys.} \textbf{103}, 491--498 (1986)

\bibitem{Koz}
O.~K.~Kozlov, Gibbs description of a system of random variables, \emph{Probl.~Inform.~Transmission} \textbf{10}, 258--265 (1974).


\bibitem{lanford}  
O.~E.~Lanford III, Entropy and equilibrium states in classical statistical mechanics, in
\emph{Statistical Mechanics and Mathematical Problems} (Battelle Seattle Rencontres 1971), \emph{Lecture Notes
in Physics} \#20 (Springer-Verlag, Berlin, 1973), pp. 1--113.

\bibitem{LR}
M.~Lenci, L.~Rey-Bellet, Large deviations in quantum lattice systems: one-phase region,
\emph{J.\ Stat.\ Phys.} {\bf 119}, 715--746 (2005).


\bibitem{M}
T.~Matsui, Uniqueness of the translationally invariant ground state in quantum spin systems,
\emph{Commun.\ Math.\ Phys.} \textbf{126}, 453--467 (1990).

\bibitem{NR}
K.~Neto\v{c}n\'y and F.~Redig, Large deviations for quantum spin systems,
\emph{J.\ Stat.\ Phys.} \textbf{117}, 521--547 (2004).

\bibitem{O}
Y.~Ogata, Large Deviations in Quantum Spin Chains, \emph{Commun.\ Math.\ Phys.} \textbf{296}, 35--68 (2010).

\bibitem{Oluc}
Y.~Ogata and  L.~Rey-Bellet  Ruelle-Lanford functions and large deviations for asymptotically decoupled quantum systems,  \emph{Reviews in Mathematical Physics},  \textbf{23}, no 02,  211--232 (2011)


\bibitem{P}
C-E.~Pfister, Thermodynamical Aspects of Classical Lattice Systems, in \emph{In and Out of Equilibrium}, \emph{Progress in Pobability} (Birkh\"auser), 3939--472 (2002)

\bibitem{PVC}
M.~Popp, F.~Verstraete, and J.~I.~Cirac, Entanglement versus Correlations in Spin Systems, \emph{Phys. Rev. Lett.}  \textbf{92}, 027901 (2004).

\bibitem{Ru}
D.~Ruelle, \textsl{Thermodynamic Formalism} (Cambridge University Press, Cambridge, 2004)

\bibitem{sachdevbook}
S.~Sachdev, \textsl{Quantum phase transitions} (Cambridge University Press, Cambridge, 2001).

\bibitem{Si}
B.~Simon, \textsl{The Statistical Mechanics of Lattice Gases} (Princeton University Press,
Princeton, 1993).

\bibitem{Su}
W.~G.~Sullivan, Potentials for almost Markovian random fields, \emph{Commun.\ Math.\ Phys.} \textbf{33}, 61--74 (1973).

\bibitem{Ue}
D.~Ueltschi, Cluster expansions and correlation functions,
\emph{Moscow Math.~J.} \textbf{4}, 511--522 (2004).

\bibitem{VMC}
F.~Verstraete, M.~Mart\'in-Delgado, and J.~Cirac, Diverging Entanglement Length in Gapped Quantum Spin Systems, \emph{Phys.~Rev.~Lett.} \textbf{92}, 087201 (2004)


\bibitem{W}
T.~Wahl, D.~P\'erez-Garc\'ia, and J.~Cirac, Matrix product states with long-range localizable entanglement, \emph{Phys. Rev. A} \textbf{86}, 062314 (2012).

\bibitem{Ya1}
D.A.~Yarotsky, Uniqueness of the Ground State in Weak Perturbations of Non-Interacting Gapped Quantum Lattice Systems, \emph{J.\ Stat.\ Phys.} \textbf{118}, 119--144 (2005).

\bibitem{Ya}
D.~A.~Yarotski, Ground States in Relatively Bounded Quantum Perturbations of Classical Lattice
Systems, \emph{Commun.~Math.~Phys.} \textbf{261}, 799–-819 (2006).

\end{thebibliography}

\end{document}